\def\RR{{\mathbb R}}
\def\CC{{\mathbb C}}
\def\ZZ{{\mathbb Z}}
\def\A{{\mathcal A}}
\def\D{{\mathcal D}}
\def\E{{\mathcal E}}
\def\H{{\mathcal H}}
\def\M{{\mathcal M}}
\def\N{{\mathcal N}}
\def\O{{\mathcal O}}
\def\P{{\mathcal P}}
\def\R{{\mathcal R}}
\def\T{{\mathcal T}}
\def\U{{\mathcal U}}
\def\d{\delta}
\def\e{\varepsilon}
\def\f{\varphi}
\def\i{\iota}
\def\L{{\mathrm L}}
\def\R{{\mathrm R}}
\def\s{\sigma}
\def\t{\tau}
\def\Ad{{\hbox{\rm Ad}}\,}
\def\id{{\rm id}}
\def\1{{\mathbbm 1}}
\def\axb{{\mathbf{P}}}
\def\u1{U(1)}
\def\conf{{\rm Conf(\E)}}
\def\diff{{\rm Diff}}
\def\diffr{\diff(\RR)}
\def\diffs1{\diff(S^1)}
\def\supp{{\rm supp}}
\def\interior{{\rm int}}
\def\slim{{{\mathrm{s}\textrm{-}\lim}}}
\def\wlim{{{\mathrm{w}\textrm{-}\lim}}}
\def\tin{{\mathrm{in}}}
\def\tout{{\mathrm{out}}}
\def\timesi{{\overset{\tin}\times}}
\def\timeso{{\overset{\tout}\times}}
\def\psl2r{{\rm PSL}(2,\RR)}
\def\2dmob{{\overline{\psl2r}\times\overline{\psl2r}}}
\def\<{\langle}
\def\>{\rangle}
\def\poincare{{\P^\uparrow_+}}
\def\regions{\O}
\newtheorem{theorem}{Theorem}[section]
\newtheorem{definition}[theorem]{Definition}
\newtheorem{corollary}[theorem]{Corollary}
\newtheorem{proposition}[theorem]{Proposition}
\newtheorem{lemma}[theorem]{Lemma}
\theoremstyle{remark}
\newtheorem{remark}[theorem]{Remark}
\title{Noninteraction of waves in two-dimensional conformal field theory}
\date{}
\author{
{\bf Yoh Tanimoto\footnote{Supported in part by the ERC Advanced Grant 227458
OACFT ``Operator Algebras and Conformal Field Theory''.}}\\
Dipartimento di Matematica, Universit\`a di Roma ``Tor
Vergata''\\ Via della Ricerca Scientifica, 1 - I--00133 Roma, Italy.\\
E-mail: {\tt tanimoto@mat.uniroma2.it}}
\begin{document}
\maketitle
\begin{abstract}
In higher dimensional quantum field theory, irreducible representations
of the Poincar\'e group are associated with particles.
Their counterpart in two-dimensional massless models 
are ``waves'' introduced by Buchholz. In this paper we show that waves do not interact
in two-dimensional M\"obius covariant theories and in- and out-asymptotic fields coincide.
We identify the set of the collision states of waves with the subspace generated
by the chiral components of the M\"obius covariant net from the vacuum.
It is also shown that Bisognano-Wichmann property, dilation covariance and
asymptotic completeness (with respect to waves) imply M\"obius symmetry.

Under natural assumptions, we observe that the maps which give asymptotic fields in Poincar\'e
covariant theory are conditional expectations between appropriate algebras.
We show that a two-dimensional massless theory is
asymptotically complete and noninteracting if and only if it is a chiral M\"obius covariant theory.
\end{abstract}

\section{Introduction}\label{introduction}

Quantum field theory (QFT) is designed to describe interactions between elementary particles
and can successfully account for a wide range of physical phenomena. 
However, its mathematical foundations are still unsettled and  constitute an active
area of research in mathematical physics. While the most important open problem
in QFT is the existence of interacting models in physical four-dimensional
spacetime, theories in lower dimensional spacetime have also attracted considerable interest. 
For instance, two-dimensional conformal field theories (CFT), whose infinite dimensional
symmetry group is a powerful tool for structural analysis, have been thoroughly investigated.
Superselection  structure of such theories has been clarified and deep classification results
have been obtained \cite{KL02, KL03}. On the other hand, particle aspects of two-dimensional CFT have 
only recently attracted attention \cite{DT10-1, DT10-2}, although a general framework for
scattering of massless excitations (``waves'') in two-dimensional theories dates back to \cite{Buchholz}.  
In particular, it was shown in \cite{DT10-1} that ``waves'' do not interact in \emph{chiral} conformal field theories. 
In this paper we generalize this result to any CFT\footnote{The terms ``conformal'' and ``M\"obius covariant''
will be clarified in Section \ref{preliminaries}.},
relying on ideas from \cite{BF}. Moreover, we
show that a conformal field theory is asymptotically complete (i.e. collision states of waves span 
the entire Hilbert space) if and and only if it is chiral. This latter result is obtained by a 
careful analysis of the chiral components \cite{Rehren} of the theory.

In view of a large body of highly non-trivial results concerning two-dimensional CFT, both
on the sides of physics and mathematics \cite{DMS, FG, Fuchs},
our assertion that these theories have trivial scattering theory may seem surprising.
In this connection we emphasise that the presence of interaction
in scattering theory \emph{cannot} be inferred solely
 from the fact that a particular expression for the Hamiltonian or the
correlation functions differ from those familiar from free field theory.   
In fact, the Ising model, the most fundamental ``interacting'' model, can be
considered as a subtheory of ``free'' fermionic field \cite{MS}, hence the conventional
term of ``interaction'' seems ill-defined.
Instead, a conclusive argument should rely on a scattering theory which implements, 
in the theoretical setting, the quantum mechanical procedure of state preparation 
at asymptotic times. Such an intrinsic scattering theory was developed by Buchholz \cite{Buchholz} in
the framework of algebraic quantum field theory \cite{Haag}, which we also adopt in this work. 
The elementary excitations of this collision theory, called ``waves'' in \cite{Buchholz},  are eigenstates 
of the relativistic mass operator. However, they are not necessarily particles in the conventional
sense of Wigner i.e. states in an irreducible representation space of the Poincar\'e group. This 
less restrictive concept of the particle is natural in two-dimensional massless theories, where irreducible
representations of the Poincar\'e group typically have infinite multiplicity (cf. Section \ref{representation-theory}).

As the classical results on the absence of interaction in dilation-covariant theories in physical spacetime
require the existence of irreducible representations of the Poincar\'e group with finite multiplicity 
\cite{Dell-Antonio, BF}, they cannot be applied to two-dimensional CFT directly. We combine essential ideas from \cite{BF} with the representation theory of the M\"obius group  to overcome this difficulty and obtain triviality of the scattering matrix. 
Under asymptotic completeness with respect to waves, one can even prove that dilation covariance implies
M\"obius covariance, hence also noninteraction.
Exploiting again the  
M\"obius symmetry, we construct chiral observables following Rehren \cite{Rehren} which live on the positive or negative
lightrays and show that they generate all the collision states of waves from the
vacuum. In examples of non-chiral two-dimensional CFT, the profile of chiral observables is
well-known \cite{KL03}, hence this result gives an explicit description of the subspace of collision states.
As a by-product we obtain an alternative proof of the noninteraction of waves and the insight that asymptotic
completeness with respect to waves of a conformal field theory (in the sense of waves) is equivalent to chirality.
This suggest that chiral M\"obius covariant theories are generic examples of
noninteracting massless theories in two-dimensional spacetime (cf. \cite{Tanimoto11-3}).
Indeed, it turns out that Poincar\'e covariant theory satisfying
Bisognano-Wichmann property and Haag duality is noninteracting and asymptotically complete
(with respect to waves) if and only if it is chiral.
This is a strengthened converse of the sufficient condition
for noninteraction by Buchholz \cite{Buchholz}. We prove this
based on an observation that the maps which give asymptotic fields
are conditional expectations between appropriate
algebras. Another consequence of this observation is that the maps which give
in- and out-asymptotic fields
are the conditional expectations onto the chiral components in M\"obius covariant theory,
hence they coincide.

This paper is organized as follows. In Section \ref{preliminaries} we
recall the basic notions of Poincar\'e covariant nets with various higher symmetries
in two-dimensional spacetime and the scattering theory of massless ``waves'' studied in \cite{Buchholz}.
In Section \ref{noninteraction} we demonstrate that these waves have
always trivial scattering matrix in M\"obius covariant nets.
This proof is based on the representation theory of M\"obius group, and one derives
M\"obius symmetry from Bisognano-Wichmann property, dilation covariance and asymptotic completeness
with respect to waves.
In Section \ref{subspace} the chiral components
are defined following \cite{Rehren}.
They turn out to generate all the waves from
the vacuum. In Section \ref{characterization}, under Bisognano-Wichmann property
and Haag duality, we show that asymptotic fields are given by conditional expectations and
that a Poincar\'e covariant net is asymptotically
complete (with respect to waves) and noninteracting if and only if it is isomorphic to a chiral
M\"obius net. In Section \ref{af-in-mob} we show that in- and out-asymptotic
fields coincide in M\"obius covariant nets.
In Section \ref{concluding} we discuss open problems and perspectives.
In \ref{expectation} we collect fundamental facts about conditional expectations
and in \ref{chiral-components} remarks about various definitions of
chiral component are given.

\section{Preliminaries}\label{preliminaries}
\subsection{Conformal nets}\label{conformalnets}

In algebraic QFT, we consider nets of observables.
Let us briefly recall the definitions.
The two-dimensional Minkowski space $\RR^2$ is represented as a product of
two lightlines $\RR^2 = \L_+\times\L_-$, where $\L_\pm := \{(a_0,a_1)\in \RR^2: a_0\pm a_1=0\}$
are the positive and the negative lightlines.
The fundamental group of spacetime symmetry is the (proper orthochronous) Poincar\'e group
$\poincare$, which is generated by translations and Lorentz boosts.

Let $\regions$ be the family of open bounded regions in $\RR^2$.
A {\bf (local) Poincar\'e covariant net} $\A$ assigns to $O \in \regions$
a von Neumann algebra $\A(O)$ on a common separable Hilbert space $\H$
satisfying the following conditions:
\begin{enumerate}
\item[(1)] {\bf Isotony.} If $O_1 \subset O_2$, then $\A(O_1) \subset \A(O_2)$.\label{isotony}
\item[(2)] {\bf Locality.} If $O_1$ and $O_2$ are spacelike separated, then $[\A(O_1),\A(O_2)] = 0$.
\item[(3)] {\bf Additivity.} If $O = \bigcup_i O_i$, then $\A(O) = \bigvee_i \A(O_i)$.
\item[(4)] {\bf Poincar\'e covariance.} There exists a strongly continuous unitary
representation $U$ of the Poincar\'e group $\poincare$
such that
\begin{equation*}
U(g)\A(O)U(g)^* = \A(gO), \mbox{ for } g \in \poincare.
\end{equation*}
\item[(5)]{\bf Positivity of energy.} The joint spectrum of the translation subgroup
in $\poincare$ in the representation $U$ is contained
in the forward lightcone $V_+ := \{(p_0,p_1)\in \RR^2: p_0+p_1\ge 0, p_0-p_1\ge 0\}$.
\item[(6)] {\bf Existence of the vacuum.} There is a unique (up to a phase) unit vector $\Omega$ in
$\H$ which is invariant under the action of $U$,
and cyclic for $\bigvee_{O \in \regions} \A(O)$.
\end{enumerate}

 From these assumptions, the following properties automatically follow \cite{Baumgaertel}.
\begin{itemize}
\item[(7)] {\bf Reeh-Schlieder property.} The vector $\Omega$ is cyclic and separating for each $\A(O)$.
\item[(8)] {\bf Irreducibility.} The von Neumann algebra $\bigvee_{O \in \regions} \A(O)$ is equal to $B(\H)$.
\end{itemize}

We identify the circle $S^1$ as the one-point compactification of the real line $\RR$
by the Cayley transform:
\[
t = i\frac{z-1}{z+1} \Longleftrightarrow z = -\frac{t-i}{t+i}, \phantom{...}t \in \RR,
\phantom{..}z \in S^1 \subset \CC.
\]
The M\"obius group $\psl2r$ acts on $\RR \cup \{\infty\}$ by the linear fractional
transformations, hence it acts on $\RR$ locally (see \cite{BGL} for local actions).
Then the group $\2dmob$ acts
locally on $\RR^2$, where $\overline{\psl2r}$ is the universal covering group
of $\psl2r$. Note that the group $\2dmob$ contains
translations, Lorentz boosts and dilations, so in particular it includes the Poincar\'e
group $\poincare$.  We refer to \cite{KL03} for details.

Let $\A$ be a Poincar\'e covariant net. 
If the representation $U$ of $\poincare$ (associated to the net $\A$) extends to $\2dmob$
such that for any open region $O$ there is a small neighborhood $\U$ of the unit element
in $\2dmob$ such that $gO \subset \RR^2$ and it holds that
\begin{equation*}
U(g)\A(O)U(g)^* = \A(gO), \mbox{ for } g \in \U,
\end{equation*}
then we say that $\A$ is a {\bf M\"obius covariant net}.

If the net $\A$ is M\"obius covariant,
then it extends to a net on the Einstein cylinder $\E := \RR\times S^1$ \cite{KL03}.
On $\E$ one can define a natural causal structure which extends the one on $\RR^2$
(see \cite{LM}). We take a coordinate system on $\E$ used in \cite{Rehren}:
Let $\RR\times\RR$ be the universal cover of $S^1\times S^1$. The cylinder $\E$
is obtained from $\RR\times\RR$ by identifying points $(a,b)$ and $(a+2\pi,b-2\pi) \in \RR\times\RR$.
Any double cone of the form $(a,a+2\pi)\times(b,b+2\pi) \subset \RR\times \RR$
represents a copy of the Minkowski space. The causal complement of a double cone
$(a,c)\times(b,d)$, where $0<c-a<2\pi,0<d-b<2\pi$, is $(c,a+2\pi)\times(d-2\pi,b)$
or equivalently $(c-2\pi,a)\times(d,b+2\pi)$. If $O$ is a double cone, we denote
the causal complement by $O'$.
For an interval $I = (a,b)$, we denote by $I^+$ the interval $(b,a+2\pi) \subset\RR$
and by $I^-$ the interval $(b-2\pi,a)\subset \RR$.

Furthermore, it is well-known that, from M\"obius covariance, the following properties automatically follow
(see \cite{BGL}):
\begin{enumerate}
\item[(9M)] {\bf Haag duality in $\E$.} For a double cone $O$ in $\E$ it holds that $\A(O)' = \A(O')$,
where $O'$ is defined in $\E$ as above.
\item[(10M)] {\bf Bisognano-Wichmann property in $\E$} For a double cone $O$ in $\E$, the modular automorphism
group $\Delta_O^{it}$ of $\A(O)$ with respect to the vacuum state $\omega := \<\Omega, \cdot\Omega\>$
equals to $U(\Lambda^O_t)$ where $\Lambda^O_t$ is a one-parameter group in $\2dmob$
which preserves $O$ (see \cite{BGL} for concrete expressions).
\end{enumerate}

We denote by $\diffr$ the group of diffeomorphisms of $S^1$ which preserve the point $-1$.
If we identify $S^1 \setminus \{-1\}$ with $\RR$, this can be considered as a group of
diffeomorphisms of $\RR$
\footnote{Note that not all diffeomorphisms of $\RR$ extend to diffeomorphisms of $S^1$,
hence the group $\diffr$ is {\em not} the group of all the diffeomorphisms of $\RR$.
However, this notation is common in the context of conformal field theory.}.
The Minkowski space $\RR^2$ can be identified with a double cone in $\E$.
The group $\diffr\times\diffr$ acts on $\RR^2$ and this action extends to $\E$ by
periodicity. The group generated by this action of $\diffr\times\diffr$ and
the action of $\2dmob$ (which acts on $\E$
through quotient by the relation $(r_{2\pi},r_{-2\pi}) = (\id,\id)$ \cite{KL03}) is denoted by $\conf$.
Explicitly, $\conf$ is isomorphic to the quotient group of $\overline{\diffs1}\times\overline{\diffs1}$
by the normal subgroup generated by $(r_{2\pi},r_{-2\pi})$, where $\overline{\diffs1}$ is the
universal covering group of $\diffs1$ (note that $r_{2\pi}$ is an element
in the center of $\overline{\diffs1}$).

A M\"obius covariant net is said to be {\bf conformal} if the representation $U$ further extends to
a projective representation of $\conf$ such that
\[
U(g)\A(O)U(g)^* = \A(gO),\mbox{ for } g \in \diffr\times\diffr,
\]
and if it holds that
\[
U(g)xU(g)^* = x
\]
for $x \in \A(O)$, where $O$ is a double cone and $g \in \diffr\times\diffr$ has
a support disjoint from $O \subset \RR^2$.

\begin{proposition}\label{chiral-net-contains-virasoro}
If the net $\A$ is conformal, the intersection $\bigcap_{J} \A(I\times J)$ contains representatives of
diffeomorphisms of the form $g_+\times \id$ where $\supp(g_+) \subset I$,
\end{proposition}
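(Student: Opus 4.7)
The plan is to combine the local commutativity of the diffeomorphism action with Haag duality in the Einstein cylinder. I would fix an arbitrary interval $J \subset \L_-$ and set $O := I \times J$, viewed as a double cone in $\E$. Its causal complement in $\E$ is $O' = I^+ \times J^-$; the first factor $I^+$ is disjoint from $\overline{I}$, and $\supp(g_+) \subset I \subset \overline{I}$, so $I^+ \cap \supp(g_+) = \emptyset$. Equivalently, on $\E$ the support of $g_+ \times \id$ is contained in $\overline{I} \times S^1$, hence disjoint from $O'$.

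The key step is to show that $U(g_+ \times \id)$ commutes with $\A(O')$. The conformal-net axiom supplies this whenever the target double cone lies inside $\RR^2$ with first-lightline projection disjoint from $\supp(g_+)$, but $O'$ typically lies on the opposite side of the cylinder. To reduce to the axiom I would cover $O'$ by small double cones $\hat O = K \times L$ and, for each, choose $h = h_+ \times h_- \in \2dmob$ with $h\hat O \subset \RR^2$. Using M\"obius covariance of $\A$, it suffices to show that $U(h)U(g_+ \times \id)U(h)^*$ commutes with $\A(h\hat O) = \A(h_+ K \times h_- L)$. Modulo a projective phase, this conjugated unitary represents $(h_+ g_+ h_+^{-1}) \times \id$, an element of $\conf$ whose first-factor support $h_+(\supp(g_+))$ is disjoint from $h_+(K)$, since $K \subset I^+$ is disjoint from $\supp(g_+)$. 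The locality axiom (extended from $\diffr \times \diffr$ to $\conf$ by M\"obius covariance) applies, so the commutation holds. Additivity on $\E$ then extends the commutation from the covering double cones to all of $\A(O')$.

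Haag duality in $\E$ now gives $\A(O)' = \A(O')$, hence $U(g_+ \times \id) \in \A(O')' = \A(O) = \A(I \times J)$. Since $J$ was arbitrary, $U(g_+ \times \id) \in \bigcap_J \A(I \times J)$.

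The main obstacle is the technical transfer of the local commutation relation from $\RR^2$, where it is axiomatised for $\diffr \times \diffr$, to double cones in $\E$, where Haag duality is formulated. The awkward point is that conjugation by $h \in \2dmob$ sends a $\diffr \times \diffr$-element into $\conf$ rather than back into $\diffr \times \diffr$ itself, so one has to verify that covariance and locality propagate from the generators to all of $\conf$; this is a direct consequence of $U$ being a projective representation of $\conf$, after which additivity and Haag duality in $\E$ finish the argument.
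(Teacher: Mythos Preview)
Your argument is correct and follows the same line as the paper's proof: show that $U(g_+\times\id)$ commutes with $\A(O')$ for $O=I\times J$, then invoke Haag duality in $\E$ to conclude $U(g_+\times\id)\in\A(O)$. The paper simply asserts that $U(g)$ commutes with $\A(I^+\times J)$ for arbitrary $J$ and applies Haag duality, without spelling out how the locality axiom (stated only for $g\in\diffr\times\diffr$ and double cones in $\RR^2$) transfers to double cones in $\E$; your covering argument and use of M\"obius covariance make this routine extension explicit.
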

\begin{proof}
If $g$ is a diffeomorphism of the form $g_+\times \id$ and $\supp(g_+) \subset I$,
then $U(g)$ commutes with $\A(I^+\times J)$ for arbitrary $J$, thus
Proposition follows by the Haag duality in $\E$.
\end{proof}

In the rest of the present paper, conformal covariance will not be
assumed except \ref{chiral-components}, although a major part of examples of
M\"obius covariant nets is in fact conformal.

If it holds that $\A(O_1)\vee\A(O_2) = \A(O)$ where $O_1$ and $O_2$ are
the two components of the causal complement (in $O$)
of an interior point of a double cone $O$, we say that $\A$ is
{\bf strongly additive}. This implies the {\bf chiral additivity} \cite{Rehren},
namely that $\A(I\times J_1)\vee\A(I\times J_2) = \A(I\times J)$ if
$J_1$ and $J_2$ are obtained from $J$ by removing an interior point.

We recall that $\A(O)$ is interpreted as the algebra of observables measured
in a spacetime region $O$. A typical example of a conformal net is constructed
in the following way: If we have a local conformal field $\Psi$, namely
an operator-valued distribution, then we define $\A(O)$ as the von Neumann
algebra generated by $e^{i\Psi(f)}$ where the support of $f$ is included in $O$.
But our framework does not assume the existence of any field.
Indeed, there are examples of nets for which no local field description is at hand \cite{Lechner}. 
Thus the algebraic approach is
more general than the conventional one. It also
provides a natural scattering theory, as we recall in the next section.

\subsection{Scattering theory of waves}\label{scatteringtheory}
Here we summarize the scattering theory of massless two-dimensional models
established in \cite{Buchholz}. This theory is stated
in terms of Poincar\'e covariant nets of observables.

Let us denote by $T(a) := U(\tau(a))$ the representative of spacetime translation $\tau(a)$
by $a \in \RR^2$.
Furthermore, we denote the lightlike translations by $T_\pm(t) := T(t,\pm t)$.
Let $\axb$ denote the subgroup of $\psl2r$ generated by (one-dimensional)
translations and dilations.
Note that $\axb$ is simply connected, hence it can be considered as a subgroup
of $\overline{\psl2r}$.
As will be seen in the following, the representation $U$ of $\2dmob$
restricted to $\axb\times\axb$
has typically a big multiplicity in M\"obius covariant theories.
The subspaces $\H_+ = \{\xi \in \H: T_+(t)\xi = \xi \mbox{ for all } t \}$ and
$\H_- = \{\xi \in \H: T_-(t)\xi = \xi \mbox{ for all  } t \}$
are referred to as the spaces of waves with positive and negative momentum, respectively.
Let $P_\pm$ be the orthogonal projections onto $\H_\pm$, respectively.

Let $x$ be a local operator, i.e., an element of $\A(O)$ for some $O$. 
We set $x(a) := T(a)xT(a)^*$ for $a\in\RR^2$ and
consider a family of operators parametrized by $\T$:
\[
x_\pm(h_\T) := \int dt\, h_\T(t) x(t,\pm t),
\]
where $h_\T(t) = |\T|^{-\e}h(|\T|^{-\e}(t-\T))$, $0<\e<1$ is a constant,
$\T \in \RR$ and $h$ is a nonnegative symmetric smooth function on $\RR$ such that
$\int dt\, h(t) = 1$.

\begin{lemma}[\cite{Buchholz} Lemma 1,2,3]\label{asymptotic-field}
Let $x$ be a local operator.
Then the limit $\Phi^\tin_\pm(x) := \underset{\T\to-\infty}\slim \, x_\pm(h_\T)$ exists
and it holds that
\begin{itemize}
\item $\Phi^\tin_\pm(x)\Omega = P_\pm x\Omega$.
\item $\Phi^\tin_\pm(x)\H_\pm \subset \H_\pm$.
\item $\Ad U(g) \Phi^\tin_\pm(x) = \Phi^\tin_\pm(\Ad U(g)(x))$, where $g \in \poincare$.
\end{itemize}
Furthermore, the limit $\Phi^\tin_\pm(x)$ depends only on $P_\pm x\Omega$, respectively.
We call these limit operators the ``incoming asymptotic fields''.
It holds that $[\Phi^\tin_+(x),\Phi^\tin_-(y)]=0$ for arbitrary local $x$ and $y$.

Similarly one defines the ``outgoing asymptotic fields'' by
$\Phi^\tout_\pm(x) := \underset{\T\to\infty}\slim \,x_\pm(h_\T)$
\end{lemma}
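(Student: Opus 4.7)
The plan combines two structural ingredients: the uniform bound $\|x_\pm(h_\T)\| \le \|x\|$ (immediate from $h_\T \ge 0$, $\int h_\T = 1$, and the isometry of spacetime translations) and the Reeh--Schlieder property, which allows us to test strong convergence on the dense subspace $\A(O_y)\Omega$ for an $O_y$ of our choice.

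I would first compute the action on $\Omega$. Since $T_+(t)\Omega = \Omega$ and the self-adjoint generator $G$ of $T_+$ is nonnegative (by positivity of energy), the spectral theorem gives
\[
x_+(h_\T)\Omega \;=\; \int\! dt\, h_\T(t)\, T_+(t) x\Omega \;=\; \tilde h_\T(G)\, x\Omega,
\]
where $\tilde h_\T(q) = e^{iq\T} \tilde h(|\T|^\e q)$. Since $\tilde h$ is bounded with $\tilde h(0)=1$, Riemann--Lebesgue plus dominated convergence yield $x_+(h_\T)\Omega \to P_+ x\Omega$ as $\T\to-\infty$ (the spectral projection of $G$ at $\{0\}$ equals $P_+$ by the definition of $\H_+$). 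To extend convergence to a dense subspace, I would work in lightcone coordinates, write $\supp(x) = I_x\times J_x$, and note that $(t,t)$ shifts only the second coordinate by $2t$, so $x(t,t)$ is localized in $I_x\times(J_x+2t)$. Using Reeh--Schlieder, I choose $y\in \A(I_y\times J_y)$ with $I_y$ lying to the left of $I_x$. For $t$ sufficiently negative $J_x+2t$ lies to the left of $J_y$, and combined with $I_x > I_y$ this forces spacelike separation, so locality gives $[x(t,t),y]=0$. Since $\supp(h_\T)$ concentrates in a window of width $O(|\T|^\e) = o(|\T|)$ around $\T$, this implies $[x_+(h_\T),y]=0$ for $\T$ sufficiently negative, whence
\[
x_+(h_\T)\, y\Omega \;=\; y\,x_+(h_\T)\Omega \;\longrightarrow\; y\,P_+ x\Omega.
\]
Uniform boundedness extends the convergence to all of $\H$, producing $\Phi^\tin_+(x)$ and simultaneously giving $\Phi^\tin_+(x)\,y\Omega = y\,P_+ x\Omega$. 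The latter formula shows that $\Phi^\tin_+(x)$ depends only on $P_+ x\Omega$, and taking $y = \1$ yields $\Phi^\tin_+(x)\Omega = P_+ x\Omega$.

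For the invariance $\Phi^\tin_+(x)\H_+\subset\H_+$ and Poincar\'e covariance, I would track how the weight function transforms under conjugation. Conjugation by a translation in $\poincare$ replaces $h_\T$ with a shift whose $L^1$-distance to an appropriate $h_{\T'}$ tends to $0$ by continuity of $L^1$-scaling; conjugation by a Lorentz boost $\Lambda$ of lightlike factor $\lambda>0$ rescales $h_\T$ to $h_\T^\lambda(t) = \lambda^{-1}h_\T(t/\lambda)$, which a direct calculation identifies with $\hat h_{\lambda\T}$ for the rescaled admissible density $\hat h(u) = \lambda^{\e-1}h(\lambda^{\e-1}u)$. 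The desired identification then follows from the $h$-independence of $\Phi^\tin_\pm(x)$, which itself is obtained by applying the dense-set argument above to the difference of two weights (both vacuum actions converge to $P_+ x\Omega$, so the difference vanishes). For the commutator $[\Phi^\tin_+(x),\Phi^\tin_-(y)]=0$, note that $(t',-t')$ shifts only the first lightcone coordinate (by $2t'$); as $t,t'\to-\infty$ the supports $I_x\times(J_x+2t)$ of $x(t,t)$ and $(I_y+2t')\times J_y$ of $y(t',-t')$ are forced into spacelike separation, so $[x(t,t),y(t',-t')]=0$ for $t,t'$ sufficiently negative, $[x_+(h_\T),y_-(h_{\T'})]=0$ for $\T,\T'$ sufficiently negative, and uniform boundedness produces the commutation of the strong limits. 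The outgoing fields are treated by the same argument with the sign of $t$ reversed.

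The main obstacle I foresee is the Lorentz covariance: the boost rescaling alters the shape of $h_\T$ in a way that cannot be absorbed by a direct $L^1$-comparison to any $h_{\T'}$, so one must first establish $h$-independence of $\Phi^\tin_\pm(x)$ and then recast the boost-transformed weight as a change of admissible density.
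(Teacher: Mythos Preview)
The paper does not give its own proof of this lemma; it is quoted directly from \cite{Buchholz} (Lemmas~1--3 there), so there is no in-paper argument to compare against. Your sketch is a faithful reconstruction of Buchholz's original argument---uniform bound plus Reeh--Schlieder to reduce strong convergence to a dense set of the form $\A(O_y)\Omega$, spectral calculus for the lightlike generator to compute the vacuum action, and locality under lightlike shifts for the commutation with $y$ and for $[\Phi^\tin_+,\Phi^\tin_-]=0$---and the point you flag about Lorentz covariance (establish $h$-independence first, then absorb the boost-rescaled density as a change of admissible $h$) is exactly how it is handled there.
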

\begin{remark}\label{commutation-asymptotic-field}
As the asymptotic field is defined as the limit of local operators, it still has
certain local properties. For example, let $O_+$ and $O_0$ be two regions such that
$O_+$ stays in the future of $O_0$ and $x \in O_+, y \in O_0$. Then it holds
that $[\Phi^\tin_\pm(x),y] = 0$, since for a negative $\T$ with sufficiently large absolute value,
$x_\pm(h_\T)$ lies in the spacelike complement of $y$.
Similar observations apply also to $\Phi^\tout_\pm$.
\end{remark}

Lemma \ref{asymptotic-field} 
captures the dispersionless kinematics of elementary excitations
in two-dimensional massless theories: since $\Phi^\tin_\pm(x)\H_\pm \subset \H_\pm$, by composing
two waves travelling to the right we obtain again a wave travelling to the right. Thus waves are,
in general, composite objects, associated with reducible representations
of the Poincar\'e group.
Moreover, it follows that collision states of waves may contain at most two excitations:
One wave with positive momentum and the other with negative momentum.

Let us now construct these collision states:
For $\xi_\pm \in \H_\pm$, there are sequences of local operators $\{x_{\pm,n}\}$ such that 
$\underset{n\to\infty}\slim \,P_\pm x_{\pm,n}\Omega = \xi_\pm$ and
Using these sequences let us define collision states following \cite{Buchholz} (see also \cite{DT10-1}):
\begin{eqnarray*}
\xi_+\timesi\xi_- &=& \underset{n\to\infty}\slim \,\Phi^\tin_+(x_{+,n})\Phi^\tin_-(x_{-,n})\Omega\\
\xi_+\timeso\xi_- &=& \underset{n\to\infty}\slim \,\Phi^\tout_+(x_{+,n})\Phi^\tout_-(x_{-,n})\Omega\\
\end{eqnarray*}
We interpret $\xi_+\timesi\xi_-$ (respectively $\xi_+\timeso\xi_-$) as the incoming
(respectively outgoing) state which describes two non-interacting waves
$\xi_+$ and $\xi_-$.
These asymptotic states have the following natural properties.
\begin{lemma}[\cite{Buchholz} Lemma 4]\label{collision-states}
For the collision states $\xi_+\timesi\xi_-$ and $\eta_+\timesi\eta_-$ it holds that
\begin{enumerate}
\item $\<\xi_+\timesi\xi_-, \eta_+\timesi\eta_-\> = \<\xi_+,\eta_+\>\cdot\<\xi_-,\eta_-\>$.
\item $U(g) (\xi_+\timesi\xi_-) = (U(g)\xi_+)\timesi(U(g)\xi_-)$ for all $g\in\poincare$.
\end{enumerate}
And analogous formulae hold for outgoing collision states.
\end{lemma}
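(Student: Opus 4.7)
The overall strategy is to exploit three facts from Lemma \ref{asymptotic-field}: (i) the commutation $[\Phi^\tin_+(x), \Phi^\tin_-(y)] = 0$ for all local $x, y$; (ii) the vacuum action $\Phi^\tin_\pm(x)\Omega = P_\pm x\Omega$; and (iii) the Poincar\'e covariance $\Ad U(g)\,\Phi^\tin_\pm(x) = \Phi^\tin_\pm(\Ad U(g)(x))$. A useful upgrade of (i): since $h_\T$ is real, $x_\pm(h_\T)^* = (x^*)_\pm(h_\T)$, so $\Phi^\tin_\pm(x)^* = \Phi^\tin_\pm(x^*)$; applying (i) with $y^*$ in place of $y$ and taking adjoints then yields the stronger commutation $[\Phi^\tin_+(x)^*, \Phi^\tin_-(y)] = 0$.

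Part 2 (covariance) is the easier half. I apply $U(g)$ to the defining sequence; $U(g)\Omega = \Omega$ and (iii) give
\[
U(g)\Phi^\tin_+(x_{+,n})\Phi^\tin_-(x_{-,n})\Omega = \Phi^\tin_+(\Ad U(g)(x_{+,n}))\Phi^\tin_-(\Ad U(g)(x_{-,n}))\Omega.
\]
A short check shows $\H_\pm$ is $\poincare$-invariant (translations commute with $T_\pm$; Lorentz boosts merely rescale the parameter of $T_\pm$), so $[U(g), P_\pm] = 0$. Hence $P_\pm \Ad U(g)(x_{\pm,n})\Omega = U(g) P_\pm x_{\pm,n}\Omega \to U(g)\xi_\pm$, and by the definition of the collision state (which depends only on the limits $P_\pm\cdot\Omega$) the right-hand side converges to $(U(g)\xi_+)\timesi(U(g)\xi_-)$.

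For part 1, the extended commutation relations allow me to rewrite
\[
\<\Phi^\tin_+(x_{+,n})\Phi^\tin_-(x_{-,n})\Omega, \Phi^\tin_+(y_{+,n})\Phi^\tin_-(y_{-,n})\Omega\> = \<\Omega, A_n B_n \Omega\>,
\]
with $A_n := \Phi^\tin_+(x_{+,n})^*\Phi^\tin_+(y_{+,n})$ and $B_n := \Phi^\tin_-(x_{-,n})^*\Phi^\tin_-(y_{-,n})$. One has $\<\Omega, A_n\Omega\> = \<P_+x_{+,n}\Omega, P_+y_{+,n}\Omega\> \to \<\xi_+,\eta_+\>$ and similarly $\<\Omega, B_n\Omega\> \to \<\xi_-,\eta_-\>$. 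The statement therefore reduces to the cluster factorization $\<\Omega, A_n B_n\Omega\> - \<\Omega, A_n\Omega\>\<\Omega, B_n\Omega\> \to 0$. To obtain this I approximate $A_n$ and $B_n$ by prelimit products $x_{+,n}(h_\T)^* y_{+,n}(h_\T)$ and $x_{-,n}(h_\T)^* y_{-,n}(h_\T)$, whose supports migrate to spacelike separated regions along the two opposite lightrays with separation of order $|\T|$; the vacuum cluster property (consequence of uniqueness of $\Omega$ and positivity of energy) then delivers the factorization in the joint limit.

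The main obstacle is the careful exchange of two limits: the asymptotic-field limit $\T \to -\infty$ inside each $\Phi^\tin_\pm$ and the approximation $n \to \infty$ of the wave vectors $\xi_\pm,\eta_\pm$. The uniform operator-norm bounds on $x_{\pm,n}(h_\T)$, which are precisely the bounds underlying the existence of the asymptotic limits in \cite{Buchholz}, must be combined with a quantitative cluster decay in the spatial separation to justify this interchange. This technical bookkeeping constitutes the substance of Buchholz's original argument; it also implicitly ensures that $\xi_+\timesi\xi_-$ is independent of the choice of approximating sequences $x_{\pm,n}$.
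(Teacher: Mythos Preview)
The paper itself does not prove this lemma; it is quoted from \cite{Buchholz}. Your Part~2 is correct and standard.

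For Part~1 you make the problem harder than necessary by retreating to the prelimit operators $x_\pm(h_\T)$ and invoking a spatial cluster estimate together with an exchange of limits that you yourself flag as unresolved. There is a direct argument that stays entirely at the level of the asymptotic fields and uses only the properties already recorded in Lemma~\ref{asymptotic-field} together with uniqueness of the vacuum; this is the route taken in \cite{Buchholz}.

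The observation you are missing is that $\Phi^\tin_+(x)$ actually \emph{commutes} with $T_+(s)$: covariance gives $\Ad T_+(s)\,\Phi^\tin_+(x) = \Phi^\tin_+(\Ad T_+(s)(x))$, and since $\Phi^\tin_+$ depends only on $P_+(\cdot)\Omega$ while $P_+\big(\Ad T_+(s)(x)\big)\Omega = T_+(s)P_+ x\Omega = P_+ x\Omega$, the right-hand side equals $\Phi^\tin_+(x)$. Hence your operator $A_n$ commutes with $P_+$. On the other side, $B_n\Omega \in \H_-$ because $\Phi^\tin_-$ sends $\Omega$ into $\H_-$ and preserves $\H_-$. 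Since $T_+$ and $T_-$ commute, so do $P_+$ and $P_-$, and $\H_+\cap\H_-=\CC\Omega$ by uniqueness of the vacuum; thus $P_+P_-$ is the rank-one projection onto $\CC\Omega$ and $P_+ B_n\Omega = P_+P_-B_n\Omega = \<\Omega, B_n\Omega\>\Omega$. Therefore
\[
\<\Omega, A_n B_n\Omega\> = \<\Omega, A_n P_+ B_n\Omega\> = \<\Omega, A_n\Omega\>\,\<\Omega, B_n\Omega\>
\]
exactly, for every $n$ --- no clustering, no second limit, no bookkeeping. Taking $n\to\infty$ is then immediate. Your approach is not wrong in spirit, but the ``main obstacle'' you identify is an artefact of the detour; the factorization already holds at the level of the asymptotic algebras.
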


Furthermore, we define the spaces of collision states: Namely,
we let $\H^\tin$ (respectively $\H^\tout$) be the subspace generated by $\xi_+\timesi\xi_-$
(respectively $\xi_+\timeso\xi_-$).
 From the Lemma above, we see that the following map
\[
S: \xi_+\timeso\xi_- \longmapsto \xi_+\timesi\xi_-
\]
is an isometry. The operator $S: \H^\tout \to \H^\tin$ is called
the {\bf scattering operator} or the {\bf S-matrix}.
We say the waves in $\A$ are {\bf interacting} if $S$ is not a constant
multiple of the identity operator on $\H^\tout$.
The purpose of this paper is to show that 
$S = \1$ on $\H^\tout$ for M\"obius covariant nets and to determine $\H^\tout = \H^\tin$
in terms of chiral observables (see Section \ref{subspace}).
As a corollary one observes that a M\"obius covariant net is chiral if and only
if it is {\bf asymptotically complete (with respect to waves)}, i.e. $\H^\tout=\H^\tin=\H$.
We remark that this notion of asymptotic completeness refers only to
massless excitations. If one considers the massive free field, all the asymptotic fields
considered here reduce to multiples of the identity.
Throughout this article, we are concerned only with waves.

Moreover, we show in Section \ref{characterization}
that if a net is Poincar\'e covariant and asymptotically complete,
then it is noninteracting if and only if it is a chiral M\"obius covariant net
(see Section \ref{chiralnets}) and in Section \ref{af-in-mob} that in- and out-
asymptotic fields coincide in a (possibly non-chiral) M\"obius covariant net.

\section{Noninteraction of waves}\label{noninteraction}
\subsection{Representations of the spacetime symmetry group}\label{representation-theory}
As a preliminary for the proof of the main result, we need to examine the
structure of representations of the group generated by translations and dilations.

Recall that we denote by $\axb$ the subgroup of $\psl2r$ generated by (one-dimensional)
translations and dilations. 
The group $\axb$ is simply connected, hence it can be considered as a subgroup
of $\overline{\psl2r}$.
The direct product $\axb\times\axb\subset \psl2r\times\psl2r$
is the group of (two-dimensional) translations, Lorentz boosts and dilations.
For the later use, we only have to consider representations of $\axb\times\axb$
which extend to  positive-energy representations of $\2dmob$.

Recall further that irreducible positive-energy representations of $\overline{\psl2r}$ are classified by
a nonnegative number $l$, which is the lowest eigenvalue of the generator of 
(the universal covering of) the group of rotations (see \cite{Longo08}).
We claim that irreducible representations of
$\2dmob$ are classified by pairs of nonnegative numbers $l_\L,l_\R$.
Indeed, we can take the G\r{a}rding domain $\D$ since $\2dmob$
is a finite dimensional Lie group.
Furthermore, if a representation is irreducible, then the center of the group
must act as scalars. From this it follows that the joint spectrum of generators of
left and right rotations is discrete and each point must have positive components by
the assumed positivity of energy.
The same argument as in \cite{Longo08} shows that an eigenvector with minimal eigenvalues
of rotations generates an irreducible representation,
hence irreducible representations are classified by this pair of minimal eigenvalues.
Conversely, all of these representations are realized by product representations.
Let us sum up these observations:
\begin{proposition}\label{classification-psl2r}
All the irreducible representations of $\2dmob$ are
completely classified by pairs of nonnegative numbers $(l_\L,l_\R)$.
A representation with a given $(l_\L,l_\R)$ is unitarily equivalent to the product of
representations of $\overline{\psl2r}$ with lowest weights $l_\L,l_\R$
($l_\L = 0$ or $l_\R = 0$ correspond to the trivial representation).
A vector in any of these irreducible representations is invariant
under the subgroup $\overline{\psl2r}\times \id$ if and only if
it is invariant under $\t_0 \times \id$, where $\t_0$ is
the translation subgroup of $\overline{\psl2r}$ (and the same holds for
the right component).
\end{proposition}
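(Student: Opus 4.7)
The plan is to reduce the representation theory of $\2dmob$ to the classification of irreducible positive-energy representations of $\overline{\psl2r}$ by lowest weight, established in \cite{Longo08}. I would fix an irreducible positive-energy representation $U$ and work on the G{\aa}rding domain $\D$, which is dense, $U$-invariant, and contained in the domain of every self-adjoint generator of the Lie algebra; this domain is available because $\2dmob$ is a finite-dimensional Lie group. Let $L_0$ and $R_0$ denote the commuting self-adjoint generators of rotations in the left and right factors. Using the identity $L_0 = \tfrac{1}{2}(P+K)$ relating the conformal Hamiltonian to the lightlike translation generator $P$ and its rotation-conjugate $K$, the assumed positivity of the two-dimensional energy forces $L_0, R_0 \geq 0$.

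Next I would invoke Schur's lemma on the center of $\2dmob$, generated by $r_{2\pi}\times\id$ and $\id\times r_{2\pi}$: these act on $U$ as scalars $e^{2\pi i\alpha_\L}$ and $e^{2\pi i\alpha_\R}$, so the spectrum of $L_0$ lies in $\alpha_\L+\ZZ$ and that of $R_0$ in $\alpha_\R+\ZZ$. Combined with $L_0, R_0 \geq 0$ this makes the joint spectrum a discrete subset of $[0,\infty)^2$, so a minimal element $(l_\L, l_\R)$ exists with a unit joint eigenvector $\xi$. Applying the reconstruction argument of \cite{Longo08} to the left factor, the closure of $\overline{\psl2r}\times\id$ acting on $\xi$ realizes an irreducible lowest-weight representation of weight $l_\L$; likewise for the right factor with weight $l_\R$. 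Because the two factor actions commute and $\xi$ is lowest weight for both, the cyclic subspace generated by the full product action on $\xi$ identifies with the external tensor product of these two irreducible representations, carrying the product action. Irreducibility of $U$ forces this subspace to exhaust the representation space, yielding the classification; conversely every such tensor product is manifestly an irreducible positive-energy representation.

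For the final invariance statement I would use the tensor product decomposition just obtained. In an irreducible positive-energy representation of $\overline{\psl2r}$ with lowest weight $l > 0$ there is no nonzero translation-invariant vector, while for $l = 0$ the representation is trivial; in either case the $\t_0$-invariant and $\overline{\psl2r}$-invariant subspaces of the left factor coincide, and tensoring with the identity on the right factor gives the equivalence of invariance under $\t_0\times\id$ and under $\overline{\psl2r}\times\id$. The main obstacle I expect is the tensor-product identification in the second step: one must verify that the simultaneous lowest-weight property of $\xi$ together with the commutation of the two factor actions really produces an external tensor product rather than just a pair of compatible subrepresentations. This rests on the fact that an irreducible positive-energy representation of $\overline{\psl2r}$ is entirely determined by a single lowest-weight vector, but propagating this to the product setting requires careful bookkeeping of how the two commuting irreducible actions interleave.
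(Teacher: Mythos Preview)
Your proposal is correct and follows essentially the same route as the paper: G{\aa}rding domain, Schur's lemma on the center to force discrete joint spectrum of $L_0,R_0$, positivity of energy to bound the spectrum below, extraction of a minimal joint eigenvector, and appeal to the lowest-weight argument from \cite{Longo08} to identify the representation as a tensor product. You have in fact supplied more detail than the paper's sketch---in particular the explicit reason $L_0,R_0\ge 0$ via $L_0=\tfrac12(P+K)$ and the argument for the final invariance claim---and the obstacle you flag about the tensor-product identification is exactly the step the paper leaves to ``the same argument as in \cite{Longo08}.''
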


We know that if $l \neq 0$ then the restriction of the representation to
$\axb$ is the unique strictly positive-energy representation \cite{Longo08}
(here ``positive-energy'' means that the generator of translations is positive).
As a consequence of Proposition \ref{classification-psl2r}, we can classify positive-energy irreducible
representations of $\axb\times\axb$ which appear in M\"obius covariant nets.
\begin{corollary}\label{classification-axb}
Let $\i$ and $\rho$ be the trivial and the unique strictly positive-energy representation
of $\axb$ respectively.
Any irreducible positive-energy representation of $\axb\times\axb$ which extends to
$\2dmob$ is one of the following four representations.
\begin{itemize}
\item $\i\otimes\i$,
\item $\rho\otimes\i$,
\item $\i\otimes\rho$,
\item $\rho\otimes\rho$.
\end{itemize}
Any (possibly reducible) representation of $\axb\times\axb$ extending
to $\2dmob$ is a direct sum
of copies of the above four representations.
\end{corollary}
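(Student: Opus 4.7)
The plan is to leverage Proposition \ref{classification-psl2r} by showing that any irreducible positive-energy representation of $\axb\times\axb$ which extends to $\2dmob$ in fact extends to an \emph{irreducible} representation of $\2dmob$; the quoted classification then pins down the restriction to $\axb\times\axb$ up to the four listed options. For the general (reducible) case, I would decompose the $\2dmob$-extension into irreducibles first and then restrict summand by summand.

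First I take an irreducible positive-energy representation $\pi$ of $\axb\times\axb$ and assume that it extends to $\tilde\pi$ of $\2dmob$ on the same Hilbert space. Since $\axb\times\axb\subset\2dmob$, every closed $\tilde\pi$-invariant subspace is automatically $\pi$-invariant, so irreducibility of $\pi$ forces irreducibility of $\tilde\pi$ as a $\2dmob$-representation. By Proposition \ref{classification-psl2r}, $\tilde\pi$ is unitarily equivalent to a product $\tilde\pi_{l_\L}\otimes\tilde\pi_{l_\R}$ of lowest-weight representations of $\overline{\psl2r}$ with $l_\L,l_\R\geq 0$. Restricting each factor to $\axb$, the fact cited from \cite{Longo08} gives the trivial representation $\iota$ when $l=0$ and the unique strictly positive-energy representation $\rho$ when $l\neq 0$, producing exactly the four cases. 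Conversely each of $\iota\otimes\iota$, $\rho\otimes\iota$, $\iota\otimes\rho$, $\rho\otimes\rho$ is irreducible as an $\axb\times\axb$-representation, since $\iota$ and $\rho$ exhaust the irreducible unitary dual of the type-I group $\axb$ and irreducibles of a direct product of type-I groups are tensor products of irreducibles.

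Next, for a general positive-energy representation $\pi$ of $\axb\times\axb$ extending to $\tilde\pi$ on $\2dmob$, I would repeat the argument of Proposition \ref{classification-psl2r}: positivity of energy together with the compactness of the two rotation subgroups in $\overline{\psl2r}\times\overline{\psl2r}$ makes the joint spectrum of the left and right rotation generators discrete and contained in a product of half-lines, so one can pick the smallest joint eigenvalue $(l_\L,l_\R)$, use a Longo-style lowest-weight argument to split off a direct sum of copies of the irreducible $(l_\L,l_\R)$-representation, and iterate on the orthogonal complement. This exhibits $\tilde\pi$ as an orthogonal direct sum of irreducibles of $\2dmob$, and restricting each summand to $\axb\times\axb$ via the previous paragraph yields the claimed direct sum decomposition.

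The main obstacle I anticipate is the last step: ensuring that the decomposition of $\tilde\pi$ into irreducibles of $\2dmob$ is a genuine direct sum rather than only a direct integral, which requires the discreteness of the joint rotation spectrum and a careful lowest-weight induction rather than a bare application of the type-I property. Once this is in place the rest of the argument is essentially bookkeeping against Proposition \ref{classification-psl2r}.
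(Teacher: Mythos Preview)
Your treatment of the irreducible case is correct and essentially the paper's argument spelled out: an extension of an irreducible representation to a larger group is automatically irreducible, so Proposition~\ref{classification-psl2r} applies directly.

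The gap is in your reducible case. The rotation subgroups of $\overline{\psl2r}$ are \emph{not} compact: passing to the universal cover unwinds the circle to a copy of $\RR$. Consequently there is no reason for the joint rotation spectrum of a general positive-energy representation of $\2dmob$ to be discrete, and indeed it need not be (a direct integral $\int_{[0,1]}^{\oplus}\tilde\pi_{l}\,dl$ of lowest-weight representations already has continuous rotation spectrum). Your proposed lowest-weight induction therefore cannot get started: there may be no smallest joint eigenvalue to peel off. You correctly flagged this step as the main obstacle, but the fix you suggest (compactness) is false.

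The paper sidesteps the issue entirely. One takes the general direct \emph{integral} decomposition of the $\2dmob$-extension into irreducibles (available for any separable locally compact group, cf.\ Dixmier), so that almost every fibre is an irreducible positive-energy representation classified by some $(l_\L,l_\R)$. The key observation is then made at the level of $\axb\times\axb$ rather than $\2dmob$: upon restriction, each fibre lands in one of only \emph{four} unitary equivalence classes, regardless of $(l_\L,l_\R)$. A direct integral whose integrand assumes finitely many values up to equivalence is a direct sum, and this gives the conclusion without ever needing discreteness of the rotation spectrum.
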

\begin{proof}
The first part of the statement follows directly from Proposition \ref{classification-psl2r}.
The second part is a consequence of the general result
(for example, see \cite[Sections 8.5 and 18.7]{Dixmier})
that any continuous unitary representation (on a separable Hilbert space)
of a (separable) locally compact group is unitarily equivalent
to a direct integral of irreducible representations. Since by assumption the
given representation extends to $\2dmob$,
it decomposes into a direct integral, and the components have positive-energy
almost everywhere. Hence they are classified by $(l_\L,l_\R)$ and when restricted to $\axb\times\axb$
they fall into irreducible representations listed above.
Since the integrand takes only four different values
(up to unitary equivalence), the direct integral reduces to a direct sum.
\end{proof}

\subsection{Proof of noninteraction}\label{proof-noninteraction}
As waves are defined in terms of representations of translations,
we need to analyse the representation $U$. We continue to use notations
 from the previous section. A net $\A$ in this section is always assumed to
be M\"obius covariant.

The representation $\rho$ of $\axb$ does not admit any nontrivial
invariant vector with respect to (one-dimensional) translations.
The subgroup of dilations is noncompact (isomorphic to $\RR$) and for any vector $\xi$ 
in the representation space of $\rho$
it holds that $\rho(\d_s)\xi$ tends weakly $0$ as $s \to \pm \infty$,
where $\d_s$ represents the group element of dilation by $e^s$.

\begin{remark}\label{extension-to-mob}
At this point we  use the assumed covariance
under the action of the two dimensional M\"obius group
$\2dmob$.
If we assume only the dilation covariance (as in \cite{BF}),
the present author is not able to exclude the possibility
of occurrence of a representation of $\axb$ which is trivial only on translations in general.
As we will see, the absence of such representations is essential to identify all the waves in the
relevant representation space.

But if one assumes Bisognano-Wichmann property and asymptotic completeness in addition,
it is possible to show that the representation of the spacetime symmetry extends to the
M\"obius group: As observed in \cite{Tanimoto11-3}, for an asymptotically complete
Poincar\'e covariant net with
Bisognano-Wichmann property, one can define the asymptotic net which is chiral M\"obius
covariant. The representation of the M\"obius group is a natural extension of the given
representation of the Poincar\'e group given through the Bisognano-Wichmann property.
Their actions on asymptotic fields are determined
by the boosts, hence the representation extends also the given representation of dilation.
Summing up, under Bisognano-Wichmann property and asymptotic completeness, the representation
of the Poincar\'e group and dilation extends to the M\"obius group.
\end{remark}

Among the four irreducible positive-energy representations of $\axb\times\axb$
(see Corollary~\ref{classification-axb}),
only $\i\otimes\i$ contains a nonzero invariant vector
with respect to two-dimensional translations. The representation space of $\i\otimes\rho$
consists of invariant vectors with respect to positive-lightlike translations
but contains no nonzero invariant vectors with respect to negative-lightlike
translations. An analogous statement holds for $\rho\otimes\i$.
The representation $\rho\otimes\rho$ contains no nonzero invariant vectors,
neither with respect to negative- nor positive-lightlike translations.

Let us consider the representation $U$ of $\2dmob$
associated with a M\"obius covariant net $\A$.
The restriction of $U$ to $\axb\times\axb$ is a direct sum
of copies of representations which appeared in Corollary \ref{classification-axb}.
By the uniqueness of the vacuum, the representation $\i\otimes\i$ appears only
once. Waves of positive (respectively negative) momentum correspond precisely
to $\rho\otimes\i$ (respectively $\i\otimes\rho$).
 From these observations, it is straightforward to see the following.
\begin{lemma}\label{dilate}
Let us denote by $P$ the spectral measure of the
representation $T = U|_{\RR^2}$ of translations.
Each of the following spectral subspaces of $T$ carries the multiple of one of the
irreducible representations in Corollary \ref{classification-axb} (the correspondence
is the order of appearance)
\begin{itemize} 
\item $Q_0 := P(\{(0,0)\})$,
\item $Q_\L := P(\{(a_0,a_1): a_0 = a_1, a_0> 0\})$,
\item $Q_\R := P(\{(a_0,a_1): a_0 = - a_1, a_0 > 0\})$,
\item $Q_{\L,\R} := P(\{(a_0,a_1): a_0 > a_1, a_0 > - a_1\})$.
\end{itemize}
Let $\d^\L$ be the dilation in the left-component of $\2dmob$. Then
for any vector $\xi \in \H$, $\underset{s\to 0}\wlim \,U(\d^\L_s)\xi = (Q_\R+Q_0)\xi$.
Similarly for the dilation in the right component $\d^\R$ we have
$\underset{s\to 0}\wlim\, U(\d^\R_s)\xi = (Q_\L+Q_0)\xi$.
Furthermore, it holds that $Q_\L+Q_0 = P_+, Q_\R+Q_0 = P_-$ (see Section \ref{scatteringtheory}
for definitions)
\end{lemma}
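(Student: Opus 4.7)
The plan is to use Corollary~\ref{classification-axb} to decompose $U|_{\axb\times\axb}$ into its irreducible components, read off the translation spectrum of each component, and then derive the dilation limits and the identifications with $P_\pm$ from the properties of the two irreducibles $\iota$ and $\rho$ of $\axb$ stated at the beginning of Section~\ref{proof-noninteraction}.

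First I would note that by Corollary~\ref{classification-axb} the representation $U|_{\axb\times\axb}$ is a direct sum of copies of $\iota\otimes\iota$, $\rho\otimes\iota$, $\iota\otimes\rho$ and $\rho\otimes\rho$. Restricted to the translation subgroup of $\axb\times\axb$, the one-dimensional representation $\iota$ has translation generator $0$, whereas $\rho$ is the unique strictly positive-energy representation of $\axb$, so its translation generator has spectrum in $(0,\infty)$. Consequently the joint translation spectrum of each of the four irreducibles is concentrated exactly on one of the four disjoint sets $\{(0,0)\}$, $\{a_0=a_1,\,a_0>0\}$, $\{a_0=-a_1,\,a_0>0\}$, $\{a_0>a_1,\,a_0>-a_1\}$ (the assignment between left/right factors and the two lightrays being fixed by the embedding $\2dmob\supset\axb\times\axb$ used throughout). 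By positivity of energy these four sets exhaust the support of $T$, so the spectral projections $Q_0,Q_\L,Q_\R,Q_{\L,\R}$ coincide with the isotypical projections onto the four representations $\iota\otimes\iota$, $\rho\otimes\iota$, $\iota\otimes\rho$, $\rho\otimes\rho$, which is the first part of the lemma.

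For the dilation limits I would use that, on a representation space of $\rho$, $\rho(\delta_s)\xi$ tends weakly to $0$ as $s\to\pm\infty$, while $\iota(\delta_s)$ is trivial. Since $U(\delta_s^\L)$ acts only in the left factor, it acts trivially on the components whose left factor is $\iota$ (namely the ranges of $Q_0$ and $Q_\R$) and tends weakly to $0$ on the components whose left factor is $\rho$ (the ranges of $Q_\L$ and $Q_{\L,\R}$). Splitting $\xi$ into its four isotypical summands and passing to the weak limit gives $(Q_\R+Q_0)\xi$; the analogous argument with left and right swapped yields the formula for $\delta^\R$.

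Finally, $P_+$ is the projection onto the joint kernel of $T_+(t)-\1$ for all $t$, that is, onto the part of the translation spectrum annihilated by the character $(a_0,a_1)\mapsto a_0+a_1$ (in the appropriate sign convention used earlier). Inspecting the four spectral sets above shows that exactly $\{(0,0)\}$ and the lightray labelled by $\L$ lie in this kernel, hence $P_+=Q_\L+Q_0$; the identity $P_-=Q_\R+Q_0$ is symmetric. The main step that requires care is the consistent bookkeeping between the two lightray directions, the left/right components of $\2dmob$, and the labels $\L,\R$ — everything else reduces to the explicit form of $\iota$ and $\rho$ on $\axb$.
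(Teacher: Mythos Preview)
Your proposal is correct and is exactly the argument the paper has in mind: the paper gives no separate proof of this lemma, remarking only that it is ``straightforward to see'' from the observations in the preceding paragraphs (the decomposition from Corollary~\ref{classification-axb}, the translation spectra of $\iota$ and $\rho$, and the weak convergence $\rho(\delta_s)\xi\to 0$), and your write-up is precisely the unpacking of those observations. Your closing caveat about the $\L/\R$ bookkeeping is apt but does not hide any difficulty beyond careful labelling.
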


After this preparation we proceed to our main result:
\begin{theorem}\label{noninteraction-bf}
Let $\A$ be a M\"obius covariant net.
We have the equality $\xi_+\timesi\xi_- = \xi_+\timeso\xi_-$ for any
pair $\xi_+ \in \H_+$ and $\xi_- \in \H_-$. In particular,
such waves do not interact and we have $\H^\tout = \H^\tin$.
\end{theorem}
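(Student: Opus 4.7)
The strategy, following \cite{BF}, is to exploit the M\"obius symmetry (in particular the dilations in the chiral factors) to identify the in- and out-asymptotic limits, going beyond what Poincar\'e covariance alone provides.

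\textbf{Reduction.} First, using that $\Phi^\tin_+(x_+)$ and $\Phi^\tin_-(x_-)$ commute (Lemma \ref{asymptotic-field}) and that $\Phi^\tin_-(x_{-,n})\Omega = P_- x_{-,n}\Omega \to \xi_-$, the collision state rearranges as
\[
\xi_+\timesi\xi_- = \lim_n \Phi^\tin_+(x_{+,n})\Phi^\tin_-(x_{-,n})\Omega = \lim_n \Phi^\tin_+(x_{+,n})\xi_-,
\]
and analogously $\xi_+\timeso\xi_- = \lim_n \Phi^\tout_+(x_{+,n})\xi_-$. The theorem thus reduces to proving $\Phi^\tin_+(x)\xi_- = \Phi^\tout_+(x)\xi_-$ for every local $x$ and every $\xi_-\in\H_-$.

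\textbf{M\"obius covariance of the asymptotic field construction.} Since the right dilation $\d^\R_s$ commutes with the left lightlike translation $T_+$, a direct computation gives $U(\d^\R_s) x_+(h_\T) U(\d^\R_s)^* = \alpha^\R_s(x)_+(h_\T)$, where $\alpha^\R_s(x) := U(\d^\R_s) x U(\d^\R_s)^*$. Passage to the asymptotic limit $\T\to-\infty$ yields
\[
U(\d^\R_s)\Phi^\tin_+(x)U(\d^\R_s)^* = \Phi^\tin_+(\alpha^\R_s(x)),
\]
and the analogous identity for $\Phi^\tout_+$. An analogous identity for the left dilation holds up to a rescaling of the smearing function, and uses that the asymptotic limit is insensitive to the precise width of $h_\T$ as long as it grows subpolynomially in $|\T|$.

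\textbf{Application to $\H_-$.} By Corollary \ref{classification-axb}, $\H_-$ carries the trivial representation of the right $\axb$-factor, so $U(\d^\R_s)\xi_- = \xi_-$. Combined with the covariance identity,
\[
\Phi^\tin_+(x)\xi_- = U(\d^\R_s)^*\Phi^\tin_+(\alpha^\R_s(x))\xi_-
\]
for all $s$. Using Lemma \ref{dilate} (whose right-factor analog gives $U(\d^\R_s)\to P_+$ weakly as $s\to 0$) together with the dependence of $\Phi^\tin_+(\alpha^\R_s(x))$ only on $P_+\alpha^\R_s(x)\Omega = U(\d^\R_s)P_+ x\Omega$, the strategy is to identify $\Phi^\tin_+(x)\xi_-$ and $\Phi^\tout_+(x)\xi_-$ as limits of the same family of operators obtained by flowing the localization of $x$ along the M\"obius orbit and combining with the trivial right action on $\xi_-$.

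The hard part will be making the last identification precise: one must find a concrete M\"obius element whose conjugation swaps the past- and future-asymptotic regions on the positive lightline while acting trivially on $\xi_-$, and one must rigorously justify the extension of the covariance of asymptotic fields from $\poincare$ to the full $\2dmob$. The latter requires a careful analysis of how the smearing function $h_\T$ transforms under dilations (in particular, that a rescaling of its width does not alter the asymptotic limit), and of the commutation of the weak $s\to 0$ dilation limit with the strong $\T\to\pm\infty$ asymptotic limit.
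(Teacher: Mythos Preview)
Your proposal is explicitly incomplete: you identify the ``hard part'' but do not carry it out, and the mechanism you suggest for completing it does not work. There is no element of $\overline{\psl2r}$ that ``swaps the past- and future-asymptotic regions'' on a lightray: $\psl2r$ acts by orientation-preserving M\"obius transformations, so no conjugation can exchange the $\T\to+\infty$ and $\T\to-\infty$ limits of $x_+(h_\T)$. Likewise, the weak limit $U(\d^\R_s)\to P_+$ from Lemma~\ref{dilate} is only a \emph{weak} limit; interchanging it with the strong $\T$-limit defining $\Phi^{\tin/\tout}_+$ is exactly the kind of step that fails without an additional argument, and you offer none. So the reduction to $\Phi^\tin_+(x)\xi_-=\Phi^\tout_+(x)\xi_-$ is fine, but the route you sketch from there does not close.

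The paper avoids this obstruction by not trying to identify $\Phi^\tin_+$ and $\Phi^\tout_+$ directly. Instead it computes mixed inner products $\<\xi_+\timesi\xi_-,\eta_+\timeso\eta_-\>$ and shows they equal $\<\xi_+,\eta_+\>\<\xi_-,\eta_-\>$, which forces $\timesi=\timeso$ by the norm identity. The key device is geometric: one chooses three timelike-ordered double cones $O_+$, $O_0$, $O_-$, takes $x_+\in\A(O_+)$, $y_\pm\in\A(O_-)$, and a self-adjoint $b\in\A(O_0)$, and sets $b_s:=\Ad U(\d^\L_s)(b)$ for $s<0$. The point is that $b_s$ stays in $\A(O_0)$, hence by Remark~\ref{commutation-asymptotic-field} it \emph{commutes as a genuine local operator} with all of $\Phi^\tin_+(x_+)$, $\Phi^\tout_+(y_+)$, $\Phi^\tout_-(y_-)$. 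Then Lemma~\ref{dilate} gives $b_s\Omega\to P_-b\Omega=:\xi_-$ weakly, and the commutation lets one shuttle $b_s$ across the asymptotic fields inside the inner product before taking the $s$-limit. The dilation is thus applied to a \emph{local} intermediary rather than to the asymptotic fields themselves, and the only limit interchange needed is a weak limit inside a scalar product with fixed bounded operators on the other side, which is automatic. This is the idea you are missing.
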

\begin{proof}
We show the equality $\<\xi_+\timesi\xi_-, \eta_+\timeso\eta_-\>
= \<\xi_+\timesi\xi_-, \eta_+\timesi\eta_-\>$ for any $\xi_+,\eta_+ \in \H_-$
and $\xi_-,\eta_- \in \H_-$. This is in fact enough for the first statement,
since we know that $\|\eta_+\timeso\eta_-\| = \|\eta_+\timesi\eta_-\|$.
As a particular case we have $\<\eta_+\timesi\eta_-, \eta_+\timeso\eta_-\>
= \<\eta_+\timesi\eta_-, \eta_+\timesi\eta_-\>$, which is possible only if
$\eta_+\timeso\eta_- = \eta_+\timesi\eta_-$.

Obviously it suffices to show the equality for a dense set of vectors
in $\H_+$ and $\H_-$. Let us take three double cones $O_+, O_0, O_-$ which
are timelike separated in this order, more precisely $O_0$ stays in the future of $O_-$
and in the past of $O_+$, and assume that $O_0$ is a neighborhood
of the origin. We choose elements $x_+ \in \A(O_+)$ and $y_+,y_- \in \A(O_-)$.
We take a self-adjoint element $b \in \A(O_0)$ and set $b_s := \Ad(U(\d^\L_s))(b)$
for $s < 0$.
Then $\{b_s\}$ are still contained in $\A(O_0)$. 
We set:
\begin{align*}
\xi_+&:=\Phi^\tin_+(x_+)\Omega,
&\xi_- &:= \underset{s\to 0}\wlim\, b_s\Omega = \underset{s\to 0}\wlim\, U(\d^\L_s)b\Omega,\\
\eta_+&:=\Phi^\tout_+(y_+)\Omega,
&\eta_-&:=\Phi^\tout_-(y_-)\Omega,\\ 
\zeta_-&:=\Phi^\tout_-(y_-^*)\Omega = \Phi^\tout_-(y_-)^*\Omega.
\end{align*}
Note that $b_s$ commutes with $\Phi^\tin_+(x_+)$, $\Phi^\tout_+(y_+)$
and $\Phi^\tout_-(y_-)$ since $\Phi^\tin$ and $\Phi^\tout$ are
defined as strong limits of local operators and from some point
they are spacelike separated (see Remark \ref{commutation-asymptotic-field}).
Note also that
$\Phi^\tin_+(x_+)\Omega = P_+ x_+\Omega, \Phi^\tout_+(y_+)\Omega = P_+ y_+\Omega,
\Phi^\tout_-(y_-)\Omega = P_- y_-\Omega$ and we have $\lim_s b_s\Omega = P_- b\Omega$
by Lemma \ref{dilate}.

We see that
\begin{eqnarray*}
\<\xi_+\timesi\xi_-, \eta_+\timeso\eta_-\>
&=& \<\Phi^\tin_+(x_+)(\underset{s\to 0}\wlim\,b_s\Omega), \Phi^\tout_+(y_+)\Phi^\tout_-(y_-)\Omega\> \\
&=& \lim_s \<\Phi^\tin_+(x_+)b_s\Omega, \Phi^\tout_+(y_+)\Phi^\tout_-(y_-)\Omega\> \\
&=& \lim_s \<\Phi^\tout_-(y_-^*)\Phi^\tin_+(x_+)\Omega, \Phi^\tout_+(y_+)b_s\Omega\>, 
\end{eqnarray*}
where we used Remark \ref{commutation-asymptotic-field} in the 3rd line.
Continuing the calculation, with the help of the definition of asymptotic fields,
this can be transformed as
\begin{eqnarray*}
\<\xi_+\timesi\xi_-, \eta_+\timeso\eta_-\>
&=& \<\Phi^\tout_-(y_-^*)\Phi^\tin_+(x_+)\Omega, \Phi^\tout_+(y_+)(\underset{s\to 0}\wlim\,b_s\Omega)\> \\
&=& \<\Phi^\tout_-(y_-^*)\xi_+, \Phi^\tout_+(y_+)\xi_-\> \\
&=& \<\xi_+\timeso\zeta_-, \eta_+\timeso\xi_-\> \\
&=& \<\xi_+,\eta_+\>\cdot\<\zeta_-,\xi_-\> \\
&=& \<\xi_+,\eta_+\>\cdot \<\Phi^\tout_-(y_-^*)\Omega,(\underset{s\to 0}\wlim\,b_s\Omega)\> \\
&=& \<\xi_+,\eta_+\>\cdot \<(\underset{s\to 0}\wlim\,b_s\Omega),\Phi^\tout_-(y_-)\Omega\> \\
&=& \<\xi_+,\eta_+\>\cdot \<\xi_-,\eta_-\> \\
&=& \<\xi_+\timesi\xi_-, \eta_+\timesi\eta_-\>,
\end{eqnarray*}
where the 6th equality follows from Remark \ref{commutation-asymptotic-field}
and the self-adjointness
of $b$, the 4th and 8th equalities follow from Lemma \ref{collision-states}.
This equation is  linear with respect to $b$ (which is implicitly contained
in $\xi_-$), hence it holds for any $b \in \A(O_0)$.

By the Reeh-Schlieder property, each set of vectors of the forms above is dense
in $\H_+$ and $\H_-$, respectively. Thus the required equality is obtained
for dense subspaces and this concludes the proof.
\end{proof}

The proof of this Theorem uses only the fact that $\A$ is Poincar\'e-dilation covariant
and that the representation of the Poincar\'e-dilation group extends to
the M\"obius group. Putting together with Remark \ref{extension-to-mob}, we obtain
\begin{corollary}\label{ac-dilation}
If a dilation-covariant net $\A$ satisfies Bisognano-Wichmann property and
asymptotic completeness, then the waves in $\A$ are not interacting.
\end{corollary}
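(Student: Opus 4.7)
The plan is to observe that Theorem \ref{noninteraction-bf} needs less than full M\"obius covariance of the net $\A$: a careful read of its proof shows that the algebraic input is only (i) Poincar\'e covariance, used to define $\Phi^\tin_\pm, \Phi^\tout_\pm$ via Lemma \ref{asymptotic-field} and to apply Remark \ref{commutation-asymptotic-field}; (ii) dilation covariance, used to form the auxiliary family $b_s := \Ad U(\d^\L_s)(b) \in \A(O_0)$; and (iii) Lemma \ref{dilate}, which provides $\underset{s\to 0}\wlim\, U(\d^\L_s)b\Omega = P_- b\Omega$. Item (iii) is the only place where genuinely M\"obius-theoretic information is invoked, through Corollary \ref{classification-axb} on the classification of representations of $\axb\times\axb$ that extend to positive-energy representations of $\2dmob$.

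Accordingly, the first step is to verify that the hypotheses of the corollary force such an extension. This is supplied exactly by Remark \ref{extension-to-mob}: Bisognano-Wichmann property together with asymptotic completeness (with respect to waves) allows, following the construction in \cite{Tanimoto11-3}, to build the asymptotic chiral M\"obius covariant net, and the induced M\"obius symmetry extends the given Poincar\'e representation and acts on asymptotic fields consistently with the boosts; combined with dilation covariance one concludes that the whole Poincar\'e--dilation representation $U$ extends to a positive-energy unitary representation of $\2dmob$ on $\H$.

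Once the extension is in place, Corollary \ref{classification-axb} applies to $U|_{\axb\times\axb}$, Lemma \ref{dilate} holds verbatim, and the three ingredients (i)--(iii) are all available. The second step is then simply to rerun the proof of Theorem \ref{noninteraction-bf} word for word in the present setting: the local operators $x_+\in\A(O_+)$, $y_\pm\in\A(O_-)$ and $b\in\A(O_0)$ are chosen in the same way, $b_s$ lies in $\A(O_0)$ by dilation covariance, commutes with the relevant asymptotic fields by Remark \ref{commutation-asymptotic-field}, and the chain of equalities for $\<\xi_+\timesi\xi_-,\eta_+\timeso\eta_-\>$ concludes with $\<\xi_+\timesi\xi_-,\eta_+\timesi\eta_-\>$. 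A density argument via Reeh-Schlieder then yields $\xi_+\timesi\xi_- = \xi_+\timeso\xi_-$ for all $\xi_\pm\in\H_\pm$, i.e.\ noninteraction.

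The only genuine obstacle is the verification that dilation covariance plus Bisognano-Wichmann plus asymptotic completeness really do produce an extension to $\2dmob$, which is why Remark \ref{extension-to-mob} was inserted earlier; once that is granted, no new computation is required beyond citing Theorem \ref{noninteraction-bf}.
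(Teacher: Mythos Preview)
Your proposal is correct and follows exactly the paper's own argument: the paper simply notes after Theorem \ref{noninteraction-bf} that its proof uses only Poincar\'e-dilation covariance together with the extension of $U$ to $\2dmob$, and then invokes Remark \ref{extension-to-mob} to supply that extension from Bisognano-Wichmann plus asymptotic completeness. Your more detailed unpacking of which ingredients (i)--(iii) are actually used in the proof of Theorem \ref{noninteraction-bf} is a helpful elaboration but not a different route.
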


\section{Subspace of collision states of waves}\label{subspace}
It has been shown by Rehren that
any M\"obius covariant net contains the maximal chiral subnet, consisting of observables
localized on the lightrays \cite{Rehren}. Here we show that the
vectors generated by such observables from the vacuum exhaust the subspace of
collision states.  With this information at hand, we provide an alternative proof
of noninteraction of waves and show that a M\"obius covariant field theory is asymptotically
complete if and only if it is chiral.

\subsection{Preliminaries on chiral nets}\label{chiralnets}
In this section we discuss a fundamental class of examples
of two-dimensional M\"obius covariant nets, namely chiral theories.
A chiral theory is obtained by a tensor product construction
from two nets of von Neumann algebras on a circle $S^1$, defined below,
and each of these nets is referred to as {\bf the chiral component} of the theory.

An open nonempty connected nondense subset $I$ of the circle $S^1$ is called
an interval.
A {\bf (local) M\"obius covariant net $\A_0$ on $S^1$} assigns to each interval
a von Neumann algebra $\A_0(I)$ on a fixed separable Hilbert space $\H_0$
satisfying the following conditions:
\begin{enumerate}
\item[(1)] {\bf Isotony.} If $I_1 \subset I_2$, then $\A_0(I_1) \subset \A_0(I_2)$.
\item[(2)] {\bf Locality.} If $I_1 \cap I_2 = \emptyset$, then $[\A_0(I_1),\A_0(I_2)] = 0$.
\item[(3)] {\bf M\"obius covariance.} There exists a strongly continuous unitary
representation $U_0$ of the M\"obius group $\psl2r$ such that
for any interval $I$ it holds that
\begin{equation*}
U_0(g)\A_0(I)U_0(g)^* = \A_0(gI), \mbox{ for } g \in \psl2r.
\end{equation*}
\item[(4)]{\bf Positivity of energy.} The generator of the one-parameter subgroup of
rotations in the representation $U_0$ is positive.
\item[(5)] {\bf Existence of the vacuum.} There is a unique (up to a phase) unit vector $\Omega_0$ in
$\H_0$ which is invariant under the action of $U_0$,
and cyclic for $\bigvee_{I \Subset S^1} \A_0(I)$.
\end{enumerate}
Among consequences of these axioms are (see \cite{FG})
\begin{enumerate}
\item[(6)] {\bf Reeh-Schlieder property.} The vector $\Omega_0$ is cyclic and separating for each $\A_0(I)$.
\item[(7)] {\bf Additivity.} If $I = \bigcup_i I_i$, then $\A_0(I) = \bigvee_i \A_0(I_i)$.
\item[(8)] {\bf Haag duality in $S^1$.} For an interval $I$ it holds that $\A_0(I)' = \A_0(I')$,
where $I'$ is the interior of the complement of $I$ in $S^1$.
\item[(9)] {\bf Bisognano-Wichmann property.} The modular group $\Delta_0^{it}$ of $\A_0(\RR_+)$
with respect to $\Omega_0$ is equal to $U_0(\delta(-2\pi t))$, where
$\delta$ is the one-parameter group of dilations.
\end{enumerate}
It is known that the positivity of energy is equivalent to the positivity of
the generator of translations \cite{Longo08}.

We say that $\A_0$ is {\bf strongly additive} if it holds that
$\A_0(I) = \A_0(I_1)\vee\A_0(I_2)$, where $I_1$ and $I_2$ are intervals obtained
by removing an interior point of $I$.

Let $\diffs1$ be the group of orientation-preserving
diffeomorphisms of the circle $S^1$. This group naturally includes $\psl2r$.
If the representation $U_0$ associated to a M\"obius covariant net $\A_0$
extends to a projective unitary representation of $\diffs1$ such that
for any interval $I$ and $x \in \A_0(I)$ it holds that
\begin{gather*}
U_0(g)\A_0(I)U_0(g)^* = \A_0(gI), \mbox{ for } g \in \diffs1,\\
U_0(g)xU_0(g)^* = x, \mbox{ if } \supp(g) \subset I^\prime,
\end{gather*}
then $\A_0$ is said to be a {\bf conformal net on $S^1$} or to be
{\bf diffeomorphism covariant} ($\supp(g) \subset I^\prime$
means that $g$ acts identically on $I$).

Let $\A_0$ be a M\"obius covariant net on $S^1$. As in Section \ref{conformalnets},
we identify $S^1$ and $\RR\cup\{\infty\}$ by the Cayley transform.
Under this identification, for an interval $I \Subset \RR$ we write
$\A_0(I)$.

Let $\A_\pm$ be two M\"obius covariant nets on $S^1$ defined on the Hilbert spaces $\H_\pm$ 
with the vacuum vectors $\Omega_\pm$ and the representations of $U_\pm$.
We define a two-dimensional net $\A$ as follows:
Let $\L_\pm := \{(t_0,t_1) \in \RR^2: t_0\pm t_1 = 0\}$ be two lightrays.
For a double cone $O$ of the form $I\times J$ where
$I\subset \L_+$ and $J\subset \L_-$, we set $\A(O) = \A_+(I)\otimes\A_-(J)$.
For a general open region $O \subset \RR^2$, we set $\A(O) := \bigvee_{I\times J} \A(I\times J)$
where the union is taken among intervals such that $I\times J \subset O$.
If we take the vacuum vector as $\Omega := \Omega_+\otimes\Omega_-$ and
define the representation $U$ of $\psl2r\times\psl2r$
by $U(g_+\times g_-) := U_+(g_1)\times U_-(g_2)$, it is easy to see that
all the conditions for M\"obius covariant net follow from
the corresponding properties of nets on $S^1$.
We say that such $\A$ is {\bf chiral}.
If $\A_\pm$ are conformal, then the representation $U$ naturally extends to a projective representation
of $\diffs1\times\diffs1$. Hence $\A$ is a two-dimensional conformal net.

\subsection{The maximal chiral subnet and collision states}\label{maximal}
As we have seen in Section \ref{chiralnets}, from a pair of M\"obius covariant nets
on $S^1$ we can construct a two-dimensional M\"obius covariant net.
In this section we explain a converse procedure: Namely, starting with
a two-dimensional M\"obius covariant net $\A$, we find a pair of M\"obius covariant nets $\A_\pm$
on $S^1$ which are maximally contained in $\A$. In general, such a chiral
part is just a subnet of the original net.
Moreover, we show that the subspace generated
by this subnet from the vacuum coincides with the subspace of collision states of waves.
It follows that a M\"obius covariant net is asymptotically complete if and only if it is chiral.

It is possible to define  chiral components in several ways.
We follow the definition by Rehren \cite{Rehren}.
Recall that the two-dimensional M\"obius group $\2dmob$
is a direct product of two copies of the universal covering group of $\psl2r$.
We write this as $\widetilde{G}_\L\times\widetilde{G}_\R$, where
$\widetilde{G}_\L$ and $\widetilde{G}_\R$ are copies of $\overline{\psl2r}$
\footnote{Generally, the symbol $\widetilde{G}$ is used to indicate the universal covering
group for a group $G$, but for $\psl2r$ it is customary to use
the notation $\overline{\psl2r}$ for its universal cover.}.
\begin{definition}\label{chiral-component}
For a two-dimensional M\"obius net $\A$ we define
nets of von Neumann algebras on $\RR$ by the following:
For an interval $I \subset \RR$ we set the von Neumann algebras
\begin{gather*}
\A_\L(I) := \A(I\times J)\cap U(\widetilde{G}_\R)',\\
\A_\R(J) := \A(I\times J)\cap U(\widetilde{G}_\L)'.
\end{gather*}
The definition of $\A_\L$ (respectively $\A_\R$) does not depend
on the choice of $J$ (respectively of $I$) since $\widetilde{G}_\R$
(respectively $\widetilde{G}_\L$) acts transitively on the family
of intervals.
\end{definition}
If the net $\A$ is conformal, then the components $\A_\L$ and $\A_\R$ are
nontrivial (see Remark \ref{nontrivial-components})

\begin{lemma}[\cite{Rehren}]
The nets $\A_\L, \A_\R$ extend to M\"obius nets on $S^1$.
For a fixed double cone $I\times J$, there holds
\[
\A_\L(I)\vee\A_\R(J) \simeq \A_\L(I)\otimes\A_\R(J).
\]
\end{lemma}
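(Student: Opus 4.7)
The plan is to verify each axiom of a M\"obius covariant net on $S^1$ for $\A_\L$ (and symmetrically $\A_\R$), and then to extract the tensor-product factorisation from the product structure of the modular group provided by Bisognano--Wichmann on $\E$.

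Isotony is immediate from the defining intersection. For M\"obius covariance one uses that $\widetilde{G}_\L$ and $\widetilde{G}_\R$ commute inside $\2dmob$, so conjugation by $U(g)$ with $g \in \widetilde{G}_\L$ preserves $U(\widetilde{G}_\R)'$; combined with $U(g)\A(I\times J)U(g)^* = \A(gI\times J)$ this gives $U(g)\A_\L(I)U(g)^* = \A_\L(gI)$. Positivity of the rotation generator of the restricted $\widetilde{G}_\L$-representation follows from positivity of the generator of the lightlike translations $T_+$ (a component of the two-dimensional energy-momentum, whose joint spectrum lies in $V_+$) together with the standard equivalence of translation- and rotation-positivity for $\overline{\psl2r}$-representations \cite{Longo08}. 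One takes $\Omega$ as vacuum and restricts $\A_\L$ to $\H_\L := \overline{\bigvee_I \A_\L(I)\Omega}$, on which $\Omega$ is cyclic by construction; the extension to $S^1$ is then standard using M\"obius covariance.

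The only real geometric input is locality on $S^1$, which exploits the key fact that $\A_\L(I)$ is independent of the auxiliary interval on the right lightray. Given disjoint $I_1, I_2\subset S^1$, I select $J_1,J_2$ disjoint and in the opposite order so that, by the causal-complement description of double cones on $\E$ recalled in Section~\ref{conformalnets}, $I_1\times J_1$ and $I_2\times J_2$ are spacelike separated. Then $\A_\L(I_k)\subset\A(I_k\times J_k)$ together with locality of $\A$ yields $[\A_\L(I_1),\A_\L(I_2)]=0$. The same trick, with $I_2$ disjoint from $I$ and $J_2$ disjoint from $J$ of opposite orientation, shows that $\A_\L(I)\subset\A(I\times J_2)$ commutes with $\A_\R(J)\subset\A(I_2\times J)$, a commutation relation I will need in the next step.

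For the tensor-product isomorphism I invoke Bisognano--Wichmann in $\E$: the modular group of $\A(I\times J)$ with respect to $\Omega$ is a product $U(\Lambda^I_t)U(\Lambda^J_t)$ whose two factors lie in the commuting subgroups $\widetilde{G}_\L$ and $\widetilde{G}_\R$. Since $\A_\L(I)\subset U(\widetilde{G}_\R)'$ and $\A_\R(J)\subset U(\widetilde{G}_\L)'$, each subalgebra is globally preserved by the full modular group and the modular action on it reduces to that of a single factor. Takesaki's theorem then produces vacuum-preserving conditional expectations onto $\A_\L(I)$ and onto $\A_\R(J)$, and combined with the commutation obtained above this forces the vacuum state to factorise across the two subalgebras, promoting the natural map $x\otimes y\mapsto xy$ to the desired isomorphism $\A_\L(I)\otimes\A_\R(J) \simeq \A_\L(I)\vee\A_\R(J)$. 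The main obstacle is exactly this last step: commutation alone is not enough for a tensor-product decomposition, and one must genuinely use the factorisation of the modular group to encode both the existence of the conditional expectations and the statistical (W*-) independence of the two chiral subalgebras in the vacuum state.
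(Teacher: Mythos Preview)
The paper does not supply its own proof of this lemma; it is quoted from \cite{Rehren}. Your outline of the M\"obius-net axioms for $\A_\L$ follows Rehren's reasoning and is essentially correct; the locality argument via a judicious choice of right-lightray intervals $J_1,J_2$ is exactly the point. You should also check uniqueness of the vacuum on $\H_\L$, which follows because a $\widetilde{G}_\L$-invariant vector in $\H_\L$ is already $\widetilde{G}_\R$-invariant (by definition of $\A_\L$), hence a multiple of $\Omega$ by uniqueness of the two-dimensional vacuum.

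For the tensor-product statement there is a real gap. You are right that Bisognano--Wichmann makes the modular group of $\A(I\times J)$ a product and that Takesaki's theorem then yields $\omega$-preserving conditional expectations $E_\L,E_\R$ onto each chiral subalgebra. But the step ``this forces the vacuum state to factorise'' is not justified: commutation of $\A_\L(I)$ and $\A_\R(J)$ together with the existence of such expectations only gives $E_\L(b)\in Z(\A_\L(I))$ for $b\in\A_\R(J)$, and you have not shown $\A_\L(I)$ is a factor at this stage (factoriality would close the argument, but it requires the full M\"obius-net structure on $\H_\L$, in particular irreducibility, which you have not yet established). The direct route, and the one actually used in \cite{Rehren}, avoids this circularity: for $a\in\A_\L(I)$ and $b\in\A_\R(J)$ one has $a^*\Omega\in\H_+$ and $b\Omega\in\H_-$ since these operators commute with $U(\widetilde G_\R)$ and $U(\widetilde G_\L)$ respectively; the projections $P_+,P_-$ commute and $P_+P_-$ projects onto $\H_+\cap\H_-=\CC\Omega$, so
\[
\langle a^*\Omega,b\Omega\rangle=\langle a^*\Omega,P_+P_-\,b\Omega\rangle=\omega(b)\langle a^*\Omega,\Omega\rangle=\omega(a)\omega(b).
\]
This state factorisation, together with the commutation you proved, then gives the spatial tensor product isomorphism by the standard GNS argument. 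The product structure of the modular group is therefore not the operative mechanism; what matters is the orthogonality of the chiral invariant subspaces modulo $\CC\Omega$, which is a consequence of Proposition~\ref{classification-psl2r} and the uniqueness of the vacuum.
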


Then we determine $\H^\tout = \H^\tin$ in terms of chiral components.
The key is the following lemma.
\begin{lemma}[\cite{Rehren}, Lemma 2.3]\label{invariant-vectors}
Let $\A$ be a M\"obius covariant net.
The subspace $\overline{\A_\L(I)\Omega}$ coincides with the subspace of
$\widetilde{G}_\R$-invariant vectors.
A corresponding statement holds for $\A_\R(J)$.
\end{lemma}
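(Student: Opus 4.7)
The easy inclusion $\overline{\A_\L(I)\Omega} \subset P\H$, where $P$ denotes the projection onto $\widetilde{G}_\R$-invariant vectors, is immediate since elements of $\A_\L(I)$ commute with $U(\widetilde{G}_\R)$ and $\Omega$ is $\widetilde{G}_\R$-invariant. My plan for the reverse direction is to produce, for each $x \in \A(I\times J)$, an operator $y \in \A_\L(I)$ with $y\Omega = Px\Omega$. Combined with Reeh-Schlieder (which makes $P\,\A(I\times J)\Omega$ dense in $P\H$), this will finish the proof, and the statement for $\A_\R(J)$ then follows by symmetry.

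To construct $y$, I would average $x$ against the one-parameter dilation subgroup $\d^\R_s \subset \widetilde{G}_\R$ that stabilises the interval $J$, so that $x_s := \Ad U(\d^\R_s)(x)$ remains in $\A(I\times J)$ for every $s$. The representation theory of Section \ref{representation-theory} (Proposition \ref{classification-psl2r}, together with the absence of dilation-fixed vectors in any non-trivial positive-energy irrep of $\overline{\psl2r}$, e.g. by Howe-Moore mixing) gives $U(\d^\R_s) \to P$ in the weak operator topology as $s \to +\infty$. By Banach-Alaoglu the bounded net $\{x_s\}$ has a weak operator cluster point $y \in \A(I\times J)$, and passing to a subnet $s_\alpha$ with $x_{s_\alpha} \to y$ weakly,
\[
y\Omega \;=\; \wlim_\alpha x_{s_\alpha}\Omega \;=\; \wlim_\alpha U(\d^\R_{s_\alpha})\,x\Omega \;=\; Px\Omega.
\]

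The main obstacle is verifying $y \in U(\widetilde{G}_\R)'$, for which I would run a connectedness argument based on Reeh-Schlieder separation. For any $g \in \widetilde{G}_\R$ close enough to the identity, $gJ$ sits inside a slightly enlarged interval $\tilde J \supset J$, so both $y$ and $U(g)yU(g)^*$ belong to $\A(I\times \tilde J)$. Since $Px\Omega \in P\H$ is fixed by $U(g)$,
\[
U(g)yU(g)^*\Omega \;=\; U(g)y\Omega \;=\; U(g)Px\Omega \;=\; Px\Omega \;=\; y\Omega,
\]
and the separating property of $\Omega$ for $\A(I\times \tilde J)$ forces $U(g)yU(g)^* = y$. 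The set of $g \in \widetilde{G}_\R$ with this property is a closed subgroup containing a neighbourhood of the identity, hence coincides with the connected group $\widetilde{G}_\R$. Thus $y \in \A(I\times J) \cap U(\widetilde{G}_\R)' = \A_\L(I)$ and $Px\Omega = y\Omega \in \A_\L(I)\Omega$, which closes the argument.
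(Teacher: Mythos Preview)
The paper does not give its own proof of this lemma; it simply quotes Rehren \cite{Rehren}, Lemma~2.3. So there is nothing in the present manuscript to compare against line by line.

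Your argument is correct and is in fact the standard route (and essentially Rehren's): average along a one-parameter subgroup of $\widetilde{G}_\R$ that preserves $J$, take a weak-$*$ cluster point inside $\A(I\times J)$, and then upgrade invariance from a neighbourhood of the identity to all of $\widetilde{G}_\R$ using the separating vacuum and connectedness. Two small points worth making explicit. First, the identification of $\d^\R$-invariant vectors with $\widetilde{G}_\R$-invariant vectors, which you invoke via Howe--Moore, is exactly what the paper records in Proposition~\ref{classification-psl2r} and uses in Lemma~\ref{dilate}: in every nontrivial positive-energy irreducible of $\overline{\psl2r}$ the dilations have vanishing matrix coefficients at infinity, so $\wlim_{s\to\infty}U(\d^\R_s)=P$. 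Second, in the last step you are implicitly using that for $g$ near the identity $gJ$ is contained in a \emph{bounded} interval $\tilde J$, so that $\A(I\times\tilde J)$ is a proper local algebra for which $\Omega$ is separating; this is automatic from the local action of $\widetilde{G}_\R$ on $\RR$ discussed in Section~\ref{conformalnets}. With these remarks your proof is complete.
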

\begin{remark}
The proof of this lemma requires M\"obius covariance of the net.
On the other hand, in Section \ref{proof-noninteraction},
where we utilized the fact that the representation $U$ of $\axb\times\axb$
extends to $\2dmob$, what was really needed is that $U$ decomposes into a
direct sum of copies of the four
irreducible representations in Corollary \ref{classification-axb}.
\end{remark}

\begin{theorem}\label{wave-space}
It holds that $\H^\tout = \H^\tin = \overline{\A_\L(I)\vee\A_\R(J)\Omega}$.
\end{theorem}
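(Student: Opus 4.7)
The equality $\H^\tin=\H^\tout$ is given by Theorem~\ref{noninteraction-bf}, so the task reduces to identifying this common subspace with $\overline{(\A_\L(I)\vee\A_\R(J))\Omega}$. The first step is to match the two chiral cyclic subspaces with the wave spaces. By Lemma~\ref{invariant-vectors}, $\overline{\A_\L(I)\Omega}$ is the subspace of $U(\widetilde{G}_\R)$-invariant vectors. Applying Proposition~\ref{classification-psl2r} summand-wise, which is justified by the direct-sum decomposition of $U$ provided by Corollary~\ref{classification-axb}, this subspace coincides with the set of vectors invariant under the lightlike translation subgroup of $\widetilde{G}_\R$, i.e.\ with $\H_-$. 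Symmetrically, $\overline{\A_\R(J)\Omega}=\H_+$.

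The heart of the proof is the identity $ab\Omega=(b\Omega)\timesi(a\Omega)$ for all $a\in\A_\L(I)$ and $b\in\A_\R(J)$. Since the chiral algebra $\A_\L(I)=\A(I\times K)\cap U(\widetilde{G}_\R)'$ is independent of the transverse interval $K$, the element $a$ lies in $\A(I\times K)$ for every $K$, and similarly $b\in\A(I'\times J)$ for every $I'$. I choose $K_0$ strictly to the past of $J$ and $I_0$ strictly to the future of $I$, so that the double cones $I\times K_0\ni a$ and $I_0\times J\ni b$ are strictly timelike separated with $b$ in the future of $a$. Remark~\ref{commutation-asymptotic-field} then yields $[\Phi^\tin_+(b),a]=0$. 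Combining this with $\Phi^\tin_-(a)\Omega=P_-a\Omega=a\Omega$ and $\Phi^\tin_+(b)\Omega=P_+b\Omega=b\Omega$ (both forced by the identifications of the previous paragraph), I obtain
\[
(b\Omega)\timesi(a\Omega)=\Phi^\tin_+(b)\Phi^\tin_-(a)\Omega=\Phi^\tin_+(b)\,a\Omega=a\,\Phi^\tin_+(b)\Omega=ab\Omega.
\]

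With this identity, both inclusions follow quickly. The chiral algebras commute thanks to the tensor product decomposition $\A_\L(I)\vee\A_\R(J)\simeq\A_\L(I)\otimes\A_\R(J)$ of the preceding lemma, so vectors of the form $ab\Omega$ are linearly dense in $\overline{(\A_\L(I)\vee\A_\R(J))\Omega}$; the identity above gives $\overline{(\A_\L(I)\vee\A_\R(J))\Omega}\subset\H^\tin$. Conversely, given $\xi_+\in\H_+=\overline{\A_\R(J)\Omega}$ and $\xi_-\in\H_-=\overline{\A_\L(I)\Omega}$, take sequences $b_n\in\A_\R(J)$, $a_n\in\A_\L(I)$ with $b_n\Omega\to\xi_+$ and $a_n\Omega\to\xi_-$. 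Lemma~\ref{collision-states} shows that $(\eta_+,\eta_-)\mapsto\eta_+\timesi\eta_-$ is bilinear with $\|\eta_+\timesi\eta_-\|=\|\eta_+\|\cdot\|\eta_-\|$, hence jointly norm-continuous, so $\xi_+\timesi\xi_-=\lim_n a_nb_n\Omega\in\overline{(\A_\L(I)\vee\A_\R(J))\Omega}$. Since collision states span $\H^\tin$ densely, the reverse inclusion follows.

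The main obstacle I anticipate is the geometric setup in the second paragraph: the computation relies on the fact that an element of $\A_\L(I)$ (resp.\ $\A_\R(J)$) is localized in $\A(I\times K)$ for every $K$ (resp.\ in $\A(I'\times J)$ for every $I'$). Without this freedom one cannot maneuver $a$ and $b$ into strictly timelike-separated regions, Remark~\ref{commutation-asymptotic-field} ceases to apply, and the key identity breaks down. The entire identification of $\H^\tin$ with the chiral cyclic subspace rests on this geometric freedom, which is intrinsic to the M\"obius covariant structure.
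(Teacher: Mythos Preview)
Your argument is correct, but it takes a longer road than the paper's. There is also a harmless labeling slip: with the paper's conventions the translation subgroup of $\widetilde G_\R$ is $T_+$, so $\overline{\A_\L(I)\Omega}=\H_+$ and $\overline{\A_\R(J)\Omega}=\H_-$, the reverse of what you wrote. Your computation is internally consistent after a global swap of $\pm$.

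The substantive difference is this. You establish $ab\Omega=(b\Omega)\timesi(a\Omega)$ by exploiting the localization freedom of chiral elements to place $a$ and $b$ in timelike separated double cones and then invoking Remark~\ref{commutation-asymptotic-field}. The paper bypasses all of that with a one-line observation: since $\A_\L(I)\subset U(\widetilde G_\R)'$ by definition, any $x\in\A_\L(I)$ commutes with the lightlike translations $T_+(t)$, so $x_+(h_\T)=x$ identically and hence $\Phi^\tin_+(x)=x$; likewise $\Phi^\tin_-(y)=y$ for $y\in\A_\R(J)$. Then $x\Omega\timesi y\Omega=\Phi^\tin_+(x)\Phi^\tin_-(y)\Omega=xy\Omega$ immediately. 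This buys two things: no geometric maneuvering is needed, and since the identical argument gives $\Phi^\tout_\pm=\id$ on the chiral algebras, one obtains $\H^\tout=\overline{\A_\L(I)\vee\A_\R(J)\Omega}$ directly, without appealing to Theorem~\ref{noninteraction-bf}. Indeed the paper packages this as an independent second proof of noninteraction (Corollary~\ref{noninteraction-chiral}). Your route, by contrast, must import Theorem~\ref{noninteraction-bf} at the outset to get $\H^\tin=\H^\tout$, and the ``main obstacle'' you flag---the need to localize chiral elements in timelike separated regions---simply does not arise in the paper's approach.
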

\begin{proof}
As we have seen in Proposition \ref{classification-psl2r}, the spaces of
invariant vectors with respect to $\widetilde{G}_\R, \widetilde{G}_\L$ and
to positive/negative lightlike translations coincide.
Lemma \ref{invariant-vectors} tells us that $\overline{\A_\L(I)\Omega} = \H_+$
and $\overline{\A_\R(J)\Omega} = \H_-$.

As elements in $\A_\L$ are fixed under the action of $\widetilde{G}_\R$,
for $x \in \A_\L(I)$ it holds that $\Phi^\tin_+(x) = x$. 
Similarly we have $\Phi^\tin_-(y) = y$ for $y \in \A_\R(J)$.
Thus we see that
\[
x\Omega \timesi y\Omega = \Phi^\tin_+(x)\Phi^\tin_-(y)\Omega = xy\Omega \in \overline{\A_\L(I)\vee\A_\R(J)\Omega}
\]
Conversely, since $\A_\L(I)$ and $\A_\R(J)$ commute, any element in $\A_\L(I)\vee\A_\R(J)$ can
be approximated strongly by linear combinations of elements of product form $xy$.
This implies the required equality of subspaces.
\end{proof}

As a simple corollary, we have another proof of noninteraction of waves
and a relation between asymptotic completeness and chirality:
\begin{corollary}\label{noninteraction-chiral}
Let $\A$ be a M\"obius covariant net.
\begin{itemize}
\item[(a)] (same as Theorem \ref{noninteraction-bf})
We have the equality $\xi_+\timesi\xi_- = \xi_+\timeso\xi_-$ for any
pair $\xi_+ \in \H_+$ and $\xi_- \in \H_-$. In particular,
such waves do not interact.
\item[(b)] $\H^{\tout}=\H^{\tin}=\H$ if and only if $\A$ coincides with its maximal chiral subnet.
\end{itemize}
\end{corollary}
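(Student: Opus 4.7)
Both statements flow from Theorem \ref{wave-space}, which identifies $\H^\tin = \H^\tout$ with the closure $\overline{\A_\L(I)\vee\A_\R(J)\Omega}$. For part (a), the same argument used in proving Theorem \ref{wave-space} shows that any $x \in \A_\L(I)$ commutes with the lightlike translation $T_+$ (which lies in $\widetilde{G}_\R$, whose commutant contains $\A_\L(I)$ by Definition \ref{chiral-component}), so $x_+(h_\T) = x$ for every $\T$, giving $\Phi^\tin_+(x) = \Phi^\tout_+(x) = x$; symmetrically, $\Phi^\tin_-(y) = \Phi^\tout_-(y) = y$ for $y \in \A_\R(J)$. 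Hence $x\Omega \timesi y\Omega = xy\Omega = x\Omega \timeso y\Omega$, and since such vectors are dense in $\H_+$ and $\H_-$ by Lemma \ref{invariant-vectors}, the isometric bilinearity from Lemma \ref{collision-states} extends the equality to all pairs of waves.

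The forward direction of (b) is immediate: if $\A$ coincides with its maximal chiral subnet, so that $\A(I\times J) = \A_\L(I)\vee\A_\R(J)$, then Theorem \ref{wave-space} combined with Reeh-Schlieder gives $\H^\tin = \overline{\A(I\times J)\Omega} = \H$. For the converse, I would assume $\H^\tin = \H$, fix a double cone $I\times J$, and set $\N = \A_\L(I)\vee\A_\R(J) \subset \M = \A(I\times J)$. Then $\Omega$ is cyclic for $\N$ by the assumption together with Theorem \ref{wave-space}, and separating by Reeh-Schlieder. The plan is to apply Takesaki's theorem to produce an $\omega$-preserving normal conditional expectation $E: \M \to \N$ and then deduce $E = \id$. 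Takesaki requires that the modular automorphism group of $\M$ preserve $\N$: by the Bisognano-Wichmann property in $\E$ (property 10M), this modular group is implemented by a one-parameter subgroup of $\2dmob = \widetilde{G}_\L \times \widetilde{G}_\R$ that stabilises $I\times J$. Since its $\widetilde{G}_\L$-component commutes with $U(\widetilde{G}_\R)$ and its $\widetilde{G}_\R$-component permutes $U(\widetilde{G}_\R)'$, both components preserve $\A(I\times J) \cap U(\widetilde{G}_\R)' = \A_\L(I)$, and symmetrically $\A_\R(J)$, so $\N$ is globally modular-invariant.

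Once Takesaki supplies $E$, its characteristic identity $E(x)\Omega = e_\N\, x\Omega$, where $e_\N$ is the projection onto $\overline{\N\Omega}$, combined with the hypothesis $e_\N = 1$, forces $(E(x) - x)\Omega = 0$; the separating property of $\Omega$ then gives $E(x) = x$, so $\M = \N$. Varying the double cone via M\"obius covariance, this equality holds for every $I\times J$, hence $\A$ coincides with its maximal chiral subnet. The main technical burden is the verification of the modular invariance condition required for Takesaki's theorem, where the direct-product structure of the two-dimensional M\"obius group enters essentially.
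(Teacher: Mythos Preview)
Your proof is correct and follows essentially the same route as the paper's. For part (a) the paper invokes the external results \cite{DT10-1,DT10-2} on triviality of the $S$-matrix for chiral nets, whereas you spell out directly that $\Phi^{\tin}_+(x)=\Phi^{\tout}_+(x)=x$ for $x\in\A_\L(I)$ (and symmetrically for $\A_\R$), which is the same computation already appearing inside the proof of Theorem~\ref{wave-space}; for part (b) both arguments use Takesaki's theorem via Bisognano--Wichmann in $\E$, and your explicit verification of modular invariance of $\A_\L(I)\vee\A_\R(J)$ is exactly what the paper's phrase ``since both $\A_\L\otimes\A_\R$ and $\A$ are M\"obius covariant'' abbreviates.
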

\begin{proof}
Theorem \ref{wave-space} tells us that the space of collision states of waves is generated by
chiral observables $\A_\L(I)\vee\A_\R(J)$. Lemma \ref{asymptotic-field}
assures that to investigate the S-matrix it is enough to consider
observables which generate the collision states. Then, on the space of waves
$\H_0 = \overline{\A_\L(I)\vee\A_\R(J)\Omega}$ and regarding the chiral observables,
it has been shown that a chiral net is asymptotically complete ($\H^\tout = \H^\tin = \H_0$)
and the S-matrix is trivial \cite{DT10-1,DT10-2}.

If $\H_0 \neq \H$, then by the Reeh-Schlieder property, the full net $\A$ must contain
non-chiral observables, and $\A_\L\otimes\A_\R \neq \A$. If $\H_0 = \H$,
since both $\A_\L\otimes\A_\R$ and $\A$ are M\"obius covariant,
there is a conditional expectation $E_O:\A(O) \to \A_\L\otimes\A_\R(O)$ which 
preserves $\<\cdot \Omega,\Omega\>$, but $E_O$ is in fact the identity map
since $\Omega$ is cyclic for $\A_\L\otimes\A_\R(O)$ (see Theorem \ref{takesaki}).
\end{proof}

\subsection{How large is the space of collision states?}\label{how-large}
We have seen that a part $\overline{\A_\L(I)\vee\A_\R(J)\Omega}$ of the Hilbert space $\H$
can be interpreted as the space of collision states of waves and that these waves do not
interact. Then of course it is natural to investigate the particle aspects of the
orthogonal complement of this space. We do not go into the detail of this problem here,
but restrict ourselves to a few comments.

The algebra of chiral observables $\A_\L\otimes\A_\R$ is represented on the
the full Hilbert space $\H$ in a reducible way.
One can decomposes $\H$ into a direct sum of the irreducible components
with respect to $\A_\L\otimes\A_\R$:
\[
\H = \bigoplus_i \H_{\rho_i},
\]
where $\{\rho_i\}$ are irreducible representations (see \cite{LR})
of $\A_\L\otimes\A_\R$. When $\A_\L$ and $\A_\R$ are completely rational
\cite{KLM}, then the representations $\rho_i$ are tensor products
$\rho^\L_i\otimes\rho^\R_i$ of representations $\rho^\L_i$ of $\A_\L$ and
$\rho^\R_i$ of $\A_\R$.
As we consider the maximal chiral subnet introduced by Rehren,
the vacuum representations $\rho^\L_0, \rho^\R_0$ appear only once,
in the form $\rho^\L_0\otimes\rho^\R_0$ \cite[Corollary 3.5]{Rehren}.
This representation $\rho^\L_0\otimes\rho^\R_0$ is realized
on the subspace $\H_0 = \overline{\A_\L(I)\vee\A_\R(J)\Omega}$.
Theorem \ref{wave-space} says that
the waves are contained only in $\H_0$.

Hence, when $\A$ is not chiral,
the space of collision states is at most a half of the full Hilbert space,
if we simply count the number of representations which appear in the
decomposition. A conceptually more satisfactory measure is
the index of the inclusion $[\A:\A_\L\otimes\A_\R]$. The minimal value of
the index of a nontrivial inclusion is $2$, which would mean again that
waves occupy half of the available space.
This case indeed happens: Let $\A_0$ be a M\"obius covariant net on $S^1$
with $\ZZ_2$ symmetry. If we define $\A = (\A_0\otimes\A_0)^{\ZZ_2}$, 
where $\ZZ_2$ acts on $\A_0\otimes\A_0$ by the diagonal action and
$(\A_0\otimes\A_0)^{\ZZ_2}$ is the fixed point subnet of this action, then $\A$ has
$\A_0^{\ZZ_2}\otimes\A_0^{\ZZ_2}$ as the maximal chiral subnet and the
index $[\A:\A_0^{\ZZ_2}\otimes\A_0^{\ZZ_2}]$ is $2$. But in this case
it is natural to say that the orthogonal complement can be interpreted as collision
states in a bigger net $\A_0\otimes\A_0$ which do not interact.
In general, if a given net is not the fixed point, such a reinterpretation of
the orthogonal complement as waves is impossible and the index is typically
larger than $2$.  New ideas are needed to clarify this general case.

\section{Asymptotic fields given through conditional expectations}
\subsection{Characterization of noninteracting nets}\label{characterization}
In \cite{Buchholz}, in the general setting of Poincar\'e covariant nets,
Buchholz has proved that timelike commutativity implies the absence of
interaction. The purpose of this subsection is to show a strengthened converse,
namely that if a two-dimensional Poincar\'e covariant net is asymptotically complete
and noninteracting, then under natural assumptions
it is (unitarily equivalent to) a chiral M\"obius covariant net.

For this purpose, it is appropriate to extend the definition
of a net also to unbounded regions.
Let $\A$ be a Poincar\'e covariant net.
For an arbitrary open region $O$, we define $\A(O) := \bigvee_{D\subset O}\A(D)$,
where $D$ runs over all bounded regions included in $O$
(this definition coincides with the original net if $O$ is bounded).
Among important unbounded regions are {\bf wedges}.
The standard left and right wedges are defined as follows:
\begin{eqnarray*}
W_\L &:=& \{(t_0,t_1): t_0 > t_1, t_0 < -t_1\}\\
W_\R &:=& \{(t_0,t_1): t_0 < t_1, t_0 > -t_1\}
\end{eqnarray*}
The regions $W_\L$ and $W_\R$ are invariant under Lorentz boosts.
The causal complement of $W_\L$ is $W_\R$ (and vice versa). All the regions
obtained by translations of standard wedges are still
called left- and right- wedges, respectively.
Moreover, any double cone is obtained as the intersection of a left wedge and a right wedge.
Let $O'$ denote the causal complement of $O$ in $\RR^2$ (not in $\E$).
It holds that $W_\L' = W_\R$, and
if $D = (W_\R+a)\cap(W_\L+b)$ is a double cone, $a,b\in\RR^2$, then
$D' = (W_\L+a)\cup(W_\R+b)$. It is easy to see that $\Omega$ is still cyclic and
separating for $\A(W_\R)$ and $\A(W_\L)$.

Let us introduce some additional assumptions on the structure of nets.
\begin{itemize}
\item {\bf Haag duality.} If $O$ is a wedge or a double cone, then it holds that
$\A(O)' = \A(O')$.
\item {\bf Bisognano-Wichmann property.} The modular group $\Delta^{it}$ of $\A(W_\R)$
with respect to $\Omega$ is equal to $U(\Lambda(-2\pi t))$, where
$\Lambda(t) = \left(\begin{matrix} \cosh t & \sinh t\\ \sinh t & \cosh t \end{matrix}\right)$ denotes
the Lorentz boost.
\end{itemize}
Duality for wedges (namely $\A(W_\L)' = \A(W_\R)$) follows from the Bisognano-Wichmann
property (see Proposition \ref{wedge-duality}).
If $\A$ is M\"obius covariant, then the Bisognano-Wichmann property is automatic
\cite{BGL}, and Haag duality is equivalent to strong additivity \cite{Rehren}.
Apart from M\"obius nets, these properties are common even in massive interacting
models \cite{Lechner}. Furthermore, starting with $\A(W_\L)$, it is possible to
construct a net which satisfies both properties \cite{Borchers, Lechner}.
Hence we believe that
these additional assumptions are natural and throughout this section
we assume that the net $\A$ satisfies them.

Let $\A$ be a Poincar\'e covariant net satisfying the Bisognano-Wichmann property.
We start with general remarks on asymptotic fields.
Let $\N_+^\tout$ be the von Neumann algebra generated by $\Phi_+^\tout(x)$
where $x \in \A(O)$, $O \subset W_\R$ and $O$ is bounded
\footnote{From Lemma \ref{asymptotic-field-is-expectation} it is immediate that
$\Phi_+^\tout$ naturally extends to $\A(W_\R)$, but
it is convenient to define $\N_+^\tin$ with bounded regions since
we see the relation between $\Phi_+^\tin$ and $\Phi_-^\tout$ in Lemma
\ref{in-out-equality}.}.

\begin{lemma}\label{asymptotic-field-is-expectation}
The the map $\Phi_+^\tout$ which gives the asymptotic field
is a conditional expectation (cf. \ref{expectation})
from $\A(W_\R)$ onto $\N_+^\tout$
which preserves the vacuum state $\omega := \<\Omega, \cdot \Omega\>$.
\end{lemma}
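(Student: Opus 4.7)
The natural strategy is to invoke Takesaki's theorem (cf.\ Theorem~\ref{takesaki} in \ref{expectation}) on the existence of a unique $\omega$-preserving normal conditional expectation onto a modular-invariant subalgebra, and then to identify that expectation with the extended asymptotic-field map. As a preliminary I would check that $\N_+^\tout \subset \A(W_\R)$: for $x \in \A(O)$ with $O$ bounded and $O \subset W_\R$, the smeared operator $x_+(h_\T)$ is localised in a neighbourhood of the translates $O+(t,t)$ for $t \in \supp h_\T$, and for sufficiently large $\T>0$ every such translate is contained in $W_\R$; passing to the strong limit, $\Phi_+^\tout(x) \in \A(W_\R)$.

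Next I would establish modular invariance of $\N_+^\tout$. The Bisognano-Wichmann assumption identifies the modular group $\sigma^\omega_t$ of $\A(W_\R)$ with $\Ad U(\Lambda(-2\pi t))$. Since Lorentz boosts preserve $W_\R$ and send bounded subregions to bounded subregions, the Poincar\'e covariance of asymptotic fields in Lemma~\ref{asymptotic-field} gives
\[
\Ad U(\Lambda(s))\,\Phi_+^\tout(x) \;=\; \Phi_+^\tout\bigl(\Ad U(\Lambda(s))\, x\bigr) \;\in\; \N_+^\tout
\]
for every generator of $\N_+^\tout$, hence $\sigma^\omega_t(\N_+^\tout) = \N_+^\tout$. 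Takesaki's theorem then supplies a unique normal $\omega$-preserving conditional expectation $E \colon \A(W_\R) \to \N_+^\tout$, characterised by $E(x)\Omega = p\, x\Omega$, where $p$ is the orthogonal projection onto $\overline{\N_+^\tout \Omega}$.

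The remaining task is to show $\overline{\N_+^\tout \Omega} = \H_+$ and thereby identify $E$ with $\Phi_+^\tout$. The inclusion $\subset$ follows because $\Phi_+^\tout(x)\Omega = P_+ x\Omega \in \H_+$ and $\Phi_+^\tout(\cdot)$ preserves $\H_+$ (Lemma~\ref{asymptotic-field}), so an induction on products of generators keeps $\N_+^\tout\Omega$ inside $\H_+$; the reverse inclusion uses Reeh-Schlieder for the wedge, whereby $P_+ \A(W_\R)\Omega$ is dense in $\H_+$, and $P_+ x\Omega = \Phi_+^\tout(x)\Omega \in \N_+^\tout \Omega$ for local $x$. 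Hence $E(x)\Omega = P_+ x\Omega = \Phi_+^\tout(x)\Omega$ for every local $x \in \A(W_\R)$; since both operators lie in $\N_+^\tout \subset \A(W_\R)$ and $\Omega$ is separating, they coincide, and so $\Phi_+^\tout$ extends to $E$ on all of $\A(W_\R)$.

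The only mildly delicate point is the verification of modular invariance of $\N_+^\tout$, which requires combining Bisognano-Wichmann with the Poincar\'e covariance of asymptotic fields on a generating set that stays inside $W_\R$; everything else is a direct combination of Takesaki, the action on $\Omega$ recorded in Lemma~\ref{asymptotic-field}, and Reeh-Schlieder for the wedge. In particular the approach sidesteps any attempt to extend $\Phi_+^\tout$ to non-local $x$ directly as a strong limit of $x_+(h_\T)$, which would require a more delicate interchange-of-limits argument.
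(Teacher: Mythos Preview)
Your proposal is correct and follows essentially the same route as the paper's proof: show $\N_+^\tout \subset \A(W_\R)$, use Poincar\'e covariance of asymptotic fields together with Bisognano--Wichmann to get modular invariance, apply Takesaki's theorem, identify the implementing projection with $P_+$ via Lemma~\ref{asymptotic-field}, and conclude by the separating property of $\Omega$. Your argument for $\overline{\N_+^\tout\Omega} = \H_+$ spells out in two inclusions what the paper compresses into the single sentence ``By Lemma~\ref{asymptotic-field}, we know that $P_+^\tout = P_+$'', but the content is the same.
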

\begin{proof}
By construction, $\Phi_+^\tout(x) \in \A(W_\R)$ for such $x \in \A(O), O\subset W_\R$
as above. Recall that if $g$ is a Poincar\'e transformation,
it holds that $\Ad U(g) \Phi_+^\tout(x) = \Phi_+^\tout(\Ad U(g)(x))$ (see Lemma \ref{asymptotic-field}).
Hence $\N_+^\tout$ is invariant under Lorentz boosts $\Ad U(\Lambda(-2\pi t)), t\in \RR$.
Since we assume the Bisognano-Wichmann property, $\N_+^\tout$ is invariant
under the modular group of $\A(W_\R)$ with respect to $\omega$.

By Takesaki's Theorem \ref{takesaki}, there is a conditional expectation
$E$ from $\A(W_\R)$ onto $\N_+^\tout$ and this is implemented by
the projection $P_+^\tout$ onto $\overline{\N_+^\tout\Omega}$.
By Lemma \ref{asymptotic-field}, we know that $P_+^\tout = P_+$.
Two operators $E(x)$ and $\Phi_+^\tout(x)$ in $\A(W_\R)$ satisfy
$E(x)\Omega = P_+^\tout x\Omega = P_+ x\Omega = \Phi_+^\tout(x)\Omega$.
The vacuum vector $\Omega$ is separating for $\A(W_\R)$, hence
they coincide.
\end{proof}
Analogously, we consider $\N_-^\tin$ generated by
$\{\Phi_-^\tin(x): x \in \A(O), O \subset W_\R, O \mbox{ bounded}\}$. The map $\Phi_-^\tin$
is the conditional expectation from $\A(W_\R)$ onto $\N_-^\tin$.
\begin{proposition}\label{wedge-recovery}
Let us assume that $\A$ is asymptotically complete.
The wedge algebra $\A(W_\R)$ is generated by $\N_+^\tout$ and $\N_-^\tin$.
\end{proposition}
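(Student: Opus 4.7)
The plan is as follows. First I would verify that $\N_+^\tout$ and $\N_-^\tin$ are von Neumann subalgebras of $\A(W_\R)$ which are invariant under the modular group of $\omega|_{\A(W_\R)}$. The inclusion $\N_+^\tout \subset \A(W_\R)$ follows from the elementary geometric fact $W_\R + (t,t) \subset W_\R$ for $t \geq 0$: every approximant $x_+(h_\T)$ with $\T \to +\infty$ is then an integral of elements in $\A(W_\R)$, and so is its strong limit $\Phi_+^\tout(x)$. The analogous inclusion $W_\R + (t,-t) \subset W_\R$ for $t \leq 0$ yields $\N_-^\tin \subset \A(W_\R)$. Invariance under the modular group $\s^\omega_t = \Ad U(\Lambda(-2\pi t))$ (Bisognano-Wichmann) follows from Lemma \ref{asymptotic-field}(3), by which Lorentz boosts commute with the formation of asymptotic fields, combined with the fact that they preserve $W_\R$.

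The crucial step is to show $\overline{(\N_+^\tout \vee \N_-^\tin)\Omega} = \H$. By asymptotic completeness, $\H = \H^\tout$ is the closed linear span of vectors $\xi_+ \timeso \xi_-$. For given $\xi_\pm \in \H_\pm$, Reeh-Schlieder applied to the wedge algebra furnishes approximating sequences $x_{+,n}, y_{-,n}$ sitting inside bounded subregions of $W_\R$ such that $P_\pm x_{\pm,n}\Omega \to \xi_\pm$. The key identity is
\begin{equation*}
\Phi_+^\tout(x_{+,n}) \Phi_-^\tout(y_{-,n}) \Omega \;=\; \Phi_+^\tout(x_{+,n}) \Phi_-^\tin(y_{-,n}) \Omega,
\end{equation*}
which holds simply because $\Phi_-^\tout(y)\Omega = P_- y\Omega = \Phi_-^\tin(y)\Omega$, so the outer operator acts on the same vector in both expressions. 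Passing to the limit in $n$, one sees $\xi_+ \timeso \xi_- \in \overline{\N_+^\tout \cdot \N_-^\tin \Omega} \subset \overline{(\N_+^\tout \vee \N_-^\tin)\Omega}$, establishing density.

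Finally I would invoke Takesaki's theorem (Theorem \ref{takesaki}): since $\N_+^\tout \vee \N_-^\tin$ is a modular-invariant von Neumann subalgebra of $\A(W_\R)$, there is an $\omega$-preserving conditional expectation $E \colon \A(W_\R) \to \N_+^\tout \vee \N_-^\tin$, implemented by the projection onto $\overline{(\N_+^\tout \vee \N_-^\tin)\Omega}$. By the previous step this projection is $\1$, whence $E(x)\Omega = x\Omega$ for every $x \in \A(W_\R)$. Since $\Omega$ is separating for $\A(W_\R)$, $E$ is the identity map and the wedge algebra coincides with $\N_+^\tout \vee \N_-^\tin$. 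This is precisely the closing argument already used for Corollary \ref{noninteraction-chiral}(b).

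The main delicate point is the trade of $\Phi_-^\tout$ for $\Phi_-^\tin$ in the middle paragraph, which is legitimate purely at the level of the vector $P_-y\Omega$ and \emph{not} at the level of operators. The absence of a corresponding operator identity is exactly why noninteraction is not invoked here as a hypothesis; it will emerge instead as a consequence once the chiral structure of $\A$ is extracted in the remainder of Section \ref{characterization}.
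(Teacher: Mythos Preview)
Your proof is correct and follows essentially the same approach as the paper's: both establish modular invariance of $\N_+^\tout \vee \N_-^\tin$ via Bisognano--Wichmann, argue that its cyclic subspace contains all collision states, and then invoke Takesaki's theorem to conclude that the conditional expectation is the identity. The paper's version is more terse, leaving implicit the key identity $\Phi_-^\tout(y)\Omega = P_- y\Omega = \Phi_-^\tin(y)\Omega$ that you have (helpfully) spelled out.
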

\begin{proof}
As we observed before Lemma \ref{asymptotic-field-is-expectation},
$\N_+^\tout$ and $\N_-^\tin$ are invariant under Lorentz boosts.
Hence the same holds for $\N_\R := \N_+^\tout \vee \N_-^\tin$.
Again by Theorem \ref{takesaki}, there is a conditional expectation $E$ from $\A(W_\R)$
onto $\N_\R$.
The wedge algebra $\A(W_\R)$ is already in the GNS representation of the vacuum $\omega$
since $\Omega$ is cyclic and separating for $\A(W_\R)$.
$\N_R\Omega$ contains all the collision states, since $\N_\R\Omega \supset \{\Phi_+^\tout(x)\Phi_-^\tin(y)\Omega\}$
and the assumption of asymptotic completeness tells us that $\N_\R\Omega$ is dense in $\H$,
hence the projection $P_{\N_\R}$ onto $\overline{\N_\R\Omega}$ is equal to $\1$.
Therefore the conditional expectation $E$ is in fact the identity map and
$\N_\R = \A(W_\R)$.
\end{proof}

\begin{lemma}\label{in-out-equality}
Let us assume that $\A$ is asymptotically complete and noninteracting.
Then it holds that $\Phi^\tout_+(x) = \Phi^\tin_+(x)$ and
$\Phi^\tin_-(x) = \Phi^\tout_-(x)$
for $x \in \A(O)$.
\end{lemma}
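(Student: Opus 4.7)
The plan is to prove $\Phi^\tin_+(x) = \Phi^\tout_+(x)$ on all of $\H$ by first establishing the equality on the wave subspace $\H_+$ (where both maps reduce to $P_+ x$) via the bimodule property of conditional expectations, and then propagating to $\H$ through the collision state decomposition using noninteraction and asymptotic completeness.

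The proof of Lemma \ref{asymptotic-field-is-expectation} applies verbatim with ``in'' in place of ``out'', so $\Phi^\tin_+$ is the $\omega$-preserving conditional expectation $\A(W_\R) \to \N^\tin_+$. By translation covariance of the asymptotic fields we may assume $x \in \A(W_\R)$. For any $y \in \A(W_\R)$, the bimodule property of $\Phi^\tin_+$ combined with $\Phi^\tin_+(y) \in \N^\tin_+$ gives
\[
\Phi^\tin_+(x)\Phi^\tin_+(y) = \Phi^\tin_+(x\,\Phi^\tin_+(y)),
\]
and evaluating on $\Omega$, using $\Phi^\tin_+(z)\Omega = P_+ z\Omega$ for all $z \in \A(W_\R)$, yields
\[
\Phi^\tin_+(x)\xi_+ = P_+ x\,\xi_+, \qquad \xi_+ := \Phi^\tin_+(y)\Omega = P_+ y\Omega.
\]
The identical computation with out replacing in produces $\Phi^\tout_+(x)\xi_+ = P_+ x\xi_+$. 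Since $\{P_+ y\Omega : y \in \A(W_\R)\}$ is dense in $\H_+$ by Reeh--Schlieder and $\Phi^\tin_+(x), \Phi^\tout_+(x)$ are bounded, we obtain $\Phi^\tin_+(x) = \Phi^\tout_+(x)$ on all of $\H_+$.

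To promote this to $\H$, I would prove the factorization identity
\[
\Phi^\tin_+(x)(\xi_+\timesi\xi_-) = (\Phi^\tin_+(x)\xi_+)\timesi\xi_-, \qquad \xi_\pm \in \H_\pm,
\]
(and its outgoing analogue) by writing $\xi_+\timesi\xi_- = \lim_n \Phi^\tin_+(y_{+,n})\Phi^\tin_-(y_-)\Omega$ with local $y_{+,n}, y_-$ satisfying $P_+ y_{+,n}\Omega \to \xi_+$ and $P_- y_-\Omega = \xi_-$, commuting $\Phi^\tin_+(x)$ past $\Phi^\tin_-(y_-)$ via Lemma \ref{asymptotic-field}, pulling the $n$-limit through the bounded operator $\Phi^\tin_-(y_-)$, and then recognising the result as a collision state built from the wave $\Phi^\tin_+(x)\xi_+ \in \H_+$ and $\xi_-$. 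Combining this with noninteraction ($\xi_+\timesi\xi_- = \xi_+\timeso\xi_-$) and the equality on $\H_+$ above gives the chain
\[
\Phi^\tin_+(x)(\xi_+\timesi\xi_-) = (\Phi^\tin_+(x)\xi_+)\timesi\xi_- = (\Phi^\tout_+(x)\xi_+)\timeso\xi_- = \Phi^\tout_+(x)(\xi_+\timesi\xi_-).
\]
Asymptotic completeness ensures such collision states span $\H$, so $\Phi^\tin_+(x) = \Phi^\tout_+(x)$ on all of $\H$. The equality $\Phi^\tin_-(x) = \Phi^\tout_-(x)$ follows by the symmetric argument, swapping the roles of $+$ and $-$.

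The main technical point will be the factorization identity: the intermediate vector $\Phi^\tin_+(x)\xi_+ \in \H_+$ must be identified with a new incoming wave that pairs with $\xi_-$ to yield a fresh collision state, which requires a double-limit interchange and relies crucially on the commutativity $[\Phi^\tin_+(x),\Phi^\tin_-(y)] = 0$ of asymptotic fields of opposite chirality from Lemma \ref{asymptotic-field}.
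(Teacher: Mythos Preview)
Your overall strategy coincides with the paper's: show that $\Phi^\tin_+(x)$ and $\Phi^\tout_+(x)$ agree on $\H_+$ (both being $P_+ x$ there), prove the factorization $\Phi_+(x)(\xi_+\times\xi_-) = (\Phi_+(x)\xi_+)\times\xi_-$ via commutativity of opposite-chirality asymptotic fields, and then combine noninteraction ($\timesi=\timeso$) with asymptotic completeness. The paper's proof is exactly this chain of equalities.

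There is one technical slip in your first step. The proof of Lemma~\ref{asymptotic-field-is-expectation} does \emph{not} carry over verbatim to $\Phi^\tin_+$ on $\A(W_\R)$: for $x\in\A(O)$, $O\subset W_\R$, the approximants $x_+(h_\T)$ with $\T\to -\infty$ involve lightlike translations by $(t,t)$ with $t$ large negative, and $W_\R+(t,t)\subset W_\R$ holds only for $t\ge 0$, so $\Phi^\tin_+(x)$ is not manifestly in $\A(W_\R)$. The in-version of the lemma holds with $W_\L$ in place of $W_\R$; this is precisely why the paper pairs $\Phi_+^\tout$ and $\Phi_-^\tin$ (not $\Phi_+^\tin$) with $W_\R$. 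Your bimodule argument can be repaired by running it on $\A(W_\L)$ for the in-field and on $\A(W_\R)$ for the out-field, then using translation covariance to cover a common local $x$.

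In fact the conditional-expectation detour is unnecessary here. The identity $\Phi^\tin_+(x)\xi_+ = P_+ x\,\xi_+$ for $\xi_+\in\H_+$ follows directly from $\Phi^\tin_+(x)\H_+\subset\H_+$ together with $P_+T_+(t)=P_+$, which gives $P_+ x_+(h_\T)|_{\H_+}=P_+ x|_{\H_+}$ for every $\T$ before any limit is taken; the same computation works for $\Phi^\tout_+$. The paper simply cites Lemma~\ref{asymptotic-field} for this step and proceeds as you do.
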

\begin{proof}
We present the proof  for ``$+$'' objects only, since the other assertion is analogous.
By the assumption that $S=\1$, it follows that
$\xi_+ \timesi \xi_- = \xi_+ \timeso\xi_-$ for any pair
$\xi_+ \in \H_+, \xi_- \in \H_-$. Then we have
\begin{eqnarray*}
\Phi^\tout_+(x) \cdot\xi_+\timeso\xi_-
&=& (\Phi^\tout_+(x)\xi_+)\timeso\xi_- \\
&=& P_+ x\xi_+\timeso \xi_- \\
&=& P_+ x\xi_+\timesi \xi_- \\
&=& (\Phi_+^\tin(x)\xi_+)\timesi\xi_- \\
&=& \Phi^\tin_+(x)\cdot \xi_+\timesi\xi_- \\
&=& \Phi^\tin_+(x)\cdot \xi_+\timeso\xi_-,
\end{eqnarray*}
where, in the 1st and 5th lines we used the fact that right- and left- moving
asymptotic fields commute, the 2nd and 4th equalities come from Lemma \ref{asymptotic-field}
and the rest is particular cases of the equivalence between ``$\timesi$'' and ``$\timeso$''.
By the assumption of asymptotic completeness,
$\xi_+\timesi\xi_-=\xi_+\timeso\xi_-$ span the whole space,
hence we have the equality of operators $\Phi^\tout_+(x) = \Phi^\tin_+(x)$.
\end{proof}

\begin{lemma}\label{wedge-split}
Let us assume that $\A$ is asymptotically complete and noninteracting.
The map
\[
W:\xi_+\otimes \xi_- \mapsto \xi_+\timesi\xi_- = \xi_+\timeso\xi_-
\]
gives a natural unitary equivalence $(P_+\N_+^\tout)\otimes(P_-\N_-^\tin) \simeq \A(W_\R)$,
which is elementwise expressed as
$P_+\Phi_+^\tout(x) \otimes P_-\Phi_-^\tin(y) \mapsto \Phi_+^\tout(x)\Phi_-^\tin(y)$.
Furthermore, this decomposition is compatible with
the action of the Poincar\'e group $\poincare$:
$\H_+$ and $\H_-$ are invariant under $\poincare$, hence there is a tensor product
representation on $\H_+\otimes\H_-$ and it holds that
$W\cdot(U(g)P_+\Phi_+^\tout(x)\Omega\otimes U(g)P_-\Phi_-^\tin(y)\Omega)
= U(g)W\cdot (P_+\Phi_+^\tout(x)\Omega\otimes P_-\Phi_-^\tin(y)\Omega$).
\end{lemma}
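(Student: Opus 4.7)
The plan is to verify directly, using the results already established in the paper, that $W$ is well-defined, unitary, intertwines the asymptotic-field operators with their tensor-product counterparts, and respects Poincaré transformations.

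First, by Lemma \ref{collision-states}(1) the inner product factorizes: $\<\xi_+\timesi\xi_-,\eta_+\timesi\eta_-\> = \<\xi_+,\eta_+\>\<\xi_-,\eta_-\>$. This shows that $W$, initially defined on simple tensors by $W(\xi_+\otimes\xi_-):=\xi_+\timesi\xi_-$, extends linearly and isometrically to all of $\H_+\otimes\H_-$. By asymptotic completeness its image $\H^\tin$ exhausts $\H$, so $W$ is unitary; the noninteraction hypothesis makes the choice $\timesi$ versus $\timeso$ immaterial.

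The heart of the proof is the intertwining identity
\[
\Phi^\tout_+(x)\cdot(\xi_+\timesi\xi_-) = (\Phi^\tout_+(x)\xi_+)\timesi\xi_-,
\]
together with its symmetric analogue $\Phi^\tin_-(y)\cdot(\xi_+\timesi\xi_-) = \xi_+\timesi(\Phi^\tin_-(y)\xi_-)$. I would prove the first identity on the dense set $\{P_-x_-\Omega: x_-\text{ local}\}$ of $\H_-$. Using the constant sequence $x_-$ in the definition of the collision state, one obtains the collapsing formula
\[
\xi_+\timesi P_-x_-\Omega = \Phi^\tin_-(x_-)\,\xi_+,
\]
valid because $\Phi^\tin_-(x_-)$ is a fixed bounded operator and therefore commutes with norm limits $\Phi^\tin_+(x_{+,n})\Omega\to\xi_+$ approximating $\xi_+$. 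Using commutation of $\Phi^\tout_+$ and $\Phi^\tin_-$ (Lemma \ref{asymptotic-field} together with $\Phi^\tout_+=\Phi^\tin_+$ from Lemma \ref{in-out-equality}) and the fact that $\Phi^\tout_+(x)$ preserves $\H_+$, one then computes
\[
\Phi^\tout_+(x)\cdot\Phi^\tin_-(x_-)\xi_+ = \Phi^\tin_-(x_-)\cdot\Phi^\tout_+(x)\xi_+ = (\Phi^\tout_+(x)\xi_+)\timesi P_-x_-\Omega,
\]
where the last step reapplies the collapsing formula to $\Phi^\tout_+(x)\xi_+\in\H_+$. Norm continuity of $\xi_-\mapsto\xi_+\timesi\xi_-$ (from Lemma \ref{collision-states}(1)) extends the identity to all $\xi_-\in\H_-$.

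Combining the two intertwining identities and noting that $\Phi^\tout_+(x)$ commutes with $\Phi^\tin_-(y)$ yields
\[
W^*\Phi^\tout_+(x)\Phi^\tin_-(y)\,W = (P_+\Phi^\tout_+(x)P_+)\otimes(P_-\Phi^\tin_-(y)P_-),
\]
which is the elementwise statement. Taking generated von Neumann algebras on both sides and invoking Proposition \ref{wedge-recovery} to identify $\A(W_\R)=\N^\tout_+\vee\N^\tin_-$ gives the asserted unitary equivalence $\A(W_\R)\simeq (P_+\N^\tout_+)\otimes(P_-\N^\tin_-)$. For the Poincaré-covariance clause, the subspaces $\H_\pm$ are $\poincare$-invariant because the positive- and negative-lightlike translation subgroups are normalized by $\poincare$ (translations commute with $T_\pm$, boosts rescale the parameter of $T_\pm$); hence $U_\pm:=U|_{\H_\pm}$ are genuine representations, and Lemma \ref{collision-states}(2) immediately yields $U(g)W(\xi_+\otimes\xi_-) = W(U_+(g)\xi_+\otimes U_-(g)\xi_-)$. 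The main obstacle is the intertwining identity of the second paragraph: the trick is to restrict to $\xi_-$ of simple form $P_-x_-\Omega$ to exploit the collapsing formula, then reapply it after commuting through $\Phi^\tout_+(x)$, which requires both $\Phi^\tout_+(x)\H_+\subset\H_+$ and the commutation of left- and right-moving asymptotic fields.
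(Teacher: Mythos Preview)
Your proof is correct and follows essentially the same route as the paper's: unitarity of $W$ from Lemma~\ref{collision-states}, the tensor-product action of the asymptotic fields via Lemmas~\ref{asymptotic-field} and~\ref{in-out-equality}, and Poincar\'e compatibility from the covariance of collision states. You are simply more explicit than the paper in spelling out the collapsing formula $\xi_+\timesi P_-x_-\Omega=\Phi^\tin_-(x_-)\xi_+$, the need for Lemma~\ref{in-out-equality} to commute $\Phi^\tout_+$ with $\Phi^\tin_-$, and the appeal to Proposition~\ref{wedge-recovery} for the von Neumann algebra identification $\N^\tout_+\vee\N^\tin_-=\A(W_\R)$.
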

\begin{proof}
The unitarity of the map $W$ in the statement is clear from Lemma \ref{collision-states}
and it follows that $W$ intertwines the actions of asymptotic fields by
Lemma \ref{asymptotic-field}: Namely, $\Phi_+^\tout$ and $\Phi_-^\tout$ act as
in a tensor product (Lemma \ref{asymptotic-field}, \ref{collision-states})
but we know that $\Phi_-^\tout(x) = \Phi_-^\tin(x)$ from noninteraction (Lemma \ref{in-out-equality}).
As for the action of the Poincar\'e group,
we see from Lemma \ref{asymptotic-field},
for $x$ and $y$  as in Lemma \ref{in-out-equality}, that
\begin{eqnarray*}
W\cdot U(g)\Phi_+^\tout(x)\Phi_-^\tin(y)\Omega
&=& W\cdot\Ad U(g)(\Phi_+^\tout(x))\Ad U(g)(\Phi_-^\tin(y))\Omega \\
&=& W\cdot\Phi_+^\tout(\Ad U(g)(x))\Phi_-^\tin(\Ad U(g)(y))\Omega \\
&=& P_+\Phi_+^\tout(\Ad U(g)(x))\Omega\otimes P_-\Phi_-^\tin(\Ad U(g)(y))\Omega \\
&=& P_+ U(g)\Phi_+^\tout(x)\Omega\otimes P_- U(g)\Phi_-^\tin(y)\Omega \\
&=& U(g)P_+\Phi_+^\tout(x)\Omega\otimes U(g)P_-\Phi_-^\tin(y)\Omega,
\end{eqnarray*}
where in the last step we used the fact that $\H_+$ and $\H_-$ are invariant
under $U(g)$. This completes the proof.
\end{proof}

Let us recall the notion of a half-sided modular inclusion
due to Wiesbrock, with which we recover the M\"obius symmetry of a given noninteracting net.
\begin{theorem}[\cite{Wiesbrock, AZ}]
Let $\N \subset \M$ be an inclusion of von Neumann algebras, $\Omega$
be a cyclic and separating vector for $\N,\M$ and $\M\cap\N'$.
Let us assume that the modular group $\s^\Omega_t$ of $\M$ with respect to the state
$\<\Omega,\cdot \Omega\>$ preserves $\N$ with $t\ge 0$ (respectively $t\le 0$).
Then there is a M\"obius covariant net $\A_0$ on $S^1$ such that $\A_0(\RR_-) = \M$
and $\A_0(\RR_- - 1) = \N$ (respectively $\A_0(\RR_+) = \M$ and $\A_0(\RR_+ + 1) = \N$).

If a unitary representation $T_0$ of $\RR$ with positive spectrum satisfies
$T_0(t)\Omega = \Omega$ for $t\in\RR$,
$\Ad T_0(t)(\M) \subset \M$ for $t\le 0$ (respectively $t\ge 0$) and $\Ad T_0(-1)(\M) = \N$
(respectively $\Ad T_0(1)(\M) = \N$), then $T_0$ is the representation of the translation
group of the M\"obius covariant net constructed above.
\end{theorem}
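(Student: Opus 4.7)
The plan is to follow Wiesbrock's original strategy: promote the modular data of the inclusion $\N \subset \M$ into a unitary representation of $\overline{\psl2r}$, then let this representation sweep $\M$ across the circle to produce the net. Let $\Delta_\M, J_\M$ and $\Delta_\N, J_\N$ denote the modular objects associated to $(\M,\Omega)$ and $(\N,\Omega)$.

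First I would use the half-sided assumption to construct the translation subgroup. Define $U(s)$ implicitly by the relation $\Delta_\M^{it}\Delta_\N^{-it} = U(1-e^{-2\pi t})$, with the parametrisation fixed so that $\Ad U(1)(\M)=\N$. The hard part, which is the core of Borchers' theorem, is to show that $U$ is indeed a strongly continuous one-parameter unitary group with \emph{positive} generator, and that it satisfies the canonical commutation relation $\Delta_\M^{it}U(s)\Delta_\M^{-it}=U(e^{-2\pi t}s)$. Positivity is extracted from the KMS condition for the modular groups of $\M$ and of $\N$ in tandem with $\s^\Omega_t(\N)\subset \N$ for $t\ge 0$, via a Stein-type analytic continuation in a strip; the commutation relation is then a direct computation from the defining identity.

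Next, with $\{\Delta_\M^{it},U(s)\}$ in hand I have a unitary representation of the affine group $\axb$. To upgrade to $\overline{\psl2r}$, I adjoin the modular conjugation $J_\M$. Since $J_\M\M J_\M=\M'$, the conjugated group $\tilde U(s):=J_\M U(-s)J_\M$ is a second positive-spectrum translation subgroup, and together $\{\Delta_\M^{it}, U(s), \tilde U(s)\}$ satisfy the defining commutation relations of $\mathfrak{sl}(2,\RR)$. Integrating these relations yields a strongly continuous unitary representation $V$ of $\overline{\psl2r}$ in which $\Delta_\M^{it}$ realises dilations and $U(s)$ realises translations.

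To define the net I set $\A_0(\RR_-):=\M$ and extend by covariance, $\A_0(gI):=V(g)\A_0(I)V(g)^*$, for every interval $I\subset S^1$. Well-definedness reduces to checking that the stabiliser of $\RR_-$ in $\overline{\psl2r}$ (generated by dilations and the flip realised by $J_\M$) preserves $\M$, which is automatic. Isotony and M\"obius covariance hold by construction; positivity of energy is the positivity of the generator of $U$; vacuum invariance is built in; locality follows from $J_\M\M J_\M=\M'$ together with the geometric fact that the flip sends $\RR_-$ to $\RR_+$. The definition of $\A_0(\RR_- - 1)$ then agrees with $\N$ by the very normalisation $\Ad U(1)(\M)=\N$.

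For the uniqueness statement, suppose $T_0$ is any positive-spectrum translation group fixing $\Omega$, contracting $\M$ for negative times, and satisfying $\Ad T_0(-1)(\M)=\N$. Borchers' theorem applied to $(\M,\Omega,T_0)$ forces the commutation relation $\Delta_\M^{it}T_0(s)\Delta_\M^{-it}=T_0(e^{-2\pi t}s)$, and combined with the normalisation $\Ad T_0(-1)(\M)=\N$ this determines $T_0$ uniquely from $(\M,\N,\Omega)$; hence $T_0$ coincides with the $U$ constructed above. The principal obstacle throughout is establishing the positivity of the generator of $U$: all the geometric and covariance consequences are essentially formal once positivity is secured, but positivity itself is the delicate complex-analytic input at the heart of the argument.
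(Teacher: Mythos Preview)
The paper does not prove this theorem; it is quoted from \cite{Wiesbrock, AZ} and used as a black box in Section~\ref{characterization}. There is therefore nothing to compare against. Your outline is a faithful sketch of the Wiesbrock--Borchers argument and correctly isolates the analytic crux: positivity of the generator of the one-parameter group manufactured from $\Delta_\M^{it}\Delta_\N^{-it}$, after which the algebraic and geometric consequences are largely formal.

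Two minor corrections, neither structural. First, the stabiliser of $\RR_-$ in $\overline{\psl2r}$ is the one-parameter dilation subgroup alone; the modular conjugation $J_\M$ is antiunitary and \emph{exchanges} $\RR_-$ with its complement, so it feeds into the locality argument (via $J_\M\M J_\M=\M'$) rather than into the well-definedness check for the net. Second, your sign convention drifts: you normalise $U$ by $\Ad U(1)(\M)=\N$, which is the convention appropriate to the $-$half-sided case ($\M=\A_0(\RR_+)$, $\N=\A_0(\RR_++1)$), yet later you invoke $\Ad T_0(-1)(\M)=\N$ from the $+$half-sided statement. Pick one case and carry it through consistently; in the $+$case the range of $1-e^{-2\pi t}$ and the direction in which $U$ contracts $\M$ both flip sign relative to what you wrote.
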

Such an inclusion $\N \subset \M$ is called a {\bf standard $\pm$half-sided modular inclusion}
(standardness refers to the condition that $\Omega$ is cyclic and separating for
$\M\cap\N'$).
If $T_0$ is the representation of the translation group, the M\"obius
net on $S^1$ restricted to the real line $\RR$ has an explicit form \cite{Wiesbrock,AZ}
\begin{align*}
\A_0((s,t)) &= T_0(s)\M' T_0(s)^* \cap T_0(t)\M T_0(t)^*\\
(\mbox{respectively }\A_0((s,t)) &= T_0(s)\M T_0(s)^*\cap T_0(t)\M' T_0(t)^*).
\end{align*}

For a von Neumann algebra $\N$ on the Hilbert space $\H$ (on which the net $\A$
is defined), we denote
$\N(a) = \Ad T(a)(\N)$ for $a\in\RR^2$, where $T$ is the representation of
the translation group for the net $\A$ (see Section \ref{scatteringtheory}). 
We put $a_{1} := (1,1), a_{-1} := (-1,1) \in \RR^2$.
\begin{lemma}\label{hsmi}
The inclusion $P_+\N_+^\tout(a_{-1}) \subset P_+\N_+^\tout$ is a standard
$+$half-sided modular inclusion with respect to $\Omega$ on $\H_+$.
Analogously, $P_-\N_-^\tin(a_{1}) \subset P_-\N_-^\tin$ is a standard
$-$half-sided modular inclusion with respect to $\Omega$ on $\H_-$.
\end{lemma}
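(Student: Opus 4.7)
The plan is to verify in turn the defining properties of a standard $+$half-sided modular inclusion of $\N := P_+\N_+^\tout(a_{-1})$ inside $\M := P_+\N_+^\tout$ on $\H_+$: the set-theoretic inclusion $\N\subset\M$; cyclicity and separation of $\Omega$ for $\M$, $\N$ and the relative commutant $\M\cap\N'$; and invariance of $\N$ under the modular group of $\M$ with respect to $\Omega$ for positive times.

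The inclusion itself is immediate: $a_{-1}=(-1,1)$ is lightlike and $W_\R+a_{-1}\subset W_\R$ follows directly from the defining inequalities of $W_\R$, so Poincar\'e covariance of the asymptotic field gives $T(a_{-1})\Phi_+^\tout(x)T(a_{-1})^*=\Phi_+^\tout(\Ad T(a_{-1})(x))\in\N_+^\tout$ for any $x\in\A(O)$ with $O\subset W_\R$ bounded, and $P_+$ commutes with $T(a_{-1})$. For the modular group: by Lemma \ref{asymptotic-field-is-expectation}, $\Phi_+^\tout:\A(W_\R)\to\N_+^\tout$ is an $\omega$-preserving conditional expectation, so by Takesaki's theorem (Theorem \ref{takesaki}) the modular group of $\A(W_\R)$ restricts to that of $\N_+^\tout$; by Bisognano-Wichmann the former is $\Ad U(\Lambda(-2\pi t))$. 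Lorentz boosts rescale $T_+$ by positive factors and hence preserve $\H_+$, so $U(\Lambda(-2\pi t))$ commutes with $P_+$ and the modular group of $\M$ on the GNS space $\H_+=\overline{\N_+^\tout\Omega}$ acts as $\Ad U(\Lambda(-2\pi t))|_{\H_+}$. Invariance of $\N$ under this modular group for $t\ge 0$ then follows from
\[
\Ad U(\Lambda(-2\pi t))\,\N_+^\tout(a_{-1})=\N_+^\tout(e^{2\pi t}a_{-1})\subset\N_+^\tout(a_{-1}) \qquad (t\ge 0),
\]
using Lorentz invariance of $\N_+^\tout$, the identity $\Lambda(-2\pi t)a_{-1}=e^{2\pi t}a_{-1}$, and $W_\R+e^{2\pi t}a_{-1}\subset W_\R+a_{-1}$ for $t\ge 0$.

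Cyclicity and separation of $\Omega$ for $\M$ on $\H_+$ follow from $\H_+=\overline{\M\Omega}$ and the GNS construction; for $\N$ they transfer via the unitary $T(a_{-1})=T_-(-1)$, which fixes $\Omega$, preserves $\H_+$ (because $T_-$ commutes with $T_+$), and intertwines $\M$ with $\N$.

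The main obstacle is the standardness condition, namely cyclicity of $\Omega$ for the relative commutant $\M\cap\N'$ on $\H_+$. My approach is to exploit the Borchers triple $(\M,T_-|_{\H_+},\Omega)$: the generator of $T_-$ is the positive operator $(P_0-P_1)|_{\H_+}$ by positivity of energy, $T_-(t)\Omega=\Omega$, and $\Ad T_-(t)(\N_+^\tout)\subset\N_+^\tout$ for $t\le 0$ since such translations map bounded subsets of $W_\R$ back into $W_\R$; combined with $T_-(-1)\M T_-(-1)^*=\N$, the Borchers-type commutation relation $\Ad U(\Lambda(-2\pi t))\,T_-(s)=T_-(e^{2\pi t}s)$ is automatic from the Lorentz action on translations. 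The remaining cyclicity I would establish by constructing, from asymptotic fields $\Phi_+^\tout(x)$ with $x$ localized in bounded subregions lying between $W_\R$ and $W_\R+a_{-1}$ (their lightlike $v$-coordinates fall into the bounded interval $(-2,0)$), a vacuum-cyclic subalgebra of $\M\cap\N'$: membership in $\N'$ should follow from the commutation property of asymptotic fields (Remark \ref{commutation-asymptotic-field}) together with Haag duality, and cyclicity on $\H_+$ from a Reeh-Schlieder-type argument exploiting that $\Phi_+^\tout$ depends only on the positive-lightray localization of its argument. This parallels the construction of the asymptotic chiral M\"obius net in \cite{Tanimoto11-3} referenced in Remark \ref{extension-to-mob}, and I expect it to constitute the principal technical point.
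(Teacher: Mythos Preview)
Your treatment of the inclusion $\N\subset\M$, the identification of the modular group of $\M$ with $\Ad U(\Lambda(-2\pi t))|_{\H_+}$ via Lemma~\ref{asymptotic-field-is-expectation} and Theorem~\ref{takesaki}, and the invariance of $\N$ for $t\ge 0$ all agree with the paper's argument. The cyclicity/separation of $\Omega$ for $\M$ and $\N$ is also fine.

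The gap is in standardness. You propose to place $P_+\Phi_+^\tout(x)$, with $x$ localized in the strip $W_\R\setminus(W_\R+a_{-1})$, into $\M\cap\N'$ via Remark~\ref{commutation-asymptotic-field} and Haag duality. But Remark~\ref{commutation-asymptotic-field} only yields commutation of an asymptotic field with a \emph{local} operator, not with another asymptotic field; and for $\Phi_+^\tout$ the geometric requirement is that $x$ sit at \emph{smaller} $v=t_0-t_1$ than the region one wants to commute with, whereas the strip has $v\in(-2,0)$ and $W_\R+a_{-1}$ has $v<-2$, so the inequality points the wrong way. (Switching to $\Phi_+^\tin$ would give the right geometry, but then one needs $\Phi_+^\tin=\Phi_+^\tout$ from Lemma~\ref{in-out-equality}, i.e.\ noninteraction and asymptotic completeness, which you never invoke.) The paper bypasses all of this by using the tensor splitting of Lemma~\ref{wedge-split}: under $W:\H_+\otimes\H_-\to\H$ one has $\A(W_\R)\simeq P_+\N_+^\tout\otimes P_-\N_-^\tin$ compatibly with the Poincar\'e action, so wedge duality gives
\[
\A(W_\R)\cap\A(W_\L+a_{-1}+a_1)\ \simeq\ P_+\bigl(\N_+^\tout\cap\N_+^\tout(a_{-1})'\bigr)\otimes P_-\bigl(\N_-^\tin\cap\N_-^\tin(a_1)'\bigr).
\]
The left side contains $\A(D)$ for the double cone $D=W_\R\cap(W_\L+a_{-1}+a_1)$, so $\Omega\simeq\Omega\otimes\Omega$ is cyclic for it by Reeh--Schlieder, and cyclicity for a tensor product forces cyclicity for each factor---which is precisely the required standardness. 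This short argument, resting on Lemma~\ref{wedge-split} (hence on the standing hypotheses of asymptotic completeness and noninteraction), is the idea missing from your sketch.
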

\begin{proof}
We prove only the former claim, since the latter is analogous.
Recall that the conditional expectation $\Phi_+^\tout$ commutes with
translations (Lemma \ref{asymptotic-field}), hence $\N_+^\tout(a_{-1})$ is generated by
$\{\Phi_+^\tout(x): x \in \A(O), O\subset W_\R + a_{-1}, O \mbox{ bounded}\}$.
The region $W_\R + a_{-1}$ is mapped into itself by Lorentz boosts $\Lambda(-t), t\ge 0$.
Lemma \ref{asymptotic-field-is-expectation} tells us that $\Phi_+^\tout$ is a conditional
expectation which preserves $\omega := \<\Omega,\cdot \Omega\>$,
hence the modular automorphism of $\N_+^\tout$
with respect to $\omega$ is the restriction of the modular automorphism of $\A(W_\R)$.
Thus Bisognano-Wichmann property shows that $\N_+^\tout(a_{-1})$ is
invariant under the modular automorphism $\s^\Omega_t$ of $\N_+^\tout$ for $t\ge 0$.
The projection $P_+$ commutes with both of $\N_+^\tout$ and $\N_+^\tout(a_{-1})$, hence
it is a $+$half-sided modular inclusion.

As for standardness, note that $\A(W_\R)\cap\A(W_\L+a_{-1}+a_1)$ contains
$\A(D)$ where $D = W_\R \cap (W_\L+a_{-1}+a_1)$ is a double cone.
Recall that $\A(W_\R) \simeq P_+\N_+^\tout\otimes P_-\N_-^\tin$ and
the action of the Poincar\'e group splits as well (Lemma \ref{wedge-split}).
According to this unitary equivalence we have
$\A(W_\L+a_{-1}+a_1) \simeq P_+\N_+^\tout(a_{-1})'\otimes P_-\N_-^\tin(a_{1})'$ and
$\A(W_\R)\cap\A(W_\L+a_{-1}+a_1)
\simeq P_+(\N_+^\tout\cap\N_+^\tout(a_{-1})')\otimes P_-(\N_-^\tin\cap\N_-^\tin(a_{1})')$,
since we have wedge duality (Proposition \ref{wedge-duality}).
The vector $\Omega \simeq \Omega\otimes\Omega$ is cyclic for $\A(D)$ (Reeh-Schlieder property)
and this is possible only if $\Omega$ is cyclic for both $P_+\left(\N_+^\tout\cap\N_+^\tout(a_{-1})'\right)$
and $P_-\left(\N_-^\tin\cap\N_-^\tin(a_{1})'\right)$.
The cyclicity for $P_+\left(\N_+^\tout\cap\N_+^\tout(a_{-1})'\right)$ is the standardness.
\end{proof}

\begin{theorem}\label{noninteracting-chiral}
Let $\A$ be a Poincar\'e covariant net, asymptotically complete and noninteracting
(satisfying Haag duality and Bisognano-Wichmann property).
Then it is a chiral M\"obius covariant net.
\end{theorem}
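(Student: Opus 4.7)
The strategy is to apply Wiesbrock's theorem to the half-sided modular inclusions provided by Lemma \ref{hsmi} and then identify the resulting chiral M\"obius net with $\A$ through the unitary of Lemma \ref{wedge-split}.

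First I would invoke Wiesbrock's theorem on the standard $+$half-sided modular inclusion $P_+\N_+^\tout(a_{-1})\subset P_+\N_+^\tout$ to obtain a M\"obius covariant net $\A_+$ on $S^1$ acting on $\H_+$, whose translation subgroup is realized by the restriction of $t\mapsto T(ta_{-1})$ to $\H_+$ and whose algebra on a distinguished half-line coincides with $P_+\N_+^\tout$. Applying the same theorem to $P_-\N_-^\tin(a_{1})\subset P_-\N_-^\tin$ yields a M\"obius covariant net $\A_-$ on $S^1$ acting on $\H_-$ with translations $T(ta_{1})|_{\H_-}$. I would then form the chiral two-dimensional M\"obius covariant net $\tilde\A:=\A_+\otimes\A_-$ on $\H_+\otimes\H_-$ as in Section \ref{chiralnets}.

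Next, by the assumed asymptotic completeness $\H^\tin=\H$, so the isometry $W:\xi_+\otimes\xi_-\mapsto \xi_+\timesi\xi_-$ of Lemma \ref{wedge-split} is in fact a unitary $W:\H_+\otimes\H_-\to\H$. Lemma \ref{wedge-split} moreover intertwines the $\poincare$-representations on the two sides and sends $(P_+\N_+^\tout)\otimes(P_-\N_-^\tin)$ onto $\A(W_\R)$; combined with the Wiesbrock identification of the chiral wedge algebras this gives $W\,\tilde\A(W_\R)\,W^* = \A(W_\R)$. Poincar\'e covariance of $W$ extends the identification to every translate $W_\R+a$, and wedge duality on both sides (which follows from Bisognano-Wichmann) handles all left wedges as well. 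Since every double cone $D\subset\RR^2$ is the intersection of two opposite wedges and since Haag duality for double cones holds for $\A$ by assumption and is automatic for the M\"obius net $\tilde\A$, the wedge identification propagates to $W\,\tilde\A(D)\,W^*=\A(D)$ for all $D$. The vacuum of $\tilde\A$ is mapped to $\Omega$ and the Poincar\'e representations agree by construction, so $\A$ is unitarily equivalent to the chiral M\"obius covariant net $\tilde\A$.

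The main technical difficulty I anticipate is the explicit matching in the wedge case: one must verify that under $W$ the Wiesbrock-defined interval algebras $\A_\pm(I)$ (built as intersections of translated half-line algebras) correspond exactly to the translated asymptotic-field algebras $P_\pm\N_\pm^{\tout/\tin}(a)$ inside $\H$. This boils down to applying Lemma \ref{wedge-split} at every translate and using the fact that $\Phi_\pm^{\tout/\tin}$ commute with $T(a)$, so that the tensor factorization of wedge algebras is preserved under translation and the Wiesbrock formula for $\A_\pm(I)$ pulls back to the correct intersection of translated wedge algebras of $\A$.
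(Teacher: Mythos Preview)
Your proposal is correct and follows essentially the same route as the paper: invoke Lemma \ref{hsmi}, apply Wiesbrock's theorem to build the chiral components, and use the unitary $W$ of Lemma \ref{wedge-split} together with Haag duality to identify the double-cone algebras. The paper carries this out in one step, writing $\A(D)=\A(W_\R)\cap\A(W_\L+a_{-1}+a_1)\simeq \A_\L((-1,0))\otimes\A_\R((0,1))$ directly, whereas you pass through all wedges first and then intersect; the content is the same.

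One clarification: your claim that ``Haag duality for double cones is automatic for the M\"obius net $\tilde\A$'' is not quite right as stated --- Haag duality in $\RR^2$ for a M\"obius net is equivalent to strong additivity, which is not automatic. What you actually need (and what holds) is that the Wiesbrock-constructed interval algebras are \emph{by definition} intersections of translated half-line algebras, so $\tilde\A(D)$ equals the intersection of the two wedge algebras by construction, together with the tensor-splitting of that intersection already established in the proof of Lemma \ref{hsmi}. Your final paragraph essentially says this; just replace the appeal to Haag duality of $\tilde\A$ with this direct observation.
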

\begin{proof}
First we have to prepare two M\"obius covariant nets on $S^1$:
This has been done in Lemma \ref{hsmi}. Namely, putting $a_{\pm t} = (t,\pm t) \in \RR^2$ for
$t\in\RR$, we have two nets
\begin{align*}
\A_\L((s,t)) &= P_+\left(\N_+^\tout(a_{-s})' \cap \N_+^\tout(a_{-t})\right), \\
\A_\R((s,t)) &= P_-\left(\N_-^\tin(a_{s}) \cap \N_-^\tin(a_{t})'\right).
\end{align*}
Under the unitary equivalence between $\H$ and $\H_+\otimes\H_-$ from
Lemma \ref{wedge-split}, Haag duality implies that, for the double cone
$D = W_\R \cap (W_\L+a_{-1}+a_1)$, we have
\begin{eqnarray*}
\A(D)
&=& \A(W_\R)\cap\A(W_\L+a_{-1}+a_1) \\
&\simeq& P_+(\N_+^\tout\cap\N_+^\tout(a_{-1})')\otimes P_-(\N_-^\tin\cap\N_-^\tin(a_{1})') \\
&=& \A_\L((-1,0))\otimes\A_\R((0,1)).
\end{eqnarray*}
The corresponding equality for general intervals $(s_\L,t_\L),(s_\R,t_\R)$ follows
 from the above definition of nets $\A_\L,\A_\R$. In Lemma \ref{wedge-split} we saw
that the actions of the Poincar\'e group are compatible with this unitary
equivalence.
\end{proof}

\begin{remark}
Haag duality is used only in Theorem \ref{noninteracting-chiral}.
Since a Poincar\'e covariant net $\A$ with Bisognano-Wichmann property is
wedge dual (Propositions \ref{wedge-duality}, \ref{essential-duality}),
we can infer that the dual net $\A^\mathrm{d}$ (see \cite{Baumgaertel})
is a chiral M\"obius net even if we do not assume Haag duality.
\end{remark}

Combining this and Corollary \ref{ac-dilation}, we see the following:
\begin{corollary}\label{extension-dilation}
An asymptotically complete, Poincar\'e-dilation covariant net $\A$
(satisfying Haag duality and Bisognano-Wichmann property) is
(unitarily equivalent to) a chiral M\"obius covariant net.
\end{corollary}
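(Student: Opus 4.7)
The plan is to assemble this corollary by chaining the two main results that have already been established earlier in the paper, with no new technical input required. The hypotheses of the corollary are exactly set up so that this chaining goes through.

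First, I would invoke Corollary \ref{ac-dilation}. Its hypotheses are dilation covariance, Bisognano-Wichmann property, and asymptotic completeness with respect to waves. All three of these are given to us (Poincar\'e-dilation covariance in particular implies dilation covariance). This yields that the waves in $\A$ do not interact, i.e.\ the scattering operator $S$ acts as the identity on $\H^\tout$. Internally, this step is where Remark \ref{extension-to-mob} does the real work: Bisognano-Wichmann together with asymptotic completeness allows the representation of the Poincar\'e-dilation subgroup to be extended to the full M\"obius group, which is what makes Theorem \ref{noninteraction-bf} applicable in the absence of an a priori M\"obius symmetry assumption. Since this extension is already absorbed into the statement of Corollary \ref{ac-dilation}, nothing further needs to be said at this stage.

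Second, I would apply Theorem \ref{noninteracting-chiral}. Its hypotheses are Poincar\'e covariance, asymptotic completeness, noninteraction, Haag duality, and Bisognano-Wichmann property. Four of these are standing assumptions of the corollary, and noninteraction was obtained in the previous step. The conclusion of Theorem \ref{noninteracting-chiral} is precisely that $\A$ is unitarily equivalent to a chiral M\"obius covariant net, constructed as a tensor product of two nets on $S^1$ obtained via the standard half-sided modular inclusions $P_+\N_+^\tout(a_{-1}) \subset P_+\N_+^\tout$ and $P_-\N_-^\tin(a_{1}) \subset P_-\N_-^\tin$ from Lemma \ref{hsmi}. This is exactly what the corollary asserts.

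Since this is a pure combination of two previously proved statements, there is no serious obstacle. The only thing to check is that the hypotheses match, and they do verbatim. If one wanted to write a self-contained proof, the heavy lifting would still be entirely in Corollary \ref{ac-dilation} (and behind it Remark \ref{extension-to-mob} and Theorem \ref{noninteraction-bf}) together with Theorem \ref{noninteracting-chiral}, but for the corollary itself a one-line argument citing these two results suffices.
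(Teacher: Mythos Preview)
Your proposal is correct and matches the paper's own argument exactly: the corollary is stated immediately after Theorem \ref{noninteracting-chiral} with the one-line justification ``Combining this and Corollary \ref{ac-dilation}'', which is precisely the two-step chaining you describe.
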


\subsection{Asymptotic fields in M\"obius covariant nets}\label{af-in-mob}
Finally, as a further consequence of the considerations on conditional expectations,
we show that in- and out-asymptotic fields coincide
in M\"obius covariant nets even without assuming the asymptotic completeness.
Lemma \ref{asymptotic-field-is-expectation} has been proved for general
Poincar\'e covariant nets with Bisognano-Wichmann property,
hence it applies to M\"obius covariant nets as well (see (10M) in Section \ref{conformalnets}).
We use the same notations as in Section \ref{subspace}.

Let $\A_\L\otimes\A_\R$ be the maximal chiral subnet. Since both nets
$\A$ and $\A_\L\otimes\A_\R$ are M\"obius covariant, they satisfy
Bisognano-Wichmann property in $\E$ ((10M) in Section \ref{conformalnets}).
Theorem \ref{takesaki} of Takesaki implies that there is a conditional
expectation $E_D$ from $\A(D)$ onto $\A_\L(I)\otimes\A_\R(J)$, where
$D = I\times J$ is a double cone in $\E$, which is implemented by the projection $P$
onto $\H^\tin = \H^\tout = \overline{\A_\L(I)\vee\A_\R(J)\Omega}$
(see Theorem \ref{wave-space}). Since the projection $P$ does not depend on
$D$, the conditional expectations $\{E_D\}_{D\subset\E}$ are compatible,
namely, if $D_1\subset D_2$ then it holds that $E_{D_2}|_{\A(D_1)} = E_{D_1}$.
Indeed, it holds that $E_{D_1}(x)\Omega = Px\Omega = E_{D_2}(x)\Omega$ and
$\Omega$ is separating for $\A(D_2)$.

In addition, there is a conditional expectation $\id\otimes\omega$
from $\A_\L(I)\otimes\A_\R(J)\simeq\A_\L(I)\vee\A_\R(J)$
onto $\A_\L(I)$ which obviously preserves $\omega$ and is
implemented by $P_+$ (see Theorem \ref{takesaki}). If we take intervals
$I_1 \subset I_2$, then the corresponding expectations are obviously
compatible. By composing this expectation
and $E_D$, we find an expectation $E_\L$ from $\A(D)$ onto $\A_\L(I)$ which preserves
$\omega$ and is implemented by $P_+$ (we omit the dependence on $D$ since this
family of expectations is compatible).
Analogous statements hold for $\A_\R(J)$.
\begin{theorem}
If $\A$ is a M\"obius covariant net, then for $x \in \A(D)$ with some
bounded double cone $D = I\times J$,
it holds that $\Phi_+^\tout(x)=\Phi_+^\tin(x)$ and
$\Phi_-^\tout(x)=\Phi_-^\tin(x)$.
\end{theorem}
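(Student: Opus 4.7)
The plan is to show that both $\Phi_+^\tin(x)$ and $\Phi_+^\tout(x)$ coincide with $E_\L(x)$, where $E_\L : \A(D) \to \A_\L(I)$ is the vacuum-preserving conditional expectation implemented by $P_+$ constructed in the paragraph preceding the statement. Once this identification is in hand, the equality $\Phi_+^\tin(x) = \Phi_+^\tout(x)$ is immediate, and the companion statement $\Phi_-^\tin(x) = \Phi_-^\tout(x)$ follows by the same argument applied to the conditional expectation $E_\R : \A(D) \to \A_\R(J)$ (implemented by $P_-$) and $T_-$ in place of $T_+$.

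The first step is to observe that the lightlike translation $T_+(t)$ acts trivially by conjugation on $\A_\L(I)$. By Definition \ref{chiral-component} we have $\A_\L(I) \subset U(\widetilde{G}_\R)'$, so it is enough to know that $T_+$ lies in $\widetilde{G}_\R$; this is forced because by Proposition \ref{classification-psl2r} and Lemma \ref{invariant-vectors} both the $T_+$-invariant subspace and the $\widetilde{G}_\R$-invariant subspace equal $\H_+ = \overline{\A_\L(I)\Omega}$. Consequently, for every $\T$ one has
\[
E_\L(x)_+(h_\T) = \int dt\, h_\T(t)\, T_+(t) E_\L(x) T_+(t)^* = E_\L(x),
\]
so both asymptotic limits $\Phi_+^\tin(E_\L(x))$ and $\Phi_+^\tout(E_\L(x))$ exist trivially and equal $E_\L(x)$.

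The second step is the conclusion. Lemma \ref{asymptotic-field} tells us that $\Phi_+^\tin(y)$ and $\Phi_+^\tout(y)$ depend only on $P_+ y\Omega$. Since $E_\L$ is implemented by $P_+$, we have $E_\L(x)\Omega = P_+ x\Omega$, hence $P_+ E_\L(x)\Omega = P_+ x\Omega$, and therefore
\[
\Phi_+^\tin(x) = \Phi_+^\tin(E_\L(x)) = E_\L(x) = \Phi_+^\tout(E_\L(x)) = \Phi_+^\tout(x).
\]
The argument for $\Phi_-^\tin(x) = \Phi_-^\tout(x)$ is identical after replacing $E_\L, P_+, T_+, \widetilde{G}_\R$ with $E_\R, P_-, T_-, \widetilde{G}_\L$.

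The main obstacle, and the only bookkeeping step that requires care, is the identification of the one-parameter group $T_+$ with the translation subgroup of $\widetilde{G}_\R$ (rather than $\widetilde{G}_\L$). This is a convention-tracking issue, settled by matching the characterization of $\H_+$ as the $T_+$-invariant subspace with its characterization via Lemma \ref{invariant-vectors} as the $\widetilde{G}_\R$-invariant subspace. Once this is pinned down, no further ingredient beyond Lemma \ref{asymptotic-field}, Definition \ref{chiral-component}, Lemma \ref{invariant-vectors}, and the conditional expectations $E_\L, E_\R$ introduced just before the theorem is required.
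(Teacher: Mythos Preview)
Your proof is correct and follows the same overall strategy as the paper: identify both $\Phi_+^\tin(x)$ and $\Phi_+^\tout(x)$ with $E_\L(x)$. The technical execution differs slightly. The paper first reduces to $D \subset W_\R$ by translation, then observes that both $\Phi_+^\tout(x)$ and $E_\L(x)$ lie in $\A(W_\R)$ (the former via Lemma~\ref{asymptotic-field-is-expectation}, the latter because $\A_\L(I)\subset\A(D)\subset\A(W_\R)$), that they act identically on $\Omega$ (both give $P_+ x\Omega$), and concludes by the separating property of $\Omega$ for $\A(W_\R)$; the same is repeated for $\Phi_+^\tin$. Your route is a little more direct: you use that $E_\L(x)\in\A_\L(I)$ is literally fixed under $\Ad T_+(t)$, so $\Phi_+^{\tin/\tout}(E_\L(x))=E_\L(x)$ holds trivially, and then the clause ``$\Phi_\pm^{\tin/\tout}(x)$ depends only on $P_\pm x\Omega$'' from Lemma~\ref{asymptotic-field} closes the argument. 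This avoids both the translation into $W_\R$ and any appeal to Lemma~\ref{asymptotic-field-is-expectation}; in fact the observation $\Phi_+^\tin(y)=y$ for $y\in\A_\L(I)$ already appears in the proof of Theorem~\ref{wave-space}. One cosmetic remark: your justification that $T_+$ lies in $\widetilde{G}_\R$ by matching invariant subspaces is correct but roundabout; it is more transparent to note directly from the lightcone conventions that $T_+(t)=T(t,t)$ is translation along $\L_-$, which is the factor on which $\widetilde{G}_\R$ acts.
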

\begin{proof}
We exhibit the proof only for ``$+$'' objects since the other is analogous.
As we have seen in Lemma \ref{asymptotic-field-is-expectation},
$\Phi_+^\tout$ is a conditional expectation from $\A(W_\R)$ onto $\N_+^\tout$
which preserves $\omega$.

We claim that $\Phi_+^\tout(x) = E_\L(x)$.
We may assume that $D \subset W_\R$ since $\Phi_+^\tout$ commutes
with translations, and $E_\L$ is compatible and the translated expectation
$\Ad T(a)\circ E_\L \circ \Ad T(-a)$ still preserves $\omega$
(hence $E_\L$ commutes with translation $\Ad T(a)$ as well).
It holds that $\Phi_+^\tout(x) \subset \A(W_\R)$ and
$E_\L(x) \in \A_\L(I) \subset \A(D) \subset \A(W_\R)$.
In addition we have
$\Phi_+^\tout(x)\Omega = P_+ x\Omega = E_\L(x)\Omega$,
hence by the separating property of $\Omega$ for $\A(W_\R)$ we obtain
the claimed equality.

Similarly one sees $\Phi_+^\tin(x) = E_\L(x)$, hence two
maps $\Phi_+^\tout$ and $\Phi_+^\tin$ coincide.
\end{proof}

\section{Concluding remarks}\label{concluding}
In the first part of this work we showed that waves in two-dimensional M\"obius nets
do not interact.
This result can be seen as a (non-trivial) adaptation of an argument of Buchholz and Fredenhagen \cite{BF}
to the two-dimensional case. Moreover, we showed
that collision states of waves correspond precisely to
states generated from the vacuum by observables from the maximal chiral subnet.
This implies the equivalence between asymptotic completeness and chirality of a given
M\"obius covariant theory. As we observed in \ref{chiral-components}, chiral observables
admit geometric definitions. This is a special feature of two-dimensional M\"obius theory,
which, to our knowledge, does not have a counterpart in higher-dimensional theories.

The second part of this paper relies on our observation that, in a Poincar\'e covariant net 
with the Bisognano-Wichmann property, the maps which give asymptotic fields are conditional expectations.
Exploiting this fact, we showed that a Haag dual net is asymptotically complete and noninteracting if and only if
it is a chiral M\"obius net.
We also strengthened our result on noninteraction by showing that in- and out-asymptotic 
fields in any (possibly non-chiral) M\"obius net coincide.

The orthogonal complement of the space of collision states,
which may be quite large as we explained in Section \ref{how-large},
is a natural subject of future research.
Fortunately, we have tools to investigate
this orthogonal complement: They include the theory of
particle weights \cite{BPS, Porrmann1}, developed to study infraparticles.
With the help of this theory we have confirmed that infraparticles are present in all states
in product representations of the chiral subnet, hence in the orthogonal complement
of the space of collision states of waves in any completely rational net
\cite{DT10-1,KLM}.
The question of interaction and asymptotic completeness of these excitations remains
open to date (for a general account on asymptotic completeness, see \cite{Buchholz87}).
However, the fact that the incoming and outgoing asymptotic fields coincide in
M\"obius covariant theories on the entire Hilbert space suggests the absence of interaction.
These issues are under investigation \cite{DT11-3}.

\subsubsection*{Acknowledgment.}
I wish to thank Wojciech Dybalski for his many advices and
detailed reading of the manuscript, Roberto Longo for his constant support
and Yasuyuki Kawahigashi for his
useful suggestion.
A part of this work has been done during my stay at University of G\"ottingen in
August 2010. The hospitality of the institute and of the members of Mathematical Physics
group, in particular Karl-Henning Rehren and Daniela Cadamuro, is gratefully acknowledged.

\appendix
\newcommand{\appsection}[1]{\let\oldthesection\thesection
  \renewcommand{\thesection}{Appendix \oldthesection}
  \section{#1}\let\thesection\oldthesection}

\appsection{Remarks on conditional expectations}\label{expectation}
The Bisognano-Wichmann property asserts a relation between the dynamics
of the net and the Tomita-Takesaki modular theory. In the modular theory, one of the fundamental
tools is conditional expectation. We briefly recall here its definition
and discuss some immediate consequences.
A {\bf conditional expectation} from a von Neumann algebra $\M$ onto
a subalgebra $\N$ is a linear map $E: \M \to \N$ satisfying the following
properties:
\begin{itemize}
\item $E(x) = x$ for $x \in \N$.
\item $E(xyz) = xE(y)z$ for $x,z \in \N$, $y \in \M$.
\item $E(x)^*E(x) \le E(x^*x)$ for $x \in \M$.
\end{itemize}
We see in Section \ref{characterization} that the maps which give asymptotic fields
can be considered as conditional expectations
between appropriate von Neumann algebras. Let us recall the fundamental
theorem of Takesaki \cite[Theorem IX.4.2]{Takesaki}.
\begin{theorem}\label{takesaki}
Let $\N \subset \M$ be an inclusion of von Neumann algebras and $\f$ be a
faithful normal state on $\M$. Then the following are equivalent:
\begin{itemize}
\item $\N$ is invariant under the modular automorphism group $\s^\f_t$.
\item There is a normal conditional expectation $E$ from $\M$ onto $\N$ such that
$\f = \f \circ E$.
\end{itemize}
Furthermore, if the above conditions hold, then the conditional expectation
$E$ is implemented by a projection in the following sense: We consider
the GNS representation $\pi_\f$ and $\Phi$ be the GNS vector. Let $P_\N$
be the projection onto the subspace $\overline{\N\Phi}$. Then it holds
that $E(x)\Phi = P_\N x\Phi$. In particular, $\N = \M$ if and only if
$P_\N = \1$ (hence E = \id).
The modular automorphism group of $\f|_\N$ is equal to $\s^\f|_\N$.
\end{theorem}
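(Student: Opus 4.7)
The plan is to proceed via the GNS representation $(\pi_\f,\H,\Phi)$ of $\f$, with modular operator $\Delta$, modular conjugation $J$, and Tomita operator $S=J\Delta^{1/2}$ characterized by $Sx\Phi=x^*\Phi$ for $x\in\M$. Let $e:=P_\N$ be the projection onto the closed subspace $[\N\Phi]$; since $\N\Phi$ is $\N$-invariant one has $e\in\N'$, and since $\f$ is faithful, $\Phi$ is separating for $\M$ and for $\N$.

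For the implication ``$\s^\f$-invariance of $\N$ implies existence of $E$'', I would first observe that $\Delta^{it}\N\Phi=\s^\f_t(\N)\Phi=\N\Phi$, which yields $[\Delta^{it},e]=0$, so that $\Delta^{1/2}$ and $e$ commute on the natural core. For $x\in\N$ we have $Sx\Phi=x^*\Phi\in\N\Phi\subset e\H$, whence $S$ restricts to the Tomita operator $S_\N$ of the standard system $(\N,\Phi)$ on $e\H$; taking polar decompositions one obtains $J|_{e\H}=J_\N$ and $\Delta^{it}|_{e\H}=\Delta_\N^{it}$, and in particular $JeJ=e$. For $x\in\M$ I would then define $E(x)$ by $E(x)\Phi:=ex\Phi$. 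The main task is to show that $exe$, regarded as an operator on $e\H$, lies in $\N|_{e\H}$: for this I would invoke Tomita's double commutant theorem on $e\H$, which gives $(\N|_{e\H})'=J_\N\N J_\N$, and then use $JeJ=e$ together with $e\in\N'$ to identify this commutant with the reduction of $\N'$ to $e\H$. Since $exe$ commutes with the latter, it lies in $\N|_{e\H}$, yielding a unique $E(x)\in\N$ with $E(x)e=exe$. The bimodule property $E(ayb)=aE(y)b$, the condition $E|_\N=\id$, and $\f\circ E=\f$ all follow from $E(x)\Phi=ex\Phi$ using that $\Phi$ is separating for $\M$.

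The converse direction and the final assertion can be handled together by verifying that $\s^\f|_\N$ satisfies the modular KMS condition for $\f|_\N$. Given $E$ with $\f=\f\circ E$, for $a,b\in\N$ the function $z\mapsto\f(a\s^\f_z(b))$ is holomorphic in the strip $0<\Im z<1$ with boundary values related by the KMS identity for $\f$ on $\M$; applying $\f\circ E=\f$ together with the bimodule property of $E$ one checks that these boundary values agree with $\f|_\N$-KMS boundary values on $\N$, and uniqueness of the modular group characterized by the KMS condition simultaneously forces $\s^\f_t(\N)\subset\N$ and $\s^\f|_\N=\sigma^{\f|_\N}$. The equivalence $\N=\M\iff P_\N=\1$ is then immediate, since $P_\N=\1$ means $\N\Phi$ is dense in $\H$ and $\Phi$ is separating for $\M$.

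The principal obstacle is the identification $exe\in\N|_{e\H}$ in the first step: it relies on the delicate point that when $e\in\N'$ commutes with $J$, the intrinsic Tomita commutant $J_\N\N J_\N$ computed on $e\H$ matches the extrinsic reduction $e\N'e$, which is where the invariance hypothesis is really used. Once this structural statement is in hand, all remaining claims of the theorem follow by direct calculation from the defining relation $E(x)\Phi=P_\N x\Phi$.
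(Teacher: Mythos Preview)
The paper does not supply a proof of this theorem: it is quoted verbatim as Takesaki's result with the citation \cite[Theorem IX.4.2]{Takesaki}, and is used in the text only as a black box. There is therefore no ``paper's own proof'' against which to compare your proposal.

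That said, a brief comment on your sketch. The forward implication is essentially the standard argument and is sound; the only imprecision is verbal (the commutant on $e\H$ is indeed the reduction $e\N'e$, but what one actually checks is commutation of $exe$ with $J\N J$, which coincides with $e\N'e$ on $e\H$ via Tomita's theorem once $JeJ=e$ is known). Your treatment of the converse, however, is circular as written. You propose to compare the KMS boundary values of $z\mapsto\f(a\s^\f_z(b))$ for $a,b\in\N$ with those for $\f|_\N$ and then invoke uniqueness of the modular group; but uniqueness via KMS requires a candidate one-parameter \emph{automorphism group of $\N$}, and $\s^\f_t|_\N$ is not known to map $\N$ into $\N$ at this stage, so there is nothing to feed into the uniqueness statement. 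The clean route (and the one in Takesaki's book) is: first derive the implementation formula $E(x)\Phi=e\,x\Phi$ directly from $\f\circ E=\f$ and the bimodule property; then use $E(x^*)=E(x)^*$ to see that $e$ commutes with the Tomita operator $S$, hence with $J$ and with $\Delta^{it}$; finally, for $n\in\N$ one has $\s^\f_t(n)\Phi=\Delta^{it}n\Phi\in e\H$, so $E(\s^\f_t(n))\Phi=\s^\f_t(n)\Phi$ and the separating property of $\Phi$ gives $\s^\f_t(n)=E(\s^\f_t(n))\in\N$. The identification $\s^{\f|_\N}=\s^\f|_\N$ then follows by restricting the polar decomposition to $e\H$.
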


A Poincar\'e covariant net $\A$ is said to be {\bf wedge dual} if
it holds that $\A(W_\L)' = \A(W_\L') (= \A(W_\R))$ (see Section \ref{characterization}
for $W_\L$ and $W_\R$).
With the help of conditional expectation, it is easy to deduce that
Bisognano-Wichmann property (see Section \ref{characterization}) implies wedge duality,
although this implication has been essentially known \cite{Borchers, Rigotti}.

\begin{proposition}\label{wedge-duality}
If a Poincar\'e covariant net $\A$ satisfies Bisognano-Wichmann property,
then it is wedge dual.
\end{proposition}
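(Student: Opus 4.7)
The plan is to realize $\A(W_\L)$ as the target of a conditional expectation from $\A(W_\R)'$, and then use Reeh--Schlieder to upgrade that expectation to the identity map.

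First I would record the obvious inclusion: since $W_\L$ and $W_\R$ are spacelike separated, locality gives $\A(W_\L) \subset \A(W_\R)'$. Next I would observe that $\Omega$ is cyclic and separating for $\A(W_\R)$ (hence for $\A(W_\R)'$), and is also cyclic for $\A(W_\L)$, since $W_\L$ contains bounded open regions and the Reeh--Schlieder property (7) applies to their algebras, which together generate $\A(W_\L)$ by definition of the net on unbounded regions.

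The core step uses Bisognano--Wichmann together with Takesaki's Theorem \ref{takesaki}. By Bisognano--Wichmann, the modular automorphism group of $\A(W_\R)$ with respect to $\omega = \langle\Omega,\cdot\,\Omega\rangle$ is $\sigma^\omega_t = \Ad U(\Lambda(-2\pi t))$; the modular automorphism group of the commutant $\A(W_\R)'$ is then $\Ad U(\Lambda(2\pi t))$. Since the standard left wedge $W_\L$ is globally invariant under all Lorentz boosts $\Lambda(s)$, Poincar\'e covariance of $\A$ implies that $\A(W_\L)$ is invariant under this modular group. Takesaki's theorem therefore yields a faithful normal conditional expectation
\[
E : \A(W_\R)' \longrightarrow \A(W_\L)
\]
preserving $\omega$, and the theorem furthermore tells us that $E$ is implemented by the projection $P_{\A(W_\L)}$ onto $\overline{\A(W_\L)\Omega}$.

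To finish, I would invoke cyclicity of $\Omega$ for $\A(W_\L)$: this forces $P_{\A(W_\L)} = \1$, hence by the last clause of Theorem~\ref{takesaki} one concludes $E = \id$ and $\A(W_\L) = \A(W_\R)'$, which is exactly wedge duality. I do not anticipate a genuine obstacle here; the only subtlety worth being careful about is checking that the Reeh--Schlieder property listed in (7) for bounded $O$ really transfers to the unbounded wedge under the author's convention $\A(W_\L) := \bigvee_{D\subset W_\L}\A(D)$, but this is immediate.
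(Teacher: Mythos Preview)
Your proof is correct and follows essentially the same strategy as the paper: apply Takesaki's theorem to the inclusion of one wedge algebra in the commutant of the other, using Bisognano--Wichmann to identify the modular group with boosts and Reeh--Schlieder to force the resulting conditional expectation to be the identity. The only difference is that the paper works with the inclusion $\A(W_\R) \subset \A(W_\L)'$ rather than your $\A(W_\L) \subset \A(W_\R)'$; your choice is in fact slightly more direct, since the Bisognano--Wichmann property is stated in the paper for $\A(W_\R)$, so the modular group of $\A(W_\R)'$ is immediately available without any auxiliary step.
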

\begin{proof}
The modular automorphism group $\s^\Omega_t$ of $\A(W_\L)'$ is implemented
by $\Delta_\Omega^{-it}$, which is equal to $U(\Lambda(-2\pi t))$ by
Bisognano-Wichmann property. It is obvious that $\A(W_\R) \subset \A(W_\L)'$
and $\A(W_\R)$ is invariant under $\Ad U(\Lambda(-2\pi t)) = \Ad \Delta_\Omega^{-it}$.
Hence by Takesaki's Theorem \ref{takesaki}, there is a conditional expectation
$E$ from $\A(W_\L)'$ onto $\A(W_\R)$ which preserves $\omega = \<\Omega, \cdot \Omega\>$
and it is implemented by the projection onto the subspace $\overline{\A(W_\R)\Omega}$.
But by Reeh-Schlieder property it is the whole space $\H$, hence $E$ is the
identity map and we obtain $\A(W_\L)' = \A(W_\R)$.
\end{proof}

For a given net $\A$, we can associate the {\bf dual net} $\A^\mathrm{d}$
(\cite[Section 1.14]{Baumgaertel}), defined by
\[
\A^\mathrm{d}(O_0) = \bigcap_{O\perp O_0} \A(O)',
\]
where $O \perp O_0$ means that $O$ and $O_0$ are causally disjoint.
$\A^\mathrm{d}$ does not necessarily satisfy locality nor additivity.
Additivity is usually necessary only in proving Reeh-Schlieder property,
so we do not discuss here. We have the following.

\begin{proposition}[\cite{Baumgaertel}]\label{essential-duality}
If a Poincar\'e covariant net $\A$ is wedge dual,
then $\A^\mathrm{d}$ is local and Haag dual.
\end{proposition}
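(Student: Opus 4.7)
The plan rests on the fact that in two-dimensional Minkowski space the causal complement of any double cone decomposes into exactly two connected wedge-shaped components. I start by fixing a double cone $D=(W_\R+a)\cap(W_\L+b)$, whose causal complement is $D'=(W_\L+a)\cup(W_\R+b)$. Any bounded region spacelike to $D$ lies inside one of these two wedges, and by isotony (together with the additive extension of $\A$ to unbounded regions) the intersection defining $\A^\mathrm{d}(D)$ collapses to $\A(W_\L+a)'\cap\A(W_\R+b)'$. Wedge duality (translated by Poincar\'e covariance) then yields the compact expression
\[
\A^\mathrm{d}(D)=\A(W_\R+a)\cap\A(W_\L+b),
\]
and the same reasoning gives $\A^\mathrm{d}(W)=\A(W)$ for each wedge $W$.

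For locality of $\A^\mathrm{d}$, I take two spacelike separated double cones $D_1,D_2$ and assume WLOG that $D_1$ lies to the spacelike left of $D_2$. Expressing the spacelike relation in lightcone coordinates yields the wedge inclusion $W_\R+a_2\subset W_\R+b_1$ (where $a_i,b_i$ are the past and future tips of $D_i$). Using isotony and wedge duality once more, this gives $\A^\mathrm{d}(D_2)\subset\A(W_\R+a_2)\subset\A(W_\R+b_1)=\A(W_\L+b_1)'$, and since $\A^\mathrm{d}(D_1)\subset\A(W_\L+b_1)$ by the formula from the previous step, the two algebras commute.

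For Haag duality I apply the standard commutant identity $(M\cap N)'=M'\vee N'$ for von Neumann algebras to the expression for $\A^\mathrm{d}(D)$, which together with wedge duality yields
\[
\A^\mathrm{d}(D)'=\A(W_\L+a)\vee\A(W_\R+b).
\]
This matches $\A^\mathrm{d}(D')$ interpreted via the natural additive extension of $\A^\mathrm{d}$ to the unbounded region $D'$, because the first step already gives $\A^\mathrm{d}(W_\L+a)=\A(W_\L+a)$ and $\A^\mathrm{d}(W_\R+b)=\A(W_\R+b)$.

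The only delicate point I anticipate is the initial reduction of the intersection in $\A^\mathrm{d}(D)$ to just two wedge algebras; this uses the two-dimensional decomposition of $D'$ into two connected components together with isotony, and does not carry over to higher dimensions where $D'$ is connected.
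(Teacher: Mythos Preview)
The paper does not give its own proof of this proposition; it is stated with a citation to Baumg\"artel's monograph. Your argument is the standard one and is essentially correct: in two dimensions the causal complement of a double cone has exactly two wedge-shaped components, so the defining intersection for $\A^\mathrm{d}(D)$ reduces, via isotony and the additive definition of $\A$ on unbounded regions, to an intersection of two wedge commutants; wedge duality then rewrites this as $\A(W_\R+a)\cap\A(W_\L+b)$. Your derivations of locality (via the inclusion $W_\R+a_2\subset W_\R+b_1$ when $D_2$ sits in the right component of $D_1'$) and of Haag duality (via $(M\cap N)'=M'\vee N'$ and $\A^\mathrm{d}(W)=\A(W)$) are sound.

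Two minor points worth tightening. First, in the reduction of $\bigcap_{O\perp D}\A(O)'$ to $\A(W_\L+a)'\cap\A(W_\R+b)'$, note that the regions $O$ in the paper's definition are bounded but not necessarily connected; you should invoke additivity of $\A$ (not just isotony) to conclude $\A(O)\subset\A(W_\L+a)\vee\A(W_\R+b)$ for disconnected $O$, and then take an increasing sequence of bounded $O_n\nearrow W_\L+a$ to obtain the reverse inclusion. Second, in the Haag-duality step you identify $\A^\mathrm{d}(D')$ with $\A(W_\L+a)\vee\A(W_\R+b)$ via the additive extension; to justify $\bigvee_{O\subset W_\L+a}\A^\mathrm{d}(O)=\A(W_\L+a)$ you need both $\A(O)\subset\A^\mathrm{d}(O)$ (from locality of $\A$) and $\A^\mathrm{d}(O)\subset\A(W_\L+b_O)\subset\A(W_\L+a)$ (from your formula for $\A^\mathrm{d}(O)$ and the geometric fact that the future tip $b_O$ of $O\subset W_\L+a$ lies in $\overline{W_\L+a}$). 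With these clarifications your proof is complete.
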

Thus, if we consider the dual net $\A^\mathrm{d}$ as a natural extension,
Haag duality for a net with Bisognano-Wichmann property is not
a strong requirement and the only essential additional assumption in
Section \ref{characterization} is Bisognano-Wichmann property.

\appsection{Chiral components of conformal nets}\label{chiral-components}
In this appendix we consider various definitions of chiral components
when a net $\A$ is conformal.
These observations are not needed in the main text at the technical level but
justify the interpretation of chiral observables as observables localized on lightlines.

We use the notations from Section \ref{subspace}.
\begin{proposition}\label{component1}
For distant intervals $J_1, J_2 \subset \RR$ (i.e.\! they are disjoint with
nonzero distance), it holds that
\[
\A_\L(I) = \A(I\times J_1)\cap\A(I\times J_2).
\]
\end{proposition}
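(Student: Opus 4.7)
The inclusion $\A_\L(I) \subset \A(I\times J_1)\cap\A(I\times J_2)$ is immediate from Definition \ref{chiral-component}, since $\A_\L(I) = \A(I\times J)\cap U(\widetilde{G}_\R)'$ for \emph{every} choice of $J$. For the reverse inclusion it suffices to show that every $x \in \A(I\times J_1)\cap\A(I\times J_2)$ commutes with $U(\{\id\}\times\overline{\psl2r}) = U(\widetilde{G}_\R)$; combined with $x \in \A(I\times J_1)$ this yields $x \in \A_\L(I)$.

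Fix such an $x$ and take $h \in \overline{\psl2r}$ sufficiently close to the identity. Because $J_1,J_2$ are disjoint with nonzero distance, the two open intervals $K_1 := S^1\setminus\overline{J_2}$ and $K_2 := S^1\setminus\overline{J_1}$ form an open cover of $S^1$ satisfying $K_1\cap J_2 = \emptyset$ and $K_2\cap J_1 = \emptyset$. The classical fragmentation lemma for $\diffs1$ then allows us to factor (a lift of) $h$ as $h = g_1 g_2$ with $g_i \in \overline{\diffs1}$ and $\supp(g_i)\subset K_i$; the lifts are unambiguous because each $g_i$ is the identity outside a proper subinterval of $S^1$.

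The right-handed analogue of Proposition \ref{chiral-net-contains-virasoro}, obtained by interchanging the roles of the two lightrays, gives $U(\id\times g_i) \in \A(I''\times K_i)$ for every interval $I''$. Since $K_2$ is an interval disjoint from $J_1$, one can select a lift of $K_2$ contained in $J_1^+$ (or in $J_1^-$); then picking $I''\subset I^-$ (respectively $I^+$) places $I''\times K_2$ inside the causal complement $(I\times J_1)'$ in $\E$, so locality forces $U(\id\times g_2)$ to commute with $\A(I\times J_1)$, and in particular with $x$. Exchanging indices and using $x\in\A(I\times J_2)$ yields $[U(\id\times g_1),x] = 0$ as well. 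The conjugation $U(\id\times h)\cdot x\cdot U(\id\times h)^*$ therefore returns $x$, any projective phase in $U(\id\times g_1 g_2) = c\,U(\id\times g_1)U(\id\times g_2)$ cancelling in the unitary conjugation. Since $\overline{\psl2r}$ is connected, hence generated by any neighborhood of the identity, the commutation $[x,U(\id\times h)]=0$ extends to all $h\in\overline{\psl2r}$, as required.

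The only non-formal ingredient is the fragmentation step: it combines the geometric observation that ``disjoint with nonzero distance'' is exactly what permits a two-element open cover of $S^1$ respecting both disjointness relations, with the standard fact that diffeomorphisms of $S^1$ close to the identity factor through any prescribed finite open cover. All other steps are routine consequences of conformal covariance, locality in $\E$, and Proposition \ref{chiral-net-contains-virasoro}.
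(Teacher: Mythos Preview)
Your proof is correct and follows essentially the same strategy as the paper: both rely on fragmenting a M\"obius transformation over the open cover $\{S^1\setminus\overline{J_1},\,S^1\setminus\overline{J_2}\}$ of $S^1$ (which the paper writes as $\{J_1^-,J_2^-\}$) together with the right-handed analogue of Proposition~\ref{chiral-net-contains-virasoro}. The only difference is packaging: the paper passes to commutants via Haag duality in $\E$ and shows $U(\widetilde{G}_\R)\subset\A(I^+\times J_1^-)\vee\A(I^+\times J_2^-)$, whereas you argue element-wise that each $x$ in the intersection commutes with the fragmented pieces $U(\id\times g_i)$---a cosmetic rather than substantive distinction.
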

\begin{proof}
Since the M\"obius group $\psl2r = G_\R$ acts transitively on the
family of intervals in $\RR \subset S^1$, the inclusion
$\A_\L(I) \subset \A(I\times J)$ holds for any interval $J$ by
the covariance of the net $\A$. Thus inclusions in one direction is proven.

To see the converse inclusion, we consider the commutants.
By the Haag duality in $\E$, we have
\[
\A_\L(I)' = (\A(I\times J) \cap U(\widetilde{G}_\R)')' = \A(I^+\times J^-) \vee U(\widetilde{G}_\R)
\left( = \A(I^-\times J^+)\vee U(\widetilde{G}_\R)\right),
\]
where $I^{\pm},J^{\pm}$ are defined in Section \ref{conformalnets}, and
\[
\left(\A(I\times J_1)\cap\A(I\times J_2)\right)' = \A(I^+\times J_1^-)\vee \A(I^+\times J_2^-).
\]

Recall that we can choose an arbitrary $J$. Let $J$ be an interval which
includes both $J_1$ and $J_2$. In this case we have $J^- \subset J_1^-$ and
$J^- \subset J_2^-$, hence
\[
\A(I^+\times J^-) \subset \A(I^+\times J_1^-)\vee\A(I^+\times J_2^-).
\]
Furthermore, the fact that $J_1$ and $J_2$ are distant implies that
the family of (two) intervals $\{J_1^-,J_2^-\}$ is an open cover of a closed interval of
length $2\pi$. The algebra $\A(I^+\times J_1^-)$ (respectively $\A(I^+\times J_2^-)$)
contains the representatives of diffeomorphisms of the form $\id \times g_\R$ with
$\supp(g_\R) \subset J_1$ (respectively $\supp(g_\R) \subset J_2$) in the sense that
$\conf$ is a quotient group of $\overline{\diffs1}\times\overline{\diffs1}$
(see Section \ref{conformalnets}).

We claim that the algebra $\A(I^+\times J_1^-)\vee\A(I^+\times J_2^-)$ contains
any representative of the form $\id \times g_\R$ where
$g_-$ is an arbitrary element in $\overline{\diffs1}$.
Note that $\overline{\diffs1}$ can be identified with the group of diffeomorphisms of $\RR$
commuting with the translation by $2\pi$ and an element of the form $\id\times g_\R$
where $\supp(g_\R) \subset J_1^-$ or $\supp(g_\R) \subset J_2^-$ can be viewed as a diffeomorphism
with a periodic support. The group $\overline{\diffs1}$ is generated by such elements,
hence we obtain the claim. In particular it contains
the representatives of the universal cover $\widetilde{G}_\R$ of the M\"obius group.
Summing up, we have shown the inclusion
\[
\A(I^+\times J^-)\vee U(\widetilde{G}_\R) \subset \A(I^+\times J_1^-)\vee\A(I^+\times J_2^-).
\]
The commutant of this relation gives the required inclusion.
\end{proof}
In \cite{KL03}, the intersection $\bigcap_J \A(I\times J)$ is taken as the definition of
the chiral component. In fact, this and Definition \ref{chiral-component} coincide under
the diffeomorphism covariance.
\begin{corollary}
We have $\A_\L(I) = \bigcap_{J} \A(I\times J)$. Here, the intersection
can be taken over the set of finite length intervals contained in $\RR = S^1\setminus \{-1\}$ or
even all intervals in the universal covering space of $S^1$ by considering
$\A$ as a net on $\E$.
\end{corollary}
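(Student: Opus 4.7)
The plan is to deduce the corollary directly from Proposition \ref{component1}, viewing it essentially as the limit of that intersection formula as the two intervals $J_1, J_2$ are allowed to vary.

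First, I would establish the easy containment $\A_\L(I) \subset \bigcap_J \A(I \times J)$. This is already visible inside the proof of Proposition \ref{component1}: since $\widetilde{G}_\R$ acts transitively on the family of intervals, M\"obius covariance gives $\A_\L(I) \subset \A(I \times J)$ for every interval $J$. Taking the intersection over $J$ preserves the inclusion, which settles this direction regardless of whether $J$ ranges over finite intervals in $\RR$ or over all intervals in the universal cover of $S^1$.

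For the reverse containment, I would pick any two distant intervals $J_1, J_2$ (such a pair certainly exists in $\RR$, and a fortiori in the universal cover). By Proposition \ref{component1},
\[
\bigcap_{J} \A(I \times J) \;\subset\; \A(I \times J_1) \cap \A(I \times J_2) \;=\; \A_\L(I),
\]
which completes the equality. The same argument works for either version of the intersection, because in both cases the index set contains such a pair of distant intervals, so the intersection is sandwiched between two copies of $\A_\L(I)$.

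There is no genuine obstacle here; Proposition \ref{component1} already does the real work by showing that two distant intervals suffice to carve out $\A_\L(I)$. The only minor point worth writing carefully is making explicit that in each of the two variants (finite-length intervals in $\RR$, or all intervals in the universal cover regarded via the net on $\E$) one really does have two distant intervals available, so that the sandwich argument applies uniformly.
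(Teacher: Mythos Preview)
Your argument is correct and is exactly the intended one: the paper states the corollary without proof, as an immediate consequence of Proposition \ref{component1}, and your sandwich argument (transitivity of $\widetilde{G}_\R$ for one inclusion, the proposition applied to a single pair of distant intervals for the other) is the obvious way to spell this out. Nothing to add.
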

\begin{remark}\label{nontrivial-components}
 From Proposition \ref{chiral-net-contains-virasoro}, it follows that,
for a conformal net $\A$, $\A_\L(I)$ contains the representatives
of diffeomorphisms $g_\L\times \id$ with $g_\L$ supported in $I$
and hence it is nontrivial, although the intersection of regions $\bigcap_{J} I\times J$ is empty.
A similar statement holds for $\A_\R$.
\end{remark}

If the chiral components $\A_\L,\A_\R$ satisfy strong additivity,
another (potentially useful) definition is possible. This should support an
intuitive view that $\A_\L,\A_\R$ live on lightrays.
\begin{proposition}\label{component2}
Assume that $\A_\R$ is strongly additive.
If $\{J_n\}$ is a sequence of intervals shrinking to a point,
then it holds that $\A_\L(I) = \bigcap_n\A(I\times J_n)$.
\end{proposition}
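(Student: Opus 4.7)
The inclusion $\A_\L(I)\subset \bigcap_n \A(I\times J_n)$ is immediate from Definition~\ref{chiral-component}, which places $\A_\L(I)$ inside $\A(I\times J)$ for every interval $J$. The core of the plan is to handle the reverse inclusion by taking $x\in \bigcap_n \A(I\times J_n)$ and establishing, in sequence: (i) $x$ commutes with every chiral algebra $\A_\R(K)$ for $K\subset S^1$ an interval; (ii) this commutation upgrades to commutation with the whole representation $U(\widetilde{G}_\R)$; (iii) $x$ lies in $\A(I\times J_0)$ for any fixed $J_0$ with $p$ in its interior. Definition~\ref{chiral-component} will then conclude that $x\in \A(I\times J_0)\cap U(\widetilde{G}_\R)'=\A_\L(I)$.

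For step (i) I plan to split cases according to where the limit point $p$ of the sequence $\{J_n\}$ sits relative to $K$. If $p\notin\overline{K}$, then for $n$ sufficiently large $\overline{J_n}\cap\overline{K}=\emptyset$, whence $K\subset J_n^-$ and $\A_\R(K)\subset \A(I^+\times K)\subset \A(I^+\times J_n^-)=\A(I\times J_n)'$ by Haag duality in $\E$; locality then delivers $[x,\A_\R(K)]=0$. If $p\in\partial K$, the plan is to exhaust $K$ by open sub-intervals $K_k$ with $\overline{K_k}$ missing $p$ and to invoke the previous case combined with additivity of $\A_\R$. If $p\in\interior K$, write $K=K_1\cup\{p\}\cup K_2$: each $K_i$ then has $p$ on its boundary so the boundary case gives $[x,\A_\R(K_i)]=0$, and strong additivity of $\A_\R$ furnishes $\A_\R(K)=\A_\R(K_1)\vee \A_\R(K_2)$, yielding $[x,\A_\R(K)]=0$ once more. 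This is the only place in the entire argument where the strong-additivity hypothesis genuinely enters.

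For step (ii), the commutation from (i) puts $x$ in $(\bigvee_K \A_\R(K))'=\A_\R'$. I would then decompose $\H$ into M\"obius-covariant irreducible $\A_\R$-subrepresentations with multiplicities; Schur's lemma gives $\A_\R''=\bigoplus_\rho B(\H_\rho)\otimes\1_{V_\rho}$, with the multiplicity spaces $V_\rho$ carrying a commuting representation of $U(\widetilde{G}_\L)$. Since $U(\widetilde{G}_\R)$ commutes with $U(\widetilde{G}_\L)$ and preserves each irreducible subspace, its action on each component has the form $W_\rho(g)\otimes \1_{V_\rho}\in B(\H_\rho)\otimes \1_{V_\rho}\subset \A_\R''$, so $U(\widetilde{G}_\R)\subset \A_\R''$ and hence $x\in \A_\R'\subset U(\widetilde{G}_\R)'$. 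Step (iii) is immediate: for any $J_0$ with $p\in\interior J_0$, $J_n\subset J_0$ eventually gives $x\in \A(I\times J_n)\subset \A(I\times J_0)$, and combining with (ii) yields $x\in \A_\L(I)$. I expect the main obstacle to lie in step (ii): one has to verify carefully that the $\A_\R$-sector decomposition of $\H$ is preserved by $U(\widetilde{G}_\R)$ and that this representation acts trivially on the multiplicity spaces, an identity which ultimately rests on the product structure $\widetilde{G}_\L\times\widetilde{G}_\R$ of the 2D M\"obius group and on the M\"obius covariance of the $\A_\R$-representations that appear on $\H$.
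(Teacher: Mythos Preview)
Your overall strategy differs from the paper's, and step~(i) is a clean, correct use of strong additivity; step~(iii) is routine. Step~(ii), however, has a real gap. Granting that $U(g)|_{\H_\rho\otimes V_\rho}$ factors as $W_\rho(g)\otimes X_\rho(g)$ (with $W_\rho$ the inner M\"obius implementation in the sector $\rho$), you need $X_\rho(g)\in\CC\1$ to conclude $U(\widetilde{G}_\R)\subset\A_\R''$. Commutation with $U(\widetilde{G}_\L)$ only forces $X_\rho(g)$ into the commutant of $U(\widetilde{G}_\L)|_{V_\rho}$, and nothing in your outline guarantees that this restricted representation is irreducible; the bare product structure $\widetilde{G}_\L\times\widetilde{G}_\R$ does not supply it. The fix is to use the standing hypothesis of this appendix that $\A$ is \emph{conformal}: by (the right-handed version of) Proposition~\ref{chiral-net-contains-virasoro}, representatives $U(\id\times g_\R)$ with $\supp(g_\R)\subset K$ lie in $\bigcap_{I'}\A(I'\times K)\cap U(\widetilde{G}_\L)'=\A_\R(K)$, and since local diffeomorphisms generate $\overline{\diffs1}\supset\widetilde{G}_\R$, the commutation from step~(i) already yields $x\in U(\widetilde{G}_\R)'$ directly---the sector decomposition is unnecessary.

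For comparison, the paper never argues via $[x,\A_\R(K)]=0$. It first proves the adjacent-interval refinement of Proposition~\ref{component1}, namely $\A_\L(I)=\A(I\times J_1)\cap\A(I\times J_2)$ when $J_1,J_2$ arise by deleting an interior point of some $J$: strong additivity of $\A_\R$ plus conformal covariance gives $U(\widetilde{G}_\R)\subset\A(I^+\times J_1^-)\vee\A(I^+\times J_2^-)$, after which the Proposition~\ref{component1} argument runs verbatim. The shrinking-sequence statement is then a squeeze: writing $J_n=(a_n,b_n)$, one sets $J_{1,n}=(a_0,b_n)$ and $J_{2,n}=(a_n,b_0)$, bounds $\bigcap_n\A(I\times J_n)$ above by $\bigcap_n\A(I\times J_{1,n})\cap\A(I\times J_{2,n})=\A(I\times J_1)\cap\A(I\times J_2)$ via Haag duality in $\E$ and additivity, and invokes the adjacent-interval identity. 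Once patched, your route is arguably more direct and pinpoints exactly where strong additivity enters; the paper's route has the advantage of producing the adjacent-interval identity (Rehren's ``generating property'') as an intermediate result of independent interest.
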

\begin{proof}
First we claim that $\A_\L(I) = \A(I\times J_1)\cap\A(I\times J_2)$
if $J_1$ and $J_2$ are obtained from an interval $J$ by removing
an interior point. One sees that the proof of Proposition \ref{component1}
works except the part concerning the diffeomorphisms.
Namely, it holds that $\A_\R(J_1\cup J_2) \subset \A(I^+\times J_1^-)\vee\A(I^+\times J_2^-)$

This time, the union $J_1^-\cup J_2^-$ is of length $2\pi$. 
By the assumed strong additivity of the component $\A_\R$,
this implies that $\A_\R(S^1) \subset \A(I^+\times J_1^-)\vee\A(I^+\times J_2^-)$.
In fact, if $J$ is an interval with length less than $2\pi$
which contains a boundary point of $J_1^-\cup J_2^-$, then $\A_\R(J)$ is
contained in $\A(I^+\times J_1^-)\vee\A(I^+\times J_2^-)$ by strong additivity
(note that the restriction of $\A_\R(I)$ to its vacuum representation is injective
if $I$ is a bounded interval). By the additivity of the chiral component,
$\A(I^+\times J_1^-)\vee\A(I^+\times J_2^-)$ contains all the representatives of
diffeomorphisms of the form $\id \times g$, $g\in\overline{\diffs1}$, in particular
representatives of $\id\times \widetilde{G}_\R$. The rest of the argument
is the same as Proposition \ref{component1}.

Let $\{J_n\}$ be a sequence of intervals shrinking to a point,
where $J_n = (a_n,b_n)$.
Let $\{J_{1,n}\}$ and $\{J_{2,n}\}$ be sequences of intervals
which are obtained by $J_{1,n} = (a_0,b_n)$ and $J_{2,n} = (a_n,b_0)$.
Let us denote $J_1 := \interior(\bigcap_n J_{1,n}) = (a_0,c), J_2 := \interior(\bigcap_n J_{2,n}) =(c,b_0)$,
where $c = \lim_n a_n = \lim_n b_n$ and $\interior(\cdot)$ means the interior.
It is clear that
\[
\A_\L(I) \subset \A(I\times J_n) \subset \A(I\times J_{1,n})\cap \A(I\times J_{2,n}),
\]
but the last expression tends to
\begin{eqnarray*}
\bigcap_n \A(I\times J_{1,n})\cap \A(I\times J_{2,n})
&=& \left(\bigcap_n \A(I\times J_{1,n})\right) \cap \left(\bigcap_n \A(I\times J_{2,n})\right) \\
&=& \A(I\times J_1)\cap \A(I\times J_2),
\end{eqnarray*}
where the last equality follows from the Haag duality in $\E$ and additivity.
We have seen that this is equal to $\A_\L(I)$,
hence the intersection $\bigcap_n \A(I\times J_n)$ is equal to this as well.
\end{proof}
\begin{remark}
Rehren defined the ``generating property'' of the net by
\begin{align*}
&U(\widetilde{G}_\L) \subset \A(I\times J) \vee \A(I'\times J)\\
&U(\widetilde{G}_\R) \subset \A(I\times J) \vee \A(I\times J'), 
\end{align*}
for any $I,J$.
We proved Proposition \ref{component2} by showing the generating
property for $\A$ with the strongly additive conformal components.
It has been shown in \cite{Rehren} that the generating property implies that
$\A_\L(I) = \A(I\times J_1)\cap\A(I\times J_2)$ where $J_1$ and $J_2$ are
obtained by removing an interior point from an interval.
\end{remark}


\begin{thebibliography}{10}

\bibitem{AH}
H.~Araki and R.~Haag.
\newblock Collision cross sections in terms of local observables.
\newblock {\em Comm. Math. Phys.}, 4(2):77--91, 1967.

\bibitem{AZ} H.~Araki and L.~Zsido.
\newblock Extension of the structure theorem of Borchers and its application to half-sided modular inclusions.
\newblock {\em Rev. Math. Phys.} 17:491--543, 2005.

\bibitem{Baumgaertel}
H.~Baumg{\"a}rtel.
\newblock {\em Operatoralgebraic methods in quantum field theory}.
\newblock Akademie Verlag, Berlin, 1995.

\bibitem{BGL}
R.~Brunetti, D.~Guido, and R.~Longo.
\newblock Modular structure and duality in conformal quantum field theory.
\newblock {\em Comm. Math. Phys.}, 156(1):201--219, 1993.

\bibitem{Borchers}
H.-J.~Borchers.
\newblock The CPT-theorem in two-dimensional theories of local observables.
\newblock {\em Commun. Math. Phys.}, 143:315--332, 1992.

\bibitem{Buchholz}
D.~Buchholz.
\newblock Collision theory for waves in two dimensions and a characterization
  of models with trivial {$S$}-matrix.
\newblock {\em Comm. Math. Phys.}, 45(1):1--8, 1975.

\bibitem{Buchholz87}
D.~Buchholz:
\newblock On particles, infraparticles and the problem of asymptotic completeness.
\newblock In: {\em VIIIth International Congress on Mathematical Physics}, Marseille
1986. World Scientific, Singapore, 1987.

\bibitem{BF}
D.~Buchholz and K.~Fredenhagen.
\newblock Dilations and interaction.
\newblock {\em J. Math. Phys.}, 18(5):1107--1111, 1977.

\bibitem{BPS}
D.~Buchholz, M.~Porrmann, and Ulrich Stein.
\newblock Dirac versus {W}igner. {T}owards a universal particle concept in
  local quantum field theory.
\newblock {\em Phys. Lett. B}, 267(3):377--381, 1991.

\bibitem{Dell-Antonio}
G.~F. Dell'Antonio.
\newblock On dilatation invariance and the {W}ilson expansion.
\newblock {\em Nuovo Cimento A}, 12:756--762, 1972.

\bibitem{Dixmier}
J.~Dixmier.
\newblock {\em $C^*$-algebras}.
\newblock North-Holland, Amsterdam, 1982.

\bibitem{DMS}
P.~Di~Francesco, P.~Mathieu, and D.~S{\'e}n{\'e}chal.
\newblock {\em Conformal field theory}.
\newblock Springer-Verlag, New York, 1997.

\bibitem{DT10-1}
W.~Dybalski and Y.~Tanimoto.
\newblock Asymptotic completeness in a class of massless relativistic quantum
  field theories.
\newblock {\em Comm. Math. Phys.}, 305(2):427--440, 2011.

\bibitem{DT10-2}
W.~Dybalski and Y.~Tanimoto.
\newblock Infraparticles with superselected direction of motion in
  two-dimensional conformal field theory.
\newblock {\em Commun. Math. Phys.}, 311(2):457--490, 2012.

\bibitem{DT11-3}
W.~Dybalski and Y.~Tanimoto.
\newblock Asymptotic completeness for infraparticles in two-dimensional conformal field theory.
\newblock arXiv:1112.4102

\bibitem{FG}
F.~Gabbiani and J.~Fr{\"o}hlich.
\newblock Operator algebras and conformal field theory.
\newblock {\em Comm. Math. Phys.}, 155(3):569--640, 1993.

\bibitem{Fuchs}
J.~Fuchs.
\newblock {\em Lectures on conformal field theory and Kac-Moody algebras.}
\newblock Springer-Verlag, Berlin, 1997.

\bibitem{Haag}
R.~Haag.
\newblock {\em Local quantum physics}, second edition.
\newblock Springer-Verlag, Berlin, 1996.

\bibitem{KL02}
Y.~Kawahigashi and R.~Longo.
\newblock Classification of local conformal nets. {C}ase {$c<1$}.
\newblock {\em Ann. of Math. (2)}, 160(2):493--522, 2004.

\bibitem{KL03}
Y.~Kawahigashi and R.~Longo.
\newblock Classification of two-dimensional local conformal nets with {$c<1$}
  and 2-cohomology vanishing for tensor categories.
\newblock {\em Comm. Math. Phys.}, 244(1):63--97, 2004.

\bibitem{KLM}
Y.~Kawahigashi, R.~Longo and M.~M{\"u}ger.
\newblock Multi-interval subfactors and modularity of representations in
  conformal field theory.
\newblock {\em Comm. Math. Phys.}, 219(3):631--669, 2001.

\bibitem{Lechner}
G.~Lechner.
\newblock Construction of quantum field theories with factorizing
  {$S$}-matrices.
\newblock {\em Comm. Math. Phys.}, 277(3):821--860, 2008.

\bibitem{LR}
R.~Longo and K.-H.~Rehren.
\newblock Nets of subfactors.
\newblock {\em Rev. Math. Phys.}, 7(4):567--597, 1995.

\bibitem{Longo08}
R.~Longo.
\newblock Real {H}ilbert subspaces, modular theory, {${\rm SL}(2,{\bf R})$} and
  {CFT}.
\newblock In {\em Von {N}eumann algebas in {S}ibiu: {C}onference
  {P}roceedings}, 33--91. Theta, Bucharest, 2008.

\bibitem{LM}
M.~L{\"u}scher and G.~Mack.
\newblock Global conformal invariance in quantum field theory.
\newblock {\em Comm. Math. Phys.}, 41:203--234, 1975.

\bibitem{MS}
G.~Mack and V.~Schomerus.
\newblock Conformal Field Algebras with Quantum Symmetry
 from the Theory of Superselection Sectors.
\newblock {\em Commun. Math. Phys.} 134:139--196, 1990.

\bibitem{Porrmann1}
M.~Porrmann.
\newblock Particle weights and their disintegration. {I}.
\newblock {\em Comm. Math. Phys.}, 248(2):269--304, 2004.

\bibitem{Rehren}
K.-H.~Rehren.
\newblock Chiral observables and modular invariants.
\newblock {\em Comm. Math. Phys.}, 208(3):689--712, 2000.

\bibitem{Rigotti}
C.~Rigotti.
\newblock Remarks on the Modular Operator and Local Observables.
\newblock {\em Commun. math. Phys.}, 61:267--273, 1978.

\bibitem{Takesaki}
M.~Takesaki.
\newblock {\em Theory of operator algebras. {II}}.
\newblock Springer-Verlag, Berlin, 2003.

\bibitem{Tanimoto11-3}
Y.~Tanimoto.
\newblock Construction of wedge-local nets of observables
through Longo-Witten endomorphisms.
\newblock {\em Commun. Math. Phys.}, 314(2):443--469, 2012.


\bibitem{Wiesbrock} H.-W.~Wiesbrock.
\newblock Half-sided modular inclusions of von-Neumann-algebras.
\newblock {\em Commun. Math. Phys.} 157:83--92, 1993.

\end{thebibliography}
\end{document}